\documentclass[runningheads,draft]{svjour2}

\smartqed  

\usepackage{amsmath,amssymb,gastex,array,latexsym}
\usepackage[latin1]{inputenc}
\usepackage{graphicx}


\newcommand{\era}[2]{\overset{#1}{\underset{#2}{\longrightarrow}}}
\newcommand{\erb}[1]{\overset{#1}{\longrightarrow}}

\newcommand{\eRa}[2]{\overset{#1}{\underset{#2}{\Longrightarrow}}}

\newcommand{\diese}{\textrm{\small \#}}

\newcommand{\eps}{\varepsilon} \newcommand{\sep}{\;|\;}

\newcommand{\intro}[1]{\emph{#1}}




\journalname{Acta Informatica}

\begin{document}

\title{Linearly Bounded Infinite Graphs\thanks{A preliminary version
    of this article appeared in MFCS 2005.}}

\author{Arnaud Carayol \and Antoine Meyer}

\institute{Arnaud Carayol \at {\sc Irisa}, Campus de Beaulieu, 35042
  Rennes Cedex, France  \\
  Tel.: +33299847278,  Fax: +33299847171\\
  \email{Arnaud.Carayol@irisa.fr} 
  \and Antoine Meyer \at {\sc Liafa}, Universit\'e Denis Diderot,
  Case 7014, 2 place Jussieu, 75251 Paris Cedex 05, France\\
  \email{Antoine.Meyer@liafa.jussieu.fr} }

\date{Received: date / Revised: date}

\maketitle

\begin{abstract}
  Linearly bounded Turing machines have been mainly studied as
  acceptors for context-sensitive languages. We define a natural class
  of infinite automata representing their observable computational
  behavior, called linearly bounded graphs.  These automata naturally
  accept the same languages as the linearly bounded machines defining
  them. We present some of their structural properties as well as
  alternative characterizations in terms of rewriting systems and
  context-sensitive transductions. Finally, we compare these graphs to
  rational graphs, which are another class of automata accepting the
  context-sensitive languages, and prove that in the bounded-degree
  case, rational graphs are a strict sub-class of linearly bounded
  graphs.
\end{abstract}

\section{Introduction}
\label{sect:intro}

One of the cornerstones of formal language theory is the hierarchy of
languages introduced by Chomsky in \cite{Chomsky59}. It rests on the
definition of four increasingly restricted classes of grammars, which
respectively generate the \intro{recursively enumerable},
\intro{context-sensitive}, \intro{context-free} and \intro{rational}
languages. All these classes were extensively studied, and have been
given several alternative characterizations using different kinds of
formalisms (or \emph{acceptors}). For instance, pushdown systems
characterize context-free languages, and linearly bounded Turing
machines (LBMs) characterize context-sensitive languages. More
recently, several authors have related these four classes of languages
to classes of infinite graphs (for a survey, see for instance
\cite{Thomas01}). Given a fixed initial vertex and a set of final
vertices, one can associate a language to a graph by considering the
set of all words labeling a path between the initial vertex and one of
the final vertices. In \cite{Caucal02}, a summary of four classes of
graphs accepting the four classes of the Chomsky hierarchy is
presented. They are the Turing graphs \cite{Caucal03tm}, rational
graphs \cite{Morvan00,Morvan01a,Rispal02}, prefix-recognizable graphs
\cite{Caucal96,Caucal03pr}, and finite graphs.

\smallskip

Several approaches exist to define classes of infinite graphs, among
which we will cite three. The first one is to consider the finite
acceptor of a formal language, and to build a graph representing the
structure of its computations: vertices represent configurations, and
each edge reflects the observable effect of an input on the
configuration.  One speaks of the \emph{transition graph} of the
acceptor. An interesting consequence is that the language of the graph
can be deduced from the language of the acceptor it was built from. A
second method proposed in \cite{Caucal02} is to consider the
Cayley-type graphs of some classes of word rewriting systems. Each
vertex is a normal form for a given rewriting system, and an edge
between two vertices represents the addition of a letter and
re-normalization by the rewriting system. Finally, a third possibility
is to directly define the edge relations in a graph using automata or
other formalisms.  One may speak of derivation, transduction or
computation graphs. In this approach, a path no longer represents a
run of an acceptor, but rather a composition of binary relations.

Both prefix-recognizable graphs and Turing graphs have alternative
definitions along all three approaches. Prefix-recognizable graphs are
defined as the graphs of recognizable prefix relations. In
\cite{Stirling00}, Stirling characterized them as the transition graphs of
pushdown systems. It was also proved that they coincide with the
Cayley-type graphs of prefix rewriting systems. As for Turing graphs,
Caucal showed that they can be seen indifferently as the transition
and computation graphs of Turing machines \cite{Caucal03tm}. They are
also the Cayley-type graphs of unrestricted rewriting systems.
Rational graphs, however, are only defined as transduction graphs
(using rational transducers) and as the Cayley-type graphs of
left-overlapping rewriting systems, and lack a characterization as
transition graphs.  In this paper, we are interested in defining
a suitable notion of transition graphs of linearly bounded Turing
machines, and to determine some of their structural properties as well
as to compare them with rational graphs.

As in \cite{Caucal03tm} for Turing machines, we first define a labeled
version of LBMs, called LLBMs. Their transition rules are labeled
either by a symbol from the input alphabet or by a special symbol
denoting an internal, unobservable transition. Following an idea from
\cite{Stirling00}, we consider that in every configuration of a LLBM,
either internal actions or inputs are allowed, but not both at a time.
This way, we can distinguish between internal and external
configurations. The transition graph of a LLBM is the graph whose
vertices are external configurations, and whose edges represent an
input followed by any finite number of silent transitions. This
definition is purely structural and associates a unique graph to any
given LLBM. For convenience, we call such graphs \emph{linearly
  bounded graphs}. A similar work was proposed in
\cite{Knapik99,Payet00}, where the class of configuration graphs of
LBMs up to weak bisimulation is studied. However, it provides no
formal definition associating LBMs to a class of \emph{real-time}
graphs (without edges labeled by silent transitions) representing
their observable computations.

To further illustrate the suitability of our notion, we provide two
alternative definitions of linearly bounded graphs. First, we prove
that they are isomorphic to the Cayley-type graphs
of length-decreasing rewriting systems. The second alternative
definition directly represents the edge relations of a linearly
bounded graph as a certain kind of context-sensitive transductions.
This allows us to straightforwardly deduce structural properties of
linearly bounded graphs, like their closure under synchronized product
(which was already known from \cite{Knapik99}) and under restriction
to a context-sensitive set of vertices. To conclude this study, we
show that linearly bounded graphs and rational graphs form incomparable
classes, even in the finite degree case. However, bounded degree
rational graphs are a strict sub-class of linearly bounded graphs.

\section{Preliminary Definitions}
\label{sect:defs}

A labeled, directed and simple \intro{graph} is a set $G \subseteq V
\times \Sigma \times V$ where $\Sigma$ is a finite set of labels and
$V$ a countable set of \intro{vertices}. An element $(s,a,t)$ of $G$
is an \intro{edge} of \intro{source} $s$, \intro{target} $t$ and
\intro{label} $a$, and is written $s \era{a}{G} t$ or simply $s
\erb{a} t$ if $G$ is understood. The set of all sources and targets of
a graph is its \emph{support} $V_G$. Two graph $G,H \subseteq V
\times \Sigma \times V$ are isomorphic if there exists a bijection
$\rho$ from $V_H$ to $V_G$ such that for all $x,y \in V_H$, $x
\era{a}{H} y$ iff $\rho(x) \era{a}{G} \rho(y)$.

A sequence of edges $s_1 \erb{a_1} t_1, \ldots, s_k \erb{a_k} t_k$
with $\forall i \in [2,k],\ s_i = t_{i-1}$ is a \intro{path}. It is
written $s_1 \erb{u} t_k$, where $u = a_1 \ldots a_k$ is the
corresponding \intro{path label}.  A graph is \intro{deterministic} if
it contains no pair of edges with the same source and label. One can
relate a graph to a language by considering its path language, defined
as the set of all words labeling a path between two given sets of
vertices.

\begin{definition}
  The (path) language of a graph $G$ between two sets of vertices $I$
  and $F$ is the set
  \[
  L(G,I,F)\ =\ \{\ w\ |\ s \era{w}{G} t,\ s \in I,\ t \in F \}.
  \]
\end{definition}

\subsection{Linearly bounded Turing machines}
\label{ssec:lbm}

We recall the definitions of context-sensitive languages and linearly
bounded Turing machines. A context-sensitive language is a set of
words generated by a grammar whose production rules are of the form
$\alpha \rightarrow \beta$ with $|\beta| \geq |\alpha|$. Such grammars
are called \emph{context-sensitive} or sometimes \emph{growing}. A
more operational definition of context-sensitive languages is as the
class of languages accepted by \emph{linearly bounded Turing
  machines} (LBMs).

\begin{definition}
  \label{def:lbm}
  A linearly bounded Turing machine is a tuple $M = ( \Gamma, \Sigma,
  [, ],$ $Q,$ $q_0,$ $F,$ $\delta)$, where
  \begin{itemize}
  \item $\Gamma$ is a finite set of \intro{tape
      symbols},
  \item $\Sigma \subseteq \Gamma$ is the \intro{input alphabet} which
    does not contain the symbol $\varepsilon$,
  \item $[$ and $] \notin \Gamma$ are the \intro{left} and
    \intro{right end-marker},
  \item $Q$ is a finite set (disjoint from $\Gamma$) of \intro{control
      states},
  \item $q_0 \in Q$ is the unique \intro{initial state},
  \item $F \subseteq Q$ is a set of \intro{final states},
  \item $\delta$ is a finite set of \intro{transition rules} of one of
    the forms:
    \begin{xalignat*}{3}
      pA \erb{} & qB\pm & p[ \erb{} & q[+ & p] \erb{}& q]-
    \end{xalignat*}
    with $p,q \in Q$, $A,B \in \Gamma$ and $\pm \in \{ +,- \}$.
  \end{itemize}
\end{definition}

As usual in the syntax of Turing machines, symbols $+$ and $-$ in
transition rules respectively denote a move of the read head to the
right and to the left.  The set of configurations $C_M$ of $M$ is the
set of words $uqv$ such that $q \in Q$, $v \not= \eps$ and $uv \in
[\Gamma^*]$. The transition relation $\era{}{M}$ is a subset of $C_M
\times C_M$ defined as:
\begin{align*}
  \era{}{M} \ =\ & \{\ (upAv,uBqv)\ |\ pA \erb{} qB+ \in \delta\ \}
  \\
  \cup\ & \{\ (uCpAv, uqCBv)\ |\ pA \erb{} qB- \in \delta\}
\end{align*}
We will simply write $\erb{}$ when $M$ is understood. For any input
word $w \in \Sigma^*$, the unique initial configuration is $[q_0w]$
and a final configuration $c_f$ is a configuration containing a
terminal control state. A word $w$ is accepted by $M$ if $[q_0w]
\erb{} c_f$ where $c_f$ is a final configuration. Quite naturally, $M$
is \intro{deterministic} if, from any configuration, at most one rule
can be applied. Formally, for all configurations $c$, $c_1$ and $c_2$
such that $c \erb{} c_1$ and $c \erb{} c_2$, then $c_1=c_2$.

Other definitions of linearly bounded machines do not use the border
symbols to constrain the head inside a portion of the tape whose size
equals the size of the input. Instead, they externally require that
the tape size used at any point during any computation be at most $k$
times the size of the input, where $k$ is a fixed constant. It is a
well-known fact that $k$ can be considered equal to $1$ without loss
of generality, and that these definitions are equivalent to the one
given above. An interesting open problem raised by Kuroda
\cite{Kuroda64} concerns deterministic context-sensitive languages,
which are the languages accepted by deterministic LBMs. It is not
known whether they coincide with non-deterministic context-sensitive
languages, as is the case for recursively enumerable or rational
languages.

Note that, contrary to unbounded Turing machines, it is sufficient to
only consider linearly bounded machines which always terminate, also
called \intro{terminating} machines. This is expressed by the
following proposition:

\begin{proposition}
  \label{prop:termlbm}
  For every linearly bounded Turing machine, there exists a
  terminating linearly bounded Turing machine recognizing the same
  language.
\end{proposition}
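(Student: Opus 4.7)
The plan is to transform an arbitrary LBM $M$ into an equivalent LBM $M'$ that halts on every input by equipping it with a step-counter whose size fits in linear space.

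The key observation is that on an input $w$ of length $n$, the tape has exactly $n+2$ cells (including the end-markers), so the number of distinct configurations of $M$ is bounded by $N(n) = |Q| \cdot (n+2) \cdot |\Gamma|^{n+2}$. Since $\log N(n) = O(n)$, this value can be represented by a word of length linear in $n$, for instance in base $|\Gamma|$ using roughly $n+2$ digits together with a bounded amount of extra information. If $M$ runs for more than $N(n)$ steps on $w$, then by the pigeonhole principle some configuration repeats, so $M$ loops; conversely, any accepting computation is necessarily of length at most $N(n)$. Hence it is sound to halt and reject as soon as the step count exceeds $N(n)$.

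First I would define $M'$ to work on the same tape with an enlarged alphabet $\Gamma' = \Gamma \times \Gamma$, using one track to simulate $M$ and the other to store the counter. At initialization, $M'$ copies the input onto the simulation track and initializes the counter track to $0\cdots 0$ of length $n+2$. Then it enters a main loop that, for each step of $M$: (i)~simulates one transition of $M$ on the first track using the transitions of $M$ lifted to $\Gamma'$ (so that the counter track is preserved), (ii)~increments the counter on the second track by one in base $|\Gamma|$, using a standard linearly bounded carry-propagation subroutine, and (iii)~checks whether the counter has reached its maximum value; if so, $M'$ enters a designated non-final halting state. Acceptance and rejection in $M'$ are inherited from $M$ on the simulation track, provided the counter has not overflowed.

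Correctness is straightforward: $M'$ accepts $w$ iff $M$ has an accepting computation on $w$ of length at most $N(n)$, which by the pigeonhole argument is equivalent to $M$ accepting $w$. Termination is immediate since every branch of the computation either reaches a halting configuration of $M$ or triggers the counter overflow after at most $N(n)$ steps. The main technical point to check is that the subroutines (initialization, increment, overflow test, and the simulation step itself) can all be performed within the $n+2$ cells available, which requires only routine bookkeeping since each operates locally on a constant number of adjacent cells on a fixed-width tape with end-markers; the only mildly delicate part is implementing the increment-and-compare on the counter track in place, but this is standard. Non-determinism of the original $M$ poses no additional difficulty, as $M'$ inherits the branching of $M$ on its simulation track while deterministically maintaining the counter.
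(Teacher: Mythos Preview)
Your proposal is correct and follows essentially the same approach as the paper: bound the number of configurations on an input of length $n$ by an exponential in $n$, encode a step-counter in an extra track of the tape alphabet, and cut off any run that exceeds the bound. The paper's proof is terser (it simply writes the bound as $k^n$ and speaks of an $n$-digit counter in base $k$), while you spell out the two-track implementation and the subroutines, but the underlying argument is identical. One small point: your phrase ``roughly $n+2$ digits together with a bounded amount of extra information'' is slightly off, since the factor $|Q|\cdot(n+2)$ contributes $O(\log n)$ extra digits rather than $O(1)$; the clean fix, which the paper implicitly uses, is to take a sufficiently large base $k$ so that an $(n+2)$-digit counter already dominates $N(n)$.
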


\begin{proof} 
  It suffices to show that, for every linearly bounded machine $M$
  whose configuration graph contains a cycle, there is an equivalent
  terminating machine $M'$, i.e. a machine which always terminates and
  accepts the same language as $M$. The total number of distinct
  configurations of $M$ during a run on a word of size $n$ is bounded
  by $k^n$, where $k$ is a constant depending on the size of the
  control state set and work alphabet of $M$. It is easy to see that a
  word $w$ accepted by $M$ must be accepted by at least one run of
  size less than $k^{|w|}$. Indeed, if the smallest run accepting $w$
  was longer than this bound, it would necessarily contain two
  occurrences of the exact same configuration, i.e. a cycle. By
  removing this cycle, one would obtain a shorter accepting run. Let
  $M'$ be the machine which simulates $M$ while incrementing a
  $n$-digit counter in base $k$ encoded in the alphabet and stops the
  run if the counter overflows.  By construction, $M'$ accepts the
  same language as $M$ and is terminating. \qed
\end{proof}

Similar arguments are used in \cite{Kuroda64} to show that the set of
words on which a linearly bounded machine has an infinite run is
context-sensitive. Another property of context-sensitive languages is
that they are closed under union and complement.

\begin{theorem}[\cite{Immerman88,Szelepcsenyi88}]
  \label{thm:csb}
  The context-sensitive languages over a finite alphabet $\Gamma$ form
  an effective Boolean algebra.
\end{theorem}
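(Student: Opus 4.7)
The plan is to establish the three Boolean closures with effective constructions. Union is essentially syntactic: given LBMs $M_1$ and $M_2$ for $L_1$ and $L_2$ over $\Gamma$, one builds an LBM that at the first step nondeterministically picks a branch and simulates the corresponding machine on the input, the computation remaining within linear space. Intersection will follow from union and complement via De Morgan, so the real content is closure under complement.

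For complement I would invoke the Immerman--Szelepcs\'enyi inductive counting technique. By Proposition~\ref{prop:termlbm} the LBM $M$ to be complemented can be assumed terminating, and on an input $w$ of length $n$ it has at most $K^n$ distinct configurations for some constant $K$ depending only on $M$, so any such count fits in $O(n)$ bits on the tape. Writing $R_i(w)$ for the set of configurations reachable from $[q_0 w]$ in at most $i$ steps and $r_i = |R_i(w)|$, the key subroutine computes $r_{i+1}$ from $r_i$ as follows: enumerate every candidate configuration $c$ in lexicographic order, and test whether $c \in R_{i+1}(w)$ by sweeping through all candidate configurations $c'$ a second time, nondeterministically guessing and replaying for each $c'$ a path of length $\leq i$ from $[q_0 w]$ to $c'$, counting the number of successful certifications and insisting that it equals $r_i$, and checking along the way whether some successfully certified $c'$ either equals $c$ or admits an $M$-transition to $c$. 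Any nondeterministic branch failing to reach the count $r_i$ is rejected; this is what forces soundness of the count.

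Starting from $r_0 = 1$, the complementing machine $M'$ iterates the subroutine until $r_{i+1} = r_i$, at which point $R_i(w)$ is the full reachable set, and then accepts $w$ iff a final pass of the same certification procedure never encounters a successfully certified accepting configuration of $M$. The main obstacle is fitting everything within linear space: $M'$ must simultaneously maintain a constant number of $O(n)$-bit counters, a constant number of configurations of $M$ of length $n + O(1)$, and simulate one step-by-step computation of $M$. I would handle this by enlarging $\Gamma$ into a tape alphabet offering a fixed number of tracks between the end-markers $[$ and $]$, a routine but bookkeeping-heavy step. All constructions being computable, the resulting Boolean algebra is effective.
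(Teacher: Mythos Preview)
Your sketch is a faithful outline of the Immerman--Szelepcs\'enyi inductive counting argument, and it is correct as far as it goes. However, you should be aware that the paper does \emph{not} supply a proof of this theorem at all: it is stated with citations to \cite{Immerman88,Szelepcsenyi88} and used as a black box throughout. So there is nothing in the paper to compare your argument against; what you have written is essentially the proof one finds in the cited references.

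One minor remark: your appeal to Proposition~\ref{prop:termlbm} to assume $M$ terminating is harmless but unnecessary. The inductive counting works for an arbitrary LBM, since the sequence $(r_i)$ is nondecreasing and bounded by $K^n$, hence stabilises after at most $K^n$ iterations regardless of whether $M$ has infinite runs.
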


The notion of space-bounded Turing machine can be extended to any
bound $f: \mathbb{N} \mapsto \mathbb{N}$ such that for all $n \geq 0$,
$f(n) \geq \log(n)$. The set of all languages accepted by a Turing
machine working in space $f(n)$ on an entry of size $n$ is written
$\mathrm{NSPACE}[f(n)]$. The following theorem states that the space
hierarchy is strict.

\begin{theorem}[\cite{Hopcroft79,Immerman88}]
  \label{thm:hierar}
  For all space-constructible $f$ and $g$ in $\mathbb{N} \mapsto
  \mathbb{N}$ such that $\lim_{n \rightarrow +\infty}
  \dfrac{f(n)}{g(n)}=0$, we have $\mathrm{NSPACE}[f(n)] \subsetneq
  \mathrm{NSPACE}[g(n)]$.
\end{theorem}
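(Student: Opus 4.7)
The plan is to prove this by diagonalization, in the classical style of space-hierarchy arguments. First I would fix an effective enumeration $M_1, M_2, \ldots$ of nondeterministic Turing machines whose working space is bounded by $f(n)$, arranged so that each machine appears infinitely often (for instance by allowing arbitrary padding of its encoding). Then I would construct a single NTM $D$ operating as follows on input $w$: using the space-constructibility of $g$, it first marks off exactly $g(|w|)$ cells; it extracts an index $i$ from $w$; it then simulates $M_i$ on $w$ inside that space budget; finally it outputs the opposite answer of $M_i(w)$. The language $L(D)$ will then lie in $\mathrm{NSPACE}[g(n)]$ by construction and differ from $L(M_i)$ on the witness input $w$.

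Two ingredients make the simulation fit into $g(n)$ space. The hypothesis $f(n)/g(n) \to 0$ ensures that for all but finitely many $n$ we have $f(n) \leq g(n) - c_i$, where $c_i$ is the constant overhead needed to store the description of $M_i$ and the simulator's bookkeeping, so the configuration of $M_i$ fits on $D$'s tape. To avoid spurious non-termination, I would equip $D$ with an $O(f(n))$-bit counter bounding the number of simulated steps by the total number $k^{f(n)}$ of distinct configurations of $M_i$ on $w$, as in Proposition~\ref{prop:termlbm}; any run exceeding this bound is rejected.

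The delicate step is \emph{complementing} the result of the simulation while staying within nondeterministic space $g(n)$. This is precisely the place where Theorem~\ref{thm:csb} (the Immerman–Szelepcs\'enyi closure under complement) is indispensable: it supplies an NTM accepting the complement of $L(M_i)$ using space $O(f(n))$, which $D$ can then invoke in place of the naive simulation. Without this result, diagonalization would only rule out deterministic $f$-space machines, and the strict separation in the nondeterministic hierarchy would fail.

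Finally, I would verify strictness. If some $N \in \mathrm{NSPACE}[f(n)]$ recognized $L(D)$, then $N$ occurs as $M_i$ in the enumeration, and by the padding trick we can pick an index $i$ so large that $g(|w|)$ dominates both $f(|w|)$ and the simulator's overhead on every input $w$ of length $\geq i$. On such a $w$, $D$ produces the complement of $N(w)$ by construction, contradicting $L(D) = L(N)$. Hence $L(D) \in \mathrm{NSPACE}[g(n)] \setminus \mathrm{NSPACE}[f(n)]$, as required. The main obstacle, as noted, is the complementation step, which is handled by invoking Theorem~\ref{thm:csb}; the remainder is standard resource-bounded diagonalization.
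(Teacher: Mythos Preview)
The paper does not prove Theorem~\ref{thm:hierar} at all: it is quoted as a classical result with citations to \cite{Hopcroft79,Immerman88} and used as a black box later on. There is therefore no ``paper's own proof'' to compare against; your sketch is the standard textbook argument, and in outline it is correct.

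Two small points are worth tightening. First, one cannot effectively enumerate exactly the machines ``whose working space is bounded by $f(n)$'': that property is undecidable. The usual fix is to enumerate \emph{all} NTMs and have $D$ enforce the space bound externally during simulation (abort and reject if $M_i$ tries to use more than $f(|w|)$ cells); your counter already supplies the time cut-off, and the same idea supplies the space cut-off. Second, you invoke Theorem~\ref{thm:csb}, but as stated in this paper that theorem is only the special case $\mathrm{NSPACE}[n]=\mathrm{co\text{-}NSPACE}[n]$. For the diagonalization you need the general Immerman--Szelepcs\'enyi result $\mathrm{NSPACE}[s(n)]=\mathrm{co\text{-}NSPACE}[s(n)]$ for every $s(n)\geq\log n$, applied at $s=f$; the proof is the same, but the reference should be to the full theorem rather than to the context-sensitive instance quoted here.
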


In particular, the class $\mathrm{NSPACE}[2^n]$ of languages
recognizable in exponential space strictly contains the class of
context-sensitive languages (or $\mathrm{NSPACE}[n]$).

\subsection{Rational graphs}

Consider the product monoid $\Sigma^* \times \Sigma^*$, whose elements
are pairs of words $(u,v)$ in $\Sigma^*$, and whose composition law is
defined by $(u_1,v_1) \cdot (u_2,v_2) = (u_1u_2, v_1v_2)$. A finite
transducer is an automaton over $\Sigma^* \times \Sigma^*$ with labels
in $(\Sigma \cup \{\eps\}) \times (\Sigma \cup \{\eps\})$. Transducers
accept the rational subsets of $\Sigma^* \times \Sigma^*$, which are
seen as binary relations on words and called rational transductions.
We do not distinguish a transducer from the relation it accepts and
write $(w,w') \in T$ if the pair $(w,w')$ is accepted by $T$.  Graphs
whose vertices are words and whose edge relations are defined by
transducers (one per letter in the label alphabet) are called rational
graphs.

\begin{definition}[\cite{Morvan00}]
  A rational graph labeled by $\Sigma$ with vertices in $\Gamma^*$ is
  given by a class of transducers $(T_a)_{a \in \Sigma}$ over
  $\Gamma$. For all $a \in \Sigma$, $(u,a,v) \in G$ if and only if
  $(u,v) \in T_a$.
\end{definition}

For $w \in \Sigma^+$ and $a \in \Sigma$, we recursively define $T_{wa}
= T_w \circ T_a$, where $\circ$ denotes the standard relational
composition, and we write $u \erb{w} v$ if and only if $(u,v) \in
T_w$. In general, there is no bound on the size difference between
input and output in a transducer (and hence between the lengths of two
adjacent vertices in a rational graph). Interesting sub-classes are
obtained by enforcing some form of synchronization. The most
well-known was defined by Elgot and Mezei \cite{Elgot65} (see also
\cite{FrougnyS93}) as follows. A transducer over $\Sigma$ with initial
state $q_0$ is (left-)synchronized if for every path $q_0
\erb{x_1/y_1} q_1 \ldots q_{n-1}$ $\erb{x_n/y_n} q_n$, there exists $k
\in [0,n]$ such that for all $i \in [1,k]$, $x_i$ and $y_i$ belong to
$\Sigma$ and either $x_{k+1} \ldots x_n = \eps$ and $y_{k+1} \ldots
y_n \in \Sigma^*$ or $y_{k+1} \ldots y_n = \eps$ and $x_{k+1} \ldots
x_n \in \Sigma^*$. A rational graph defined by synchronized
transducers will simply be called a synchronized (rational) graph.

Rational graphs form a class of infinite acceptors for
context-sensitive languages.  For a discussion on the expressive power
of the sub-classes of rational graph seen as language acceptors see
\cite{Carayol05}.

\begin{theorem}[\cite{Morvan01a}]
\label{thm:traces}
  The languages accepted by the rational graphs between a rational set
  of initial vertices and a rational set of final vertices are the
  context-sensitive languages.
\end{theorem}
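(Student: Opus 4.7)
The plan is to establish the two inclusions separately.

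For the easy inclusion, that every context-sensitive language $L$ is the path language of some rational graph with rational endpoints, I would start by fixing a terminating LBM $M$ accepting $L$ via Proposition~\ref{prop:termlbm}. The rational graph $R_M$ would have vertices of the form $(c,u)$ encoding a configuration $c$ of $M$ together with the still-to-be-consumed input suffix $u$. For each $a \in \Sigma$, I would define the transducer $T_a$ to relate $(c, a u)$ to $(c', u)$ whenever $c'$ is reached from $c$ by an appropriate block of $M$-steps. Since a single step $\vdash_M$ is a length-preserving local rewriting, it is definable by a letter-to-letter rational transducer, and rational transductions are closed under composition. Taking $I = \{([q_0 w], w) \mid w \in \Sigma^*\}$ and $F = \{(c_f, \varepsilon) \mid c_f \text{ accepting}\}$, both rational, should give $L(R_M, I, F) = L$; the main care is in orchestrating $M$'s computation so that exactly one input letter is consumed per edge and no spurious paths arise.

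For the converse inclusion, given a rational graph $G$ with transducers $(T_a)_{a \in \Sigma}$ and rational endpoints $I, F$, I would build a nondeterministic LBM $N$ accepting $L(G, I, F)$. On input $w = a_1 \cdots a_n$, $N$ needs to verify existence of a path $v_0 \erb{a_1} \cdots \erb{a_n} v_n$ with $v_0 \in I$ and $v_n \in F$. The difficulty is that the intermediate $v_i$ may be far longer than $n$ (a length-doubling $T_a$ already forces $|v_i|$ exponential in $i$), so $N$ cannot afford to store them explicitly. The plan is instead to maintain, on a tape of size $O(n)$, a bounded-size record of the current ``state profile'' of the $n$ transducers $T_{a_i}$ running in parallel on the implicit sequence of vertices, together with finite-state acceptor information for $I$ and $F$.

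The hard part will be this second inclusion, specifically the design of the parallel simulation: the transducers $T_{a_i}$ may consume and produce letters of their respective input/output vertices at different rates, so streaming all $n{+}1$ vertices in lockstep is not immediately possible. The key technical lemma I would need is a bounded-delay or pumping-type property for the composed transducer $T_{a_1} \circ \cdots \circ T_{a_n}$: if an accepting path exists, one can be witnessed by a computation whose local desynchronization between adjacent transducers is, at each point along the input, bounded by a constant depending only on $G$. With such an invariant in place, $N$ carries only a constant amount of information per position of $w$, uses $O(n)$ total space, and shows $L(G, I, F) \in \mathrm{NSPACE}[n] = \mathrm{CSL}$.
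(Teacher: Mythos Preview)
First, the paper does not prove this theorem: it is quoted from \cite{Morvan01a} and immediately followed by a corollary, so there is no in-paper argument to compare against. What follows assesses your plan against the known proof.

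Your hard inclusion is on target. Maintaining one control state per $T_{a_i}$ on a tape of length $O(n)$ and streaming the implicit vertices through the $n$ transducers in parallel, with a pumping lemma bounding the local lag between adjacent transducers by a constant depending only on $G$, is precisely how the inclusion $L(G,I,F)\in\mathrm{NSPACE}[n]$ is obtained in \cite{Morvan01a}. You have correctly located both the difficulty and the key lemma.

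Your easy inclusion, however, has a genuine gap. For the path in your graph to have length $|w|$, each $T_a$-edge must absorb an \emph{unbounded} number of $M$-steps, so the relation you call $T_a$ is (a restriction of) the reflexive-transitive closure of $\vdash_M$. Closure of rational transductions under composition gives you every \emph{fixed} power $\vdash_M^{\,k}$, not their union; the transitive closure of a length-preserving rational relation is in general only context-sensitive, not rational --- indeed that is exactly the statement that LBM reachability is no simpler than context-sensitive. Hence your $T_a$ need not be definable by a transducer, and the construction collapses. (A secondary issue: with the natural encoding of a pair $(c,u)$ as $c\#u$, your initial set $I=\{\,[q_0 w]\#w\mid w\in\Sigma^*\,\}$ is a copy language, which is not rational either.) The construction that actually works is orthogonal to yours: vertices are not configurations but \emph{columns of the space--time diagram} of $M$ --- the full history of a single tape cell over the run --- and $T_{a_j}$ checks local consistency between the histories of adjacent cells $j$ and $j{+}1$; because an LBM transition touches only neighboring cells, this is a letter-to-letter condition and hence rational.
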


This result can be slightly strengthened in the case of a single
initial and final vertex.

\begin{corollary}
  \label{cor:traces}
  For any rational graph $G$ labeled by $\Sigma$, pair $(i,f)$ of
  vertices of $G$, and symbol $\sharp \not\in \Sigma$, the language
  $L_G = \{ i\sharp w \sharp f \; | \; w \in L(G,\{i\},\{f\}) \}$ is
  context-sensitive.
\end{corollary}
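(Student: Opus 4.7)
The plan is to derive this directly from Theorem~\ref{thm:traces}. Since the singletons $\{i\}$ and $\{f\}$ are rational subsets of $\Gamma^*$, that theorem yields at once that $L(G,\{i\},\{f\})$ is context-sensitive over $\Sigma$. It therefore suffices to observe that prepending the fixed word $i\sharp$ and appending the fixed word $\sharp f$ to every element of a context-sensitive language still gives a context-sensitive language, this time over the alphabet $\Gamma \cup \Sigma \cup \{\sharp\}$.

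To verify this I would exhibit an LBM $M'$ recognizing $L_G$ from an LBM $M$ recognizing $L(G,\{i\},\{f\})$. On an input $x$ of length $n$, $M'$ first uses its finite control to scan the tape and check that $x$ has the form $i\sharp w \sharp f$ with $w \in \Sigma^*$, rejecting otherwise. It then simulates $M$ on the middle factor $w$, treating the positions occupied by $i\sharp$ and $\sharp f$ as (virtual) extended end-markers so that the simulated head never leaves the region spanned by $w$. Since $n = |i| + |f| + |w| + 2 \geq |w|$, the tape space available to $M'$ dominates that of $M$, and the whole computation remains linearly bounded on $x$.

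Equivalently, from a context-sensitive grammar with axiom $S$ generating $L(G,\{i\},\{f\})$, adding a fresh axiom $S'$ together with the single non-contracting production $S' \to i\sharp\, S\, \sharp f$ already yields a context-sensitive grammar for $L_G$. There is no serious obstacle in this argument: the corollary is essentially the closure of context-sensitive languages under concatenation with a singleton language, applied on both sides of the language produced by Theorem~\ref{thm:traces}, the only small point of care being to remark that the constant additive overhead $|i|+|f|+2$ in the input length is harmless for the linear space bound.
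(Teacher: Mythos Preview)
Your argument is correct for the corollary as literally stated, and it is more elementary than the paper's. Since $\{i\}$ and $\{f\}$ are rational, Theorem~\ref{thm:traces} indeed gives context-sensitivity of $L(G,\{i\},\{f\})$, and left/right concatenation with a fixed word is a trivial closure property of $\mathrm{NSPACE}[n]$ (your LBM sketch and the grammar variant are both fine; the additive constant $|i|+|f|+2$ is harmless).

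The paper, however, takes a genuinely different route. It augments $G$ into a rational graph $H$ over an enlarged alphabet $\Sigma \cup \bar{\Gamma} \cup \tilde{\Gamma}$ by adding, for each $x \in \Gamma$, transducers $T_{\bar{x}} = \{(u,ux)\}$ and $T_{\tilde{x}} = \{(xu,u)\}$ that respectively build up an initial vertex and tear down a final vertex letter by letter. Applying Theorem~\ref{thm:traces} to $H$ between the single vertex $\eps$ and itself, and intersecting with $\bar{\Gamma}^*\Sigma^*\tilde{\Gamma}^*$, yields the context-sensitive language $\{\bar{i}\,w\,\tilde{f} \mid i,f \in \Gamma^*,\ w \in L(G,\{i\},\{f\})\}$. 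The key point is that this establishes the \emph{uniform} version of the statement, where $i$ and $f$ range over all vertices simultaneously, from which the fixed-$(i,f)$ case follows by a further rational intersection. This stronger form is exactly what is invoked later in the proof of Theorem~\ref{thm:bd-rat-lbg}, where one needs context-sensitivity of $\{i\sharp w\sharp f \mid w \in L(\bar{G},\{i\},\{f\})\}$ with $i,f$ free. Your approach does not extend to that: an infinite union of context-sensitive languages, one per pair $(i,f)$, need not be context-sensitive, so the closure-under-singleton-concatenation argument cannot be iterated over all vertices.
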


\begin{proof}
  Let $G$ be a rational graph labeled by $\Sigma$ with vertices in
  $\Gamma^*$ and defined by a family of transducers $(T_a)_{a \in
    \Sigma}$.  Let $\bar{\Gamma}$ and $\tilde{\Gamma}$ be two finite
  alphabets disjoint from but in bijection with $\Gamma$. For any $x
  \in \Gamma$, we write $\bar{x}$ (resp. $\tilde{x}$) the
  corresponding symbol in $\bar{\Gamma}$ (resp.  $\tilde{\Gamma}$). We
  consider the rational graph $H$ labeled by $\Xi = \Sigma \cup
  \bar{\Gamma} \cup \tilde{\Gamma}$ defined by the family of
  transducers $(T_x)_{x \in \Xi}$ where for all $x \in \Gamma$,
  $T_{\bar{x}} = \{ (u,ux) \; | \; u \in \Gamma^* \}$ and
  $T_{\tilde{x}} = \{ (xu,u) \; | \; u \in \Gamma^* \}$.

  By Thm. \ref{thm:traces}, the language $L=L(H,\eps,\eps) \cap
  \bar{\Gamma}^* \Sigma^* \tilde{\Gamma}^*$ is context-sensitive. By
  construction, $L$ is equal to $\{ \bar{i}w\tilde{f} \; | \; i,f \in
  \Gamma^* \; \textrm{and} \; w \in L(G,\{i\},\{f\}) \}$.  It follows
  that $L_G$ is context-sensitive. \qed
\end{proof}

\section{Linearly Bounded Graphs}
\label{sect:lbg}

\subsection{LBM Transition Graphs}

Following \cite{Caucal03tm}, we define the notion of labeled linearly
bounded Turing machine (LLBM). This notion is very close to the notion
of off-line Turing machine \cite{Hopcroft79}. It is essentially
equivalent to an off-line Turing machine with a one-way input tape and
a two-way linearly bounded work tape. As in standard definitions of
LBMs, the transition rules can only move the head of the LLBM between
the two end markers $[$ and $]$. In addition, a silent step can
decrease the size of the configuration (without removing the markers)
and a $\Sigma$-transition can increase the size of the configuration
by one cell. This ensures that while reading a word of length $n$, the
labeled LBM uses at most $n$ cells.

\begin{definition}
  A labeled linearly bounded Turing machine is a tuple $M = ( \Gamma,
  \Sigma, [, ],Q, q_0,$ $F, \delta)$ as in Definition \ref{def:lbm},
  where all components but $\delta$ are defined similarly, and
  $\delta$ is a finite set of \emph{labeled} transition rules of one
  of the forms:
  \begin{xalignat*}{3}
    pA \erb{\eps} & qB\pm & p[ \erb{\eps} & q[+ & p] \erb{\eps}& q]-
    \\
    pB \erb{a} & qAB & p] \erb{a} & qA] & pA \erb{\eps} & q
  \end{xalignat*}
  with $p,q \in Q$, $A,B \in \Gamma$, $\pm \in \{ +,- \}$ and $a \in
  \Sigma$.
\end{definition}

Configurations are defined similarly to the previous case. However,
the transition relation is now labeled. For all $x \in \Sigma \cup
\{\eps\}$, the relation $\era{x}{M}$ is a subset of $C_M \times C_M$
defined as:
\begin{align*}
  \era{x}{M} \ =\ & \{\ (upAv,uBqv)\ |\ pA \erb{x} qB+ \in \delta\ \}
  \\
  \cup\ & \{\ (uCpAv, uqCBv)\ |\ pA \erb{x} qB- \in \delta\}
  \\
  \cup\ & \left\{
    \begin{aligned}
      \phantom{t}& \{\ (upAv,uqv) \ |\ pA \erb{x} q \in \delta\}
      \qquad & &\text{with $x=\eps$}
      \\
      & \{\ (upAv,uqBAv) \ |\ pA \erb{x} qBA \in \delta\} \qquad &
      &\text{with $x \in \Sigma$}.
    \end{aligned}
  \right.
\end{align*}
Transitions are composed by defining $\erb{wx}$ as $(\erb{w} \circ
\erb{x})$ for all $w \in \Sigma^*$. The definition of runs also
changes to reflect the fact that input words are no longer present on
the tape but are read as the run progresses. The unique initial
configuration is thus $[q_0]$. A word $w$ is accepted by $M$ if $[q_0]
\erb{w} c_f$ where $c_f$ is a final configuration. $M$ is
\intro{deterministic} if, from any configuration, either all possible
moves are labeled by \emph{distinct} letters of $\Sigma$, or there is
only one possible move labeled by $\eps$. Formally, for all
configurations $c$, $c_1$ and $c_2$ such that $c \erb{x} c_1$ and $c
\erb{y} c_2$, either $x$ and $y$ belong to $\Sigma$ and if $c_1 \not=
c_2$ then $x \not= y$, or $x=y=\eps$ and $c_1=c_2$.

Labeled LBMs are as expressive as classical LBMs.

\begin{proposition}
\label{prop:eqlbmllbm}
  A language is (deterministic) context-sensitive if and only if it is
  accepted by a (deterministic) labeled linearly bounded Turing
  machine.
\end{proposition}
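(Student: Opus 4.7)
The plan is to prove the equivalence by two constructions, each preserving determinism. For the direction where $L = L(M')$ for an LLBM $M'$ implies $L$ is context-sensitive, the observation made in the text that an LLBM processing $w$ uses at most $|w|$ tape cells means $M'$ can be viewed as a linear-space two-tape TM with a read-only input tape (supplying $w$ letter by letter) and a work tape of length at most $|w|$. The standard equivalence $\mathrm{NSPACE}[n] = \mathrm{CSL}$ then gives $L(M') \in \mathrm{CSL}$. If $M'$ is deterministic, one obtains a deterministic linear-space TM and hence $L(M') \in \mathrm{DSPACE}[n]$, i.e.\ a deterministic context-sensitive language.

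For the converse, I start from a classical LBM $M$, which I may assume to be terminating by Prop.~\ref{prop:termlbm} (its counter-based proof preserves determinism). I would build an LLBM $M'$ that interleaves input reading with simulation of $M$ using a two-track encoding: track~1 accumulates the input letter by letter, track~2 is a scratch workspace. In a designated ``waiting'' state, only $\Sigma$-transitions are enabled, each applying the rule $p] \erb{a} q(a, \bot)]$ to append $(a, \bot)$ to the tape. The resulting state initiates a deterministic $\eps$-chain that clears track~2, recopies track~1 onto it prefixed with the $M$-state marker $q_0^M$, simulates $M$ on track~2 until it halts, and returns to the waiting state, declared final if and only if $M$ accepted the accumulated input. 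If $\eps \in L(M)$, the initial waiting state is itself declared final. A simpler variant, using an $\eps$-transition from the waiting state to commit to simulation mode at any moment, handles the non-deterministic case.

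The hard part will be preserving determinism in this second construction. The LLBM has no end-of-input marker, so it cannot deterministically decide when to stop reading and start simulating; combining the $\Sigma$-reading transitions with an $\eps$-simulation transition out of the same state would violate the LLBM determinism condition. The restart-on-each-letter scheme above avoids making that choice altogether, recomputing the simulation from scratch after every letter read; this is legitimate precisely because $M$ is terminating, so each $\eps$-chain between two consecutive reads is guaranteed to finish and the overall run on $w$ ends in a well-defined waiting state after exactly $|w|$ visible transitions.
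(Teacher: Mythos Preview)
Your proposal is correct and follows essentially the same approach as the paper: the LLBM $\to$ LBM direction is handled in both cases by viewing the LLBM as a two-tape linear-space machine (input tape plus bounded work tape), and the LBM $\to$ LLBM direction uses, in both proofs, the same ``restart-the-simulation-after-each-input-letter'' trick on a terminating LBM with a two-track alphabet, with the waiting state declared final or not according to the outcome of the last simulation and the initial state's finality set by whether $\eps \in L$. Your discussion of why termination is needed for determinism (no end-of-input marker, hence no deterministic way to switch from reading to simulating) is exactly the point the paper's construction silently relies on.
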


\begin{proof}
  Let us first consider a context-sensitive language $L$ accepted by a
  terminating LBM $N$ (by Prop. \ref{prop:termlbm}). We sketch the
  construction of a LLBM $M=(\Gamma,\Sigma,[,],Q,q_0,F,\delta)$
  accepting $L$.
  Let $q_A,q_R$ and $q_S$ be three states in $Q$. In a configuration
  of the form $[wq_X]$ for $w \in \Sigma^*$ and $q_X \in \{q_A,q_R\}$,
  $M$ reads an input letter $a \in \Sigma$ and goes to the
  configuration $[waq_s]$.  Then it simulates $N$ on $wa$ using only
  $\varepsilon$-transitions while remembering $wa$. Using the same
  work alphabet as $N$, this would require the use of $2|wa|$ cells.
  However using an increased work alphabet (or a second work tape),
  this can be done using only $|wa|$ cells. If $N$ accepts (resp.
  rejects) $wa$, $M$ restores $wa$ on its (primary) work tape and
  steps in configuration $[waq_A]$ (resp. $[waq_R]$). By taking
  $F=\{q_A\}$ and $q_0=q_A$ if $\varepsilon$ belongs to $L$ and
  $q_0=q_R$ otherwise, it is easy to see that the set of words
  accepted by $M$ from $[q_0]$ is precisely $L$. Note that
  non-deterministic behavior can only appear in $N$ while simulating
  $M$. Hence, if $M$ is deterministic then so is $N$.
  
  Conversely, let $M$ be a LLBM accepting a language $L \subseteq
  \Sigma^*$. We describe a LBM $N$ accepting $L$ with two tapes: the
  input tape and work tape. It is well known that this model is
  equivalent to LBMs with one tape as presented in Sec. \ref{ssec:lbm}
  (see for example \cite{Hopcroft79}). To each tape corresponds a set
  of control states: the set of work states $Q_\Gamma$ contains the
  set $Q$ of states of $M$ and the set of input states $Q_\Sigma$ is
  reduced to $\{q_i,q_A\}$ respectively the initial and accepting
  state.
  
  \noindent
  The machine $N$ working on word $w \in \Sigma^*$ starts with the
  configuration $([q_i w],[q_0])$. From a configuration
  $([w_1q_iw_2],[uqv])$ with $w_1,w_2 \in \Sigma^*$, $q \in Q$ and $u,
  v \in \Gamma^*$, $N$ can non-deterministically simulate any
  $\varepsilon$-transition of $N$ that can be applied to the
  configuration represented by the work tape. The deletion rules are
  simulated by shifting all tape content on the right of the read head
  by one cell to the left. Moreover $N$ can non-deterministically
  simulate any $a$-transition for $a \in \Sigma$ provided that the input
  head is on top of the symbol $a$ in which case it is moved one cell to the
  right.  The insertion rules are simulated by shifting the work tape
  content to the right of the read head by one cell on the right.
  
  \noindent
  The machine $M$ enters the accepting state $q_A$ if the input head
  is on top of the right border symbol $]$ and the work state is a
  final state of $N$. It follows from the construction of $M$ that $w$
  is accepted by $M$ if and only if it is accepted by $N$.  Moreover,
  if $N$ is deterministic then so is $M$. \qed
\end{proof}

\begin{remark}
  \label{rem:lbminit}
  For convenience, one may consider LBMs whose initial configuration
  is not of the form $[q_0]$ but is any fixed configuration $c_0$.
  This does not add any expressive power, as can be proved by a simple
  encoding of $c_0$ into the control state set of the machine, which
  will not be detailed here.
\end{remark}

\begin{remark}
  \label{rem:lbmrules}
  For simplicity, the above definition forces the insertion of a new
  tape cell each time a letter is read. More relaxed forms where a
  cell deletion or rewriting can occur during an input may be
  considered without any consequence for the results. Similarly, rules
  which do not move the read head can be allowed.
\end{remark}

Let $M = (\Gamma, \Sigma, [, ], Q, q_0, F, \delta)$ be a LLBM, we
define its configuration graph
\[
C_M\ =\ \big\{ (c,a,c') \mid c \era{a}{M} c' \quad \text{for } a \in
\Sigma \cup \{ \eps \}\, \big\}.
\]
The vertices of this graph are all configurations of $M$, and its
edges denote the transitions between them, including
$\eps$-transitions. One may wish to only consider the behavior of $M$
from an external point of view, i.e. only looking at the sequence of
inputs. This means one has to find a way to conceal $\eps$-transitions
without changing the accepted language or destroying the structure.
One speaks of the \emph{transition graph} of an acceptor, as opposed
to its configuration graph.

In \cite{Stirling00}, Stirling mentions a normal form for pushdown
automata which allows him to consider a structural notion of
transition graphs, without relying on the naming of vertices. We first
recall this notion of \emph{normalized} systems adapted to labeled
LBMs.  A labeled LBM is \emph{normalized} if its set of control states
can be partitioned in two subsets: one set of \emph{internal} states,
noted $Q_\eps$, which can perform $\eps$-rules and only $\eps$-rules,
and a set of \emph{external} states noted $Q_\Sigma$, which can only
perform $\Sigma$-rules. More formally:

\begin{definition}
  \label{def:norm}
  A labeled LBM $M = (\Gamma, \Sigma, [, ], Q, q_0, F, \delta)$ is
  \emph{normalized} if there are disjoint sets $Q_\Sigma$ and $Q_\eps$
  such that $Q = Q_\eps \cup Q_\Sigma$, $F \subseteq Q_\Sigma$, and
  \begin{align*}
    & pB \erb{a} qAB \in \delta \implies p \in Q_\Sigma,
    \\
    & pA \erb{\eps} qB\pm \in \delta \text{ or } pA \erb{\eps} q \in
    \delta \implies p \in Q_\eps,
    \\
    & p \in Q_\eps \implies \; \textrm{for all $A \in \Gamma$}, \;
    \exists pA \erb{\eps} qB\pm \in \delta.
  \end{align*}
\end{definition}

This definition implies in particular that a control state from which
there exists no transition must belong to $Q_\Sigma$. A configuration
is external if its control state is in $Q_\Sigma$, and internal
otherwise. This makes it possible to \emph{structurally} distinguish
between internal vertices, which have one or more outgoing
$\eps$-edges, and external ones which only have outgoing
$\Sigma$-edges or have no outgoing edges. Given any labeled LBM, it
is always possible to normalize it without changing the accepted
language.

\begin{proposition}
  Every labeled linearly bounded machine can be normalized without
  changing the accepted language.
\end{proposition}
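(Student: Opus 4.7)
The plan is to normalize the given labeled LBM $M = (\Gamma, \Sigma, [, ], Q, q_0, F, \delta)$ by a state-splitting construction. For each $p \in Q$, I introduce in $M'$ up to two fresh copies: a copy $p^\eps$, placed in $Q'_\eps$, whenever $p$ is the source of some $\eps$-rule of $\delta$, and a copy $p^\Sigma$, placed in $Q'_\Sigma$, whenever $p$ is the source of some $\Sigma$-rule, belongs to $F$, or has no outgoing rule at all. By construction $Q'_\eps$ and $Q'_\Sigma$ are disjoint, and the new final set $F' = \{p^\Sigma \mid p \in F\}$ lies inside $Q'_\Sigma$. Each $\eps$-rule with source $p$ and target $q$ in $\delta$ is translated into one or two $\eps$-rules of $\delta'$ with source $p^\eps$ (which exists, by choice) and target $q^\eps$ and/or $q^\Sigma$ (every copy of $q$ that was introduced); each $\Sigma$-rule is translated analogously starting from $p^\Sigma$.

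The first two clauses of Definition~\ref{def:norm} then hold by construction, since sources of $\eps$-rules lie in $Q'_\eps$ and sources of insertion rules lie in $Q'_\Sigma$. To enforce the third, I add a fresh non-final state $q_\bot$ to $Q'_\Sigma$ with no outgoing rule, and for every pair $(p^\eps, A) \in Q'_\eps \times \Gamma$ for which $\delta$ contains no rewrite-and-shift rule of the form $pA \erb{\eps} qB\pm$, I add a trap rule $p^\eps A \erb{\eps} q_\bot A+$ to $\delta'$. For the initial state, if $q_0$ has rules of at most one type I pick the corresponding copy as initial state of $M'$; otherwise I appeal to Remark~\ref{rem:lbminit} to take $[q_0^\eps]$ as initial configuration, and add a no-op $\eps$-rule $q_0^\eps\,]\erb{\eps} q_0^\Sigma\,]$ (permitted by Remark~\ref{rem:lbmrules}) so that the $\Sigma$-moves originally available from $q_0$ remain reachable.

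Correctness reduces to matching accepting runs. Any accepting run of $M$ lifts to $M'$ by annotating each visited state with the superscript corresponding to the type of the next rule to be fired, and with superscript $\Sigma$ at the final configuration (which is allowed since $F' \subseteq Q'_\Sigma$); the duplicated-target form of $\delta'$ guarantees that each such step is a valid transition of $M'$. Conversely, any accepting run of $M'$ projects back to an accepting run of $M$ by erasing the superscripts, since $q_\bot$ is a dead state with no outgoing rule and is not final, so no run entering $q_\bot$ can ever be accepting. I expect the third clause to be the main subtlety: the trap rules introduce many new silent transitions, and one must verify that they cannot create spurious accepting behaviour. This is exactly what the non-final, absorbing nature of $q_\bot$ guarantees, and it also leverages the freedom left by Definition~\ref{def:norm} to put the targets of $\eps$-rules in either class, which is what lets the construction avoid the awkwardness of silently switching between internal and external mode at the source.
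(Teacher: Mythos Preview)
Your construction is correct, but the paper takes a different and much shorter route. Rather than transforming $M$ syntactically, the paper observes that $L(M)$ is context-sensitive by Prop.~\ref{prop:eqlbmllbm}, and that the LLBM built in the proof of that proposition is already essentially normalized: the only states able to fire $\Sigma$-rules are the two ``verdict'' states $q_A,q_R$, every other state fires only $\eps$-rules, and the third clause of Definition~\ref{def:norm} is patched by adding idle loops $pA \erb{\eps} pA$ wherever needed. Given the earlier construction, the proof is thus essentially one line.

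Your state-splitting argument is more elementary in that it is self-contained and does not rely on the round trip through ordinary LBMs. The paper's indirection, however, buys something yours does not. Because the canonical LLBM produced in Prop.~\ref{prop:eqlbmllbm} is deterministic whenever the underlying LBM is, the paper's normalization preserves determinism, as recorded in the remark immediately following the proof. Your construction in general does not: whenever a target state $q$ has both copies $q^\eps$ and $q^\Sigma$, each incoming rule is duplicated into two rules with identical left-hand side and label, so $M'$ is nondeterministic even when $M$ was deterministic. This is harmless for the accepted language, so your proof of the proposition itself stands, but your $M'$ cannot serve as the witness for that remark.
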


\begin{proof}
  Let $M$ be a LLBM, we know from Prop.~\ref{prop:eqlbmllbm} that the
  language $L$ accepted by $M$ is context-sensitive. Consider the LLBM
  $M'$ accepting $L$ obtained in the construction of the proof of
  Prop.~\ref{prop:eqlbmllbm}. It is easy to see that $M'$ is
  normalized: $q_A$ and $q_R$ are the only states that can perform
  transitions labeled by $\Sigma$, and they cannot perform
  $\varepsilon$-transitions. All other states can only perform
  $\eps$-transitions. To verify the third condition in the definition
  of normalized LLBMs, it suffices to add instruction $pA \erb{\eps}
  pA$ for every $p$ and $A$ for which the condition is violated.
  Moreover, the unique terminal state is $q_A$, which is external.
  \qed
\end{proof}

\begin{remark}
  Normalization as previously described preserves determinism.
\end{remark}

From this point on, unless otherwise stated, we will only consider
normalized LLBMs. We can now define the transition graph of a LLBM as
the $\eps$-closure of its configuration graph, followed by a
restriction to its set of external configurations (which happens to be
a rational set).

\begin{definition} 
  Let $M = (\Gamma, \Sigma, [, ], Q, q_0, F, \delta)$ be a
  (normalized) LLBM, and $C_\Sigma$ be its set of external
  configurations. The transition graph of $M$ is
  \[
  G_M\ =\ \big\{ (c,a,c') \mid c,c' \in C_\Sigma,\ a \in \Sigma,\
  \land\ c \era{a\eps^*}{M} c'\, \big\}.
  \]
\end{definition}

We now define the class of linearly bounded graphs as the closure
under isomorphism of transition graphs of labeled LBMs.

\begin{example}
  \label{ex:lbm}
  Figure \ref{fig:exlbm} shows the transition graph of the normalized
  LLBM $M$ whose rules are:
  \begin{align*}
    q_0] & \erb{a} q_0a] & q_1a & \erb{b} q_1b+ & q_2b & \erb{a} q_3a- &
    q_3b & \erb{a} q_3a-
    \\
    q_0a & \erb{a} q_0aa & q_1] & \erb{\eps} q_2]- & & & q_3[ &
    \erb{\eps} q_1[+
    \\
    q_0a & \erb{b} q_1b+
  \end{align*}
  and whose unique accepting state is $q_2$. This machine accepts the
  language $\{ (a^nb^n)^+ \mid n \geq 1 \}$, which is also the
  language of paths of the graph between vertex $[q_0]$ and the set
  $[b^*q_2b]$. For the sake of clarity, only the part of the graph
  reachable from configuration $[q_0]$ is shown. We will see in Sect.
  \ref{sect:prop} that this sub-graph is still a linearly bounded
  graph.
\end{example}

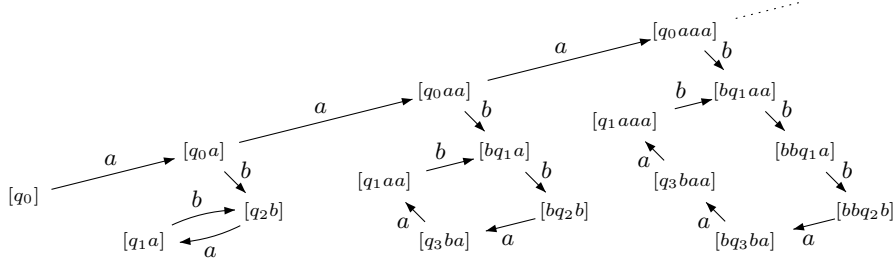
\begin{figure}
  \begin{center}
    \unitlength=2.3pt
    \begin{picture}(140,40)(10,-5)
      \gasset{Nframe=n,Nadjust=wh,Nadjustdist=2}
      
      \node(e)(10,2.5){\scriptsize$[q_0]$}
      \node(a1)(40,10){\scriptsize$[q_0a]$}
      \node(a2)(80,20){\scriptsize$[q_0aa]$}
      \node(a3)(120,30){\scriptsize$[q_0aaa]$}
      \node(a4)(140,35){}

      \node(b)(50,0){\scriptsize$[q_2b]$}
      \node(a)(30,-5){\scriptsize$[q_1a]$}

      \node(ab)(90,10){\scriptsize$[bq_1a]$}
      \node(bb)(100,0){\scriptsize$[bq_2b]$}
      \node(ba)(80,-5){\scriptsize$[q_3ba]$}
      \node(aa)(70,5){\scriptsize$[q_1aa]$}

      \node(aab)(130,20){\scriptsize$[bq_1aa]$}
      \node(abb)(140,10){\scriptsize$[bbq_1a]$}
      \node(bbb)(150,0){\scriptsize$[bbq_2b]$}
      \node(bba)(130,-5){\scriptsize$[bq_3ba]$}
      \node(baa)(120,5){\scriptsize$[q_3baa]$}
      \node(aaa)(110,15){\scriptsize$[q_1aaa]$}

      \drawedge(e,a1){$a$}
      \drawedge(a1,a2){$a$}
      \drawedge(a2,a3){$a$}

      \drawedge(a1,b){$b$}
      \drawedge[curvedepth=2](b,a){$a$}
      \drawedge[curvedepth=2](a,b){$b$}

      \drawedge(a2,ab){$b$}
      \drawedge(ab,bb){$b$}
      \drawedge(bb,ba){$a$}
      \drawedge(ba,aa){$a$}
      \drawedge(aa,ab){$b$}

      \drawedge(a3,aab){$b$}
      \drawedge(aab,abb){$b$}
      \drawedge(abb,bbb){$b$}
      \drawedge(bbb,bba){$a$}
      \drawedge(bba,baa){$a$}
      \drawedge(baa,aaa){$a$}
      \drawedge(aaa,aab){$b$}

      \gasset{Nadjustdist=6, AHnb=0,dash={0.3 1}0}
      \drawedge(a3,a4){}
    \end{picture}  
  \end{center}
  \caption{The transition graph of a labeled LBM accepting $\{
    (a^nb^n)^+ \mid n \geq 1 \}$.}
  \label{fig:exlbm}
\end{figure}

\subsection{Alternative definitions}

This section provides two alternative definitions of linearly bounded
graphs. In \cite{Caucal02}, it is shown that all previously mentioned
classes of graphs can be expressed in a uniform way in terms of
Cayley-type graphs of certain classes of rewriting systems.  We show
that it is also the case for linearly bounded graphs, which are the
Cayley-type graphs of length-decreasing rewriting systems. The second
alternative definition we present changes the perspective and directly
defines the edges of linearly bounded graphs using incremental
context-sensitive transductions.  This variety of definitions will
allow us to prove in a simpler way some of the properties of linearly
bounded graphs.

\subsubsection{Cayley-type graphs of decreasing rewriting systems}

We first give the relevant definitions about rewriting systems and
Cayley-type graphs. A \intro{word rewriting system} $R$ over alphabet
$\Gamma$ is a subset of $\Gamma^* \times \Gamma^*$. Each element
$(l,r) \in R$ is called a \intro{rewriting rule} and noted $l
\rightarrow r$. The words $l$ and $r$ are respectively called the
left- and right-hand side of the rule.  The \intro{rewriting relation}
of $R$ is the binary relation
\[
\{ (ulv, urv) \mid u,v \in \Gamma^*,\ l \rightarrow r \in R \}
\]
which we also denote by $R$, consisting of all pairs of words
$(w_1,w_2)$ such that $w_2$ can be obtained by replacing
(\emph{rewriting}) an instance of a left-hand side $l$ in $w_1$ with
the corresponding right-hand side $r$. The reflexive and transitive
closure $R^*$ of this relation is called the \intro{derivation} of
$R$. Whenever for some words $u$ and $v$ we have $u R^* v$, we say $R$
rewrites $u$ into $v$. A word which contains no left-hand side is
called a \emph{normal form}. The set of all normal forms of $R$ is
written $\mathrm{NF}(R)$.

One can associate a unique infinite graph to any rewriting system by
considering its \emph{Cayley-type graph} defined as follows:

\begin{definition}
  The $\Sigma$-labeled Cayley-type graph of a rewriting system $R$
  over $\Gamma$, with $\Sigma \subseteq \Gamma$, is the infinite graph
  \[
  G_R\ =\ \{ (u,a,v) \mid a \in \Sigma,\ u,v \in \mathrm{NF}(R),\ ua
  R^* v \}.
  \]
\end{definition}

The class of rewriting systems we consider is that of \emph{finite
  length decreasing word rewriting systems}, i.e. rewriting systems
with a finite set of rules of the form $l \rightarrow r$ with $|l|
\geq |r|$, which can only preserve or decrease the length of the word
to which they are applied. The reason for this choice is that the
derivation relation of such a system coincides with arbitrary
compositions of $\eps$-rules of a given labeled LBM.

\begin{example}
  \label{ex:cayley}
  Figure \ref{fig:excayley} shows the Cayley-type graph of a simple
  decreasing rewriting system.
\end{example}

\begin{figure}
  \begin{center}
    \unitlength=0.9mm
    \begin{picture}(120,55)(0,0)
      \gasset{Nframe=n,Nadjust=wh,Nadjustdist=2}

      \put(60,52.5){
        \node(   e)(  0,  0){\small $\eps$}
        \node(   0)(-32,-15){\small $0$}
        \node(   1)( 32,-15){\small $1$}
        \node(  00)(-48,-30){\small $00$}
        \node(  01)(-16,-30){\small $01$}
        \node(  10)( 16,-30){\small $10$}
        \node(  11)( 48,-30){\small $11$}
        \node(000)(-56,-45){\small $000$}
        \node(001)(-40,-45){\small $001$}
        \node(010)(-24,-45){\small $010$}
        \node(011)( -8,-45){\small $011$}
        \node(100)(  8,-45){\small $100$}
        \node(101)( 24,-45){\small $101$}
        \node(110)( 40,-45){\small $110$}
        \node(111)( 56,-45){\small $111$}
        \node(0000)(-60,-52.5){}
        \node(0001)(-52,-52.5){}
        \node(0010)(-44,-52.5){}
        \node(0011)(-36,-52.5){}
        \node(0100)(-28,-52.5){}
        \node(0101)(-20,-52.5){}
        \node(0110)(-12,-52.5){}
        \node(0111)( -4,-52.5){}
        \node(1000)(  4,-52.5){}
        \node(1001)( 12,-52.5){}
        \node(1010)( 20,-52.5){}
        \node(1011)( 28,-52.5){}
        \node(1100)( 36,-52.5){}
        \node(1101)( 44,-52.5){}
        \node(1110)( 52,-52.5){}
        \node(1111)( 60,-52.5){}

        \gasset{ELside=r}
        \drawedge(  e,  0){$a$}
        \drawedge(  0, 00){$a$}
        \drawedge(  1, 10){$a$}
        \drawedge( 00,000){$a$}
        \drawedge( 01,010){$a$}
        \drawedge( 10,100){$a$}
        \drawedge( 11,110){$a$}

        \gasset{ELside=l}
        \drawedge(  0,  1){$b$}
        \drawedge( 00, 01){$b$}
        \drawedge( 01, 10){$b$}
        \drawedge( 10, 11){$b$}
        \drawedge(000,001){$b$}
        \drawedge(001,010){$b$}
        \drawedge(010,011){$b$}
        \drawedge(011,100){$b$}
        \drawedge(100,101){$b$}
        \drawedge(101,110){$b$}
        \drawedge(110,111){$b$}

        \gasset{ELside=r}
        \drawedge(  1,  e){$c$}
        \drawedge( 01,  0){$c$}
        \drawedge( 11,  1){$c$}
        \drawedge(001, 00){$c$}
        \drawedge(011, 01){$c$}
        \drawedge(101, 10){$c$}
        \drawedge(111, 11){$c$}

        \gasset{Nadjustdist=6, AHnb=0,dash={0.3 1}0}
        \drawedge(000,0000){}
        \drawedge(001,0010){}
        \drawedge(010,0100){}
        \drawedge(011,0110){}
        \drawedge(100,1000){}
        \drawedge(101,1010){}
        \drawedge(110,1100){}
        \drawedge(111,1110){}
        \drawedge(000,0001){}
        \drawedge(001,0011){}
        \drawedge(010,0101){}
        \drawedge(011,0111){}
        \drawedge(100,1001){}
        \drawedge(101,1011){}
        \drawedge(110,1101){}
        \drawedge(111,1111){}
      }
    \end{picture}
  \end{center}
  \caption{Cayley-type graph of the rewriting system $R = \{a
    \rightarrow 0,$ $b \rightarrow b,$ $0b \rightarrow 1,$ $1b
    \rightarrow b0,$ $c \rightarrow c,$ $1c \rightarrow \eps\}$, with
    $\Sigma = \{a,b,c\}$ and $\Gamma = \{a,b,c,0,1\}$. Rules $b
    \rightarrow b$ and $c \rightarrow c$ ensure that no normal form of
    $R$ contains $b$ or $c$. As an example, edge $01 \erb{b} 10$ is
    justified by the derivation $01b \rightarrow 0b0 \rightarrow 10$.}
  \label{fig:excayley}
\end{figure}
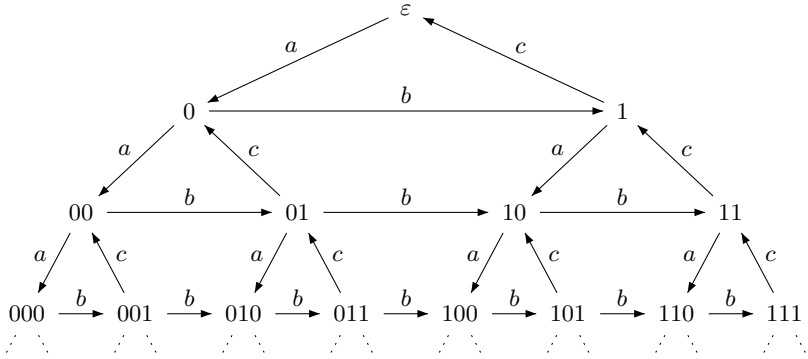

\subsubsection{Incremental context-sensitive transduction graphs}

The notion of \intro{computation graph} was first systematically used
in \cite{Caucal02} (where the terminology used is \intro{relation
  graph}). It corresponds to the graphs defined by the
\emph{transductions} (i.e.  binary relations on words) associated to a
class of finite machines.  These works prove that for pushdown
automata and Turing machines, the classes of transition and
computation graphs coincide. We show that it is also the possible to
give a definition of linearly bounded graphs as the computation graphs
of a certain class of LBMs, or equivalently as the graphs defined by a
certain class of context-sensitive transductions.

A relation $R$ is recognized by a LBM $M$ if the language $\{ u\#v \;
| \; (u,v) \in R\}$ where $\#$ is a fresh symbol is accepted by $M$.
However, this type of transductions generates more than linearly
bounded graphs. Even if we only consider linear relations (i.e
relations R such that there exists $c$ and $k \in \mathbb{N}$ such
that $(u,v) \in R$ implies $|v| \leq c \cdot |u| +k$), we obtain
graphs accepting the languages recognizable in exponential space
($\mathrm{NSPACE}[2^n]$) which strictly contain the context-sensitive
languages (cf Thm.~\ref{thm:hierar}). We need to consider relations
for which the length difference between a word and its image is
bounded by a certain constant. Such relations can be associated to
LBMs as follows:

\begin{definition}
  A \intro{$k$-incremental context-sensitive transduction} $T$ over
  $\Gamma$ is defined by a LBM recognizing a language $L \subseteq
  \{u\#v \; | \; u,v \in \Gamma^* \; \text{and} \; |v| \leq |u|+k\}$
  where $\#$ does not belong to $\Gamma$. Relation $T$ is defined as
  $\left\{ (u,v) \; | \; u\#v \in L \right\}$.
\end{definition}

The synchronized relations of finite image (i.e for $u$ there are
finitely many $v$ such that $(u,v) \in R$) provide a first example of
$k$-incremental context-sensitive transductions.

 \begin{proposition}
\label{prop:synchronizedincremental}
   For any synchronized relation $R$ of finite image, there
   exists a constant $k \in \mathbb{N}$ such that $R$ is a
   $k$-incremental context-sensitive transduction.
 \end{proposition}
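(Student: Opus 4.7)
The plan is to prove two separate claims: (i) the finite-image hypothesis forces a uniform bound $|v| \leq |u| + k$ for all $(u,v) \in R$, and (ii) the encoded language $L_R = \{u \# v \mid (u,v) \in R\}$ is context-sensitive.

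For claim (i), I would first put the synchronized transducer $T$ defining $R$ into a normal form in which the states are partitioned into three modes---synchronized, input-only, and output-only---with sync transitions staying in sync, and a one-way entry from sync into an input-only or output-only component out of which no path can escape. Along any accepting path, the input-only tail can only make $|u|$ exceed $|v|$, so the only source of $|v| > |u|$ is the output-only tail. For each output-only state $q$, let $L_q$ be the regular language of outputs labelling paths from $q$ to a final state. If $q$ is reachable from $q_0$ on some sync-prefix input $u_0$ and $L_q$ is infinite, then the image of $u_0$ is infinite, contradicting our assumption. Hence for every reachable output-only state $q$ the language $L_q$ must be finite; since $T$ has finitely many states, setting $k$ to be the maximum length of any word in any such $L_q$ yields the desired uniform bound.

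For claim (ii), a nondeterministic LBM can decide $L_R$ by locating the separator $\#$ in its input $u \# v$ and then simulating $T$ using two head pointers (one scanning $u$, one scanning $v$) together with the current state of $T$; each step of $T$ advances the appropriate head(s), and the machine accepts when both pointers reach the right end in a final state of $T$. Since two pointers and the state fit in $O(\log n)$ workspace, this is well within linear space, so $L_R$ is context-sensitive. Combined with (i), these facts show that $R$ is a $k$-incremental context-sensitive transduction.

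The main obstacle is claim (i): one really needs the structural property of synchronized transducers after passing to the mode-partitioned normal form, together with the slightly delicate observation that an infinite $L_q$ at a reachable output-only state immediately produces an input with infinite image, thus violating the hypothesis. The simulation in claim (ii) is then routine.
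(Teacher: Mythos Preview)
Your proof is correct and follows essentially the same approach as the paper: the paper invokes the standard decomposition of a synchronized relation as a finite union of terms $S_i \cdot (A_i \times \varepsilon)$ and $S_j \cdot (\varepsilon \times B_j)$ with each $S_k$ length-preserving, observes that finite image forces every $B_j$ to be finite, and takes $k$ to be the maximal length of a word in any $B_j$---which is exactly your mode-partition argument phrased algebraically rather than automata-theoretically. The context-sensitivity of $\{u\#v \mid (u,v)\in R\}$ is then asserted without further detail, just as in your claim~(ii).
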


 \begin{proof}
   Let $R \subseteq \Gamma^* \times \Gamma^*$ be a synchronized
   relation of finite image. It follows from the definition of
   synchronized relations that $R$ is equal to a finite union:
   \[
   \left(\bigcup_{i \in I} S_i \cdot (A_i \times \varepsilon) \right)
   \; \cup \; \left(\bigcup_{j \in J} S_j \cdot (\varepsilon \times
     B_j) \right)
   \]
   where for all $k \in I \cup J$, $S_k$ is a length-preserving
   relation\footnote{A synchronized relation $R$ is length-preserving
     if $\forall (x,y) \in R, |x| = |y|$} and for all $i \in I$ and $j
   \in J$, $A_i$ and $B_j$ are rational subsets of $\Gamma^*$.

   \noindent
   As $R$ has a finite image, for all $j \in J$ the set $B_j$ is
   necessarily finite. Let $k$ be the maximal length of a word in
   $\bigcup_{j \in J} B_j$, it is easy to check that for all pairs of
   words $(u,v) \in R$, $|v| \leq |u|+k$.  Moreover the language $\{
   u\#v \; | \; (u,v) \in R \}$ is context-sensitive.  Hence $R$ is a
   $k$-incremental context-sensitive transduction. \qed
\end{proof}

 The following proposition states that incremental context-sensitive
 transductions of a given level form a Boolean algebra.

 \begin{proposition}
   \label{prop:bool}
   For all $k$-incremental context-sensitive transductions $T$ and
   $T'$ over $\Gamma^*$, $T \cup T',$ $T \cap T'$ and $\overline{T} =
   E_k - T$ (where $E_k$ is $\{ (u,v) \; | \; 0 \leq |v| \leq |u|+k \}$)
   are incremental context-sensitive transductions.
 \end{proposition}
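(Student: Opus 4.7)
The plan is to reduce each closure property directly to Theorem~\ref{thm:csb} by translating Boolean operations on transductions into Boolean operations on their associated context-sensitive languages. For any $k$-incremental context-sensitive transduction $T$ over $\Gamma$, write $L_T = \{u\#v \mid (u,v) \in T\}$ for its witnessing context-sensitive language.

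For union and intersection the argument is immediate: $L_{T \cup T'} = L_T \cup L_{T'}$ and $L_{T \cap T'} = L_T \cap L_{T'}$, both of which are context-sensitive by Theorem~\ref{thm:csb}. The incremental bound $|v| \leq |u|+k$ is inherited from $T$ and $T'$, so the resulting relations are again $k$-incremental context-sensitive transductions.

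For complement, I would first observe that the ``universe'' language
\[
L_{E_k} \;=\; \{u\#v \mid u,v \in \Gamma^*, \; |v| \leq |u|+k\}
\]
is itself context-sensitive: a LBM can check that the input contains exactly one occurrence of $\#$ and that the suffix after $\#$ is no longer than the prefix plus the constant $k$, which is done by a straightforward back-and-forth sweep using only linear space (and a counter bounded by $k$ in the finite control). Then $L_{\overline T} = L_{E_k} \setminus L_T = L_{E_k} \cap \overline{L_T}$, which is context-sensitive by the effective Boolean-algebra property of Theorem~\ref{thm:csb} (applied over the finite alphabet $\Gamma \cup \{\#\}$). The length bound on $\overline T$ holds by construction because every pair in $\overline T$ lies in $E_k$.

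There is no real obstacle in this proof: the entire statement is essentially a corollary of the Immerman--Szelepcs\'enyi theorem, once one packages transductions as languages via the separator $\#$ and confirms that the envelope $E_k$ is itself context-sensitive. The only point requiring a small amount of care is to take complement within the correct universe $E_k$ rather than within $\Gamma^* \times \Gamma^*$, which is precisely why the statement is formulated with $\overline{T} = E_k - T$.
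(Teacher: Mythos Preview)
Your proof is correct and follows essentially the same route as the paper: both arguments encode transductions as languages via the separator $\#$, invoke Theorem~\ref{thm:csb} for union (and intersection), and handle complement by intersecting $\overline{L_T}$ with the context-sensitive ``envelope'' language $L_{E_k}$. The only difference is presentational---you are a bit more explicit about why $L_{E_k}$ is context-sensitive and about taking the complement within the correct universe, but the underlying idea is identical.
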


 \begin{proof}  
   The closure under union follows from that of context-sensitive
   languages. The proof of closure under complement is a
   straightforward consequence of the closure under complement of
   context-sensitive languages (cf Thm.~\ref{thm:csb}). Let $T \subset
   \Gamma^* \times \Gamma^*$ be a $k$-incremental context-sensitive
   transduction. By definition, the set $L=\{ u\#v \;|\; (u,v) \in T
   \}$ is context-sensitive.  It is straightforward to check that the
   set $L'=\{ u\#v \; | \; (u,v) \in E_k - T \}$ is equal to:
   \[
   \overline{L} \cap \{ u \# v \; | \; |u| \leq |v|+k \}.
   \]
   As the context-sensitive languages are closed under complement and
   intersection, $L'$ is context-sensitive. Hence, $\bar{T}$ is a
   $k$-incremental context-sensitive transduction. \qed
\end{proof}

The canonical graph associated to a finite set of transductions is
called a \emph{transduction graph}. Relating graphs to a class of
binary relations on words was already used to define rational graphs
and their sub-classes.

\begin{definition}
  The $\Sigma$-labeled transduction graph of a finite set of
  incremental context-sensitive transductions $(T_a)_{a \in \Sigma}$
  is
  \[
  G_T\ =\ \{ (u,a,v) \mid a \in \Sigma \; \text{and} \; (u,v) \;
  \text{is recognized by $T_a$} \}.
  \]
\end{definition}

\begin{example}
  The linearly bounded graph of Ex. \ref{ex:lbm} is isomorphic to the
  transduction graph of the following set of incremental
  context-sensitive transductions:
  \begin{align*}
    T_a = \{(\#a^n,\#a^{n+1}) \mid n \geq 0\} & \cup
    \{(b^ma^n,b^{m-1}a^{n+1}) \mid m \geq 1, n \geq 0\},
    \\
    T_b = \{(\#a^n,a^{n-1}b) \mid n \geq 1\} & \cup
    \{(a^mb^n,a^{m-1}b^{n+1}) \mid m \geq 1, n \geq 0\}.
  \end{align*}
  The symbol $\#$ is needed to distinguish a vertex directly reachable
  through a sequence of $a$'s from a vertex reachable through a
  sequence of the form $(a^nb^n)^*$.

  As for the Cayley-type graph of Ex. \ref{ex:cayley}, it can be seen
  as the transduction graph of the set of incremental
  context-sensitive transductions $\{T_a, T_b, T_c\}$, where $T_a$
  adds a 0 and $T_c$ removes a 1 to the right of a binary number, and
  $T_b$ implements binary increment.
\end{example}

Length-preserving context-sensitive transductions have already been
extensively studied in \cite{Latteux98}. In the rest of this
presentation, unless otherwise stated, we will only consider
$1$-incremental transductions without loss of generality regarding the
obtained class of graphs: indeed, any $k$-incremental transduction
graph is isomorphic to a $1$-incremental one over an increased
alphabet.

\subsubsection{Equivalence of all definitions}

We now prove that both classes of Cayley-type graphs of decreasing
rewriting systems, and incremental context-sensitive transduction
graphs define precisely the class of linearly bounded graphs, up to
isomorphism (i.e. up to vertex renaming).

\begin{theorem}
\label{thm:equiv}
  For any graph $G$, the following statements are equivalent:
  \begin{enumerate}
  \item $G$ is isomorphic to the transitions graph of a labeled LBM,
  \item $G$ is isomorphic to the Cayley-type graph of a finite
    length-decreasing system,
  \item $G$ is isomorphic to a context-sensitive transduction graph.
  \end{enumerate}
\end{theorem}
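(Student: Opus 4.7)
The plan is to prove the cycle of implications $(1) \Rightarrow (2) \Rightarrow (3) \Rightarrow (1)$, thereby establishing the three-way equivalence.

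For $(1) \Rightarrow (2)$, I would build from a normalized LLBM $M$ a length-decreasing rewriting system $R$ whose normal forms are precisely the external configurations of $M$, viewed as words over $\Gamma \cup Q \cup \{[,]\}$. Each $\eps$-rule of $M$ translates naturally into a word rewriting rule: $pA \erb{\eps} qB+$ becomes $pA \to Bq$ (length-preserving), $pA \erb{\eps} qB-$ becomes $CpA \to qCB$ for each $C$ (length-preserving), and $pA \erb{\eps} q$ becomes $pA \to q$ (length-decreasing); analogous rules handle the end-marker transitions. Each $\Sigma$-rule such as $pB \erb{a} qAB$ is simulated in two stages: first by ``transport'' rules $Xa \to aX$ for $X \in \Gamma \cup \{]\}$ (length-preserving), which move the appended letter $a$ leftwards through the word until it stops immediately to the right of the state symbol $p$, and then by a ``trigger'' rule $paB \to qAB$ (length-decreasing) which carries out the cell insertion. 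For external $p$ and symbol $B$ for which no $\Sigma$-rule on input $a$ exists, I would add length-preserving dead-lock loop rules at the pattern $paB$ so that $ua$ cannot reach a normal form, preventing spurious edges.

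For $(2) \Rightarrow (3)$, given a length-decreasing rewriting system $R$, I define for each $a \in \Sigma$ the relation $T_a = \{(u,v) \mid u,v \in \mathrm{NF}(R),\ ua \to^*_R v\}$. Length-decrease guarantees $|v| \leq |u|+1$, so $T_a$ is $1$-incremental, and a non-deterministic LBM accepts $\{u\#v \mid (u,v) \in T_a\}$ in linear space by iteratively guessing and applying a rewriting rule to a working copy of $ua$ (whose length is bounded by $|u|+1$ throughout) and finally checking that the resulting word coincides with $v$ and is a normal form. For $(3) \Rightarrow (1)$, given $1$-incremental context-sensitive transductions $(T_a)_{a \in \Sigma}$ defined by LBMs $(M_a)_{a \in \Sigma}$, I would construct a normalized LLBM with a single external state $q_\Sigma$ whose external configurations $[uq_\Sigma]$ are in bijection with the vertices of the transduction graph. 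Reading $a$ is realised by a $\Sigma$-rule that inserts one cell to accommodate a possibly longer output, followed by an $\eps$-computation that non-deterministically guesses a candidate $v$ with $|v| \leq |u|+1$, simulates $M_a$ on $u\#v$ using an enriched work alphabet so that the whole simulation fits in $|u|+O(1)$ cells, and upon acceptance rearranges the tape via length-decreasing $\eps$-rules to produce $[vq_\Sigma]$.

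The main difficulty lies in the direction $(1) \Rightarrow (2)$: one must verify that the interplay of transport, trigger and dead-lock rules faithfully reproduces the $\eps^*$-closure of the $a$-edges of $M$, without introducing spurious normal forms or missing edges. In particular, the argument must confirm that $ua$ rewrites to a normal form if and only if $c \era{a\eps^*}{M} c'$ holds for some external $c'$, and that the transport rules cannot let the input letter drift past the state symbol or freeze in alternative positions that would yield unwanted vertices.
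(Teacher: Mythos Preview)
Your cycle $(1)\Rightarrow(2)\Rightarrow(3)\Rightarrow(1)$ and your constructions for the last two implications match the paper's almost verbatim (one omission in $(2)\Rightarrow(3)$: the LBM must also verify that $u$ itself is a normal form before simulating $R$ on $ua$, otherwise the transduction acquires spurious pairs).

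The genuine gap is in $(1)\Rightarrow(2)$. Your dead-lock loops at patterns $paB$ only cover the case where the transported letter $a$ reaches an external state with no matching rule; they do not deal with words that are not configurations at all. For instance, $[AB]$ (no state symbol) is a normal form of your system, and from $[AB]a$ the transport rules produce $[aAB]$, where nothing applies, so $[aAB]$ is again a normal form and $[AB]\erb{a}[aAB]$ is a spurious edge. Similarly, a two-state word $[pAqB]$ with $p,q$ external is a normal form, and after appending $a$ the trigger can fire at the rightmost state $q$, eventually yielding another two-state normal form. Such edges put non-configuration vertices into the support of the Cayley-type graph, so it is not isomorphic to the transition graph of $M$. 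The paper prevents this in two ways: it adds loop rules $s\to s$ for every $s\in\Sigma$ and every auxiliary symbol, so that a word still containing an unconsumed input letter can never become normal; and before releasing the trigger it runs a verification sweep (via dedicated symbols $v_a,v_a'$) from the appended letter to the left end-marker, checking that exactly one state symbol is present and looping forever otherwise. Your simpler encoding can be salvaged along the same lines, but as written it does not yield the required isomorphism. (Incidentally, $paB\to qAB$ is length-\emph{preserving}, not length-decreasing; this is harmless for a length-decreasing system but should be stated correctly.)
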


\begin{proof}
  $1 \implies 2$: \quad Let $M = (\Gamma, \Sigma, [, ], Q, q_0, F,
  \delta)$ be a normalized labeled linearly bounded Turing machine,
  with $\Sigma \cap \Gamma = \emptyset $. As $M$ is normalized, its
  control states can be partitioned into $Q_\Sigma$ and $Q_\eps$ (see
  Definition \ref{def:norm}). Let $\Gamma' = \Gamma \cup \{[,]\}$.  We
  build a finite length-decreasing rewriting system $R$ whose
  Cayley-type graph is the transition graph of $M$.  Let the alphabet
  of $R$ be $\Delta = \Sigma \cup \Gamma' \cup (\Gamma' \times Q) \cup
  S$, where $S = \{ v_a, v_a', s_a \mid a \in \Sigma \}$ is a new set
  of symbols disjoint from $\Gamma$ and $Q$. Elements $(x,q)$ of
  $\Gamma' \times Q$ will be noted $x_q$. For convenience, for any set
  $X$, $X_\bullet$ will denote $X \cup (X \times Q_\Sigma)$, and
  $x_\bullet$ any symbol in $\{x\}_\bullet$.

  There are several important points which the rules of $R$ must
  ensure:
  \begin{enumerate}
  \item Only words of the form $(\eps \, \cup \, [_\bullet) \,
    \Gamma_\bullet^* \, (\eps \, \cup \, ]_\bullet)$, i.e.  words in
    which only external control states may occur and in which no
    symbol occurs to the left of a left bracket or to the right of a
    right bracket, should be normal forms (please be aware that
    $\Gamma_\bullet$ denotes the set $\Gamma \cup (\Gamma \times
    Q_\Sigma)$, and not $\Gamma \cup (\Gamma \times Q)$):
    \[
    x[_\bullet \rightarrow [_\bullet, \qquad ]_\bullet y \rightarrow
    ]_\bullet, \qquad s \rightarrow s
    \]
    for all $x \in \Delta,\ y \in \Delta \setminus \Sigma,\ s \in
    \Sigma \cup S \cup (\Gamma' \times Q_\eps)$.\medskip
  \item When a letter $a$ in $\Sigma$ is added to the right of an
    irreducible word $u$, one should ensure that $u$ actually
    represents a legal configuration\footnote{Note that this encoding
      of configuration differs from the one used previously, where
      control states were standalone symbols.}, i.e. that $u$ is of
    one of the forms $[_q \Gamma^*]$, $[\Gamma^* A_q \Gamma^*]$ or
    $[\Gamma^*]_q$ for $A \in \Gamma$, $q \in Q$:
    \begin{gather*}
      ]a \rightarrow v_a], \qquad ]_q a \rightarrow {v_a}' ]_q, \qquad
      A v_a \rightarrow v_a A, \qquad A_q v_a \rightarrow {v_a}' A_q,
      \\
      A{v_a}' \rightarrow {v_a}'A, \qquad {[_q} {v_a} \rightarrow [
      s_a, \qquad {[} {v_a}' \rightarrow [ s_a.
    \end{gather*}
  \item Finally, once it has been made sure that the word represents a
    legitimate LBM configuration, one should simulate an insertion
    operation followed by any number of $\eps$-transitions of the LBM:
    \[
    s_a A \rightarrow A s_a, \qquad s_a B_p \rightarrow C_q B
    \]
    for all $a \in \Sigma,\ A,B,C \in \Gamma$ and $pB \erb{a} qCB \in
    \delta$, and
    \[
    A_pC \rightarrow BC_q, \qquad CA_p \rightarrow C_qB \qquad A_pC
    \rightarrow C_q
    \]
    for all $pA \erb{\eps} qB+,\ pA \erb{\eps} qB-,\ pA \erb{\eps} q
    \in \delta$ (respectively).
  \end{enumerate}
  There is an edge $u \erb{a} v$ in the Cayley-type graph $G_R$ of $R$
  if and only if $u$ and $v$ are words representing valid
  configurations of $M$ from which no $\eps$-transition can be
  performed, i.e.  observable configurations, and there exists a
  sequence of transitions labeled by $a \varepsilon^*$ of $M$ by
  which $u$ reaches $v$.  There is a bijection between the edges and
  vertices of $G_R$ and the transition graph of $M$, hence these two
  graphs are isomorphic.

  $2 \implies 3$: \quad Let $R$ be a finite length-decreasing
  rewriting system, $G_R$ its Cayley-type graph. For each letter $a$,
  we will show that the relation
  \[
  T_a\ =\ \{ (ua,v) \mid u \era{a}{G_R} v \} = \{ (ua,v) \mid u,v \in
  \mathrm{NF}(R) \land ua R^* v\}
  \]
  is an incremental context-sensitive transduction by building a LBM
  $M_a = (\Gamma, \Sigma, [, ], Q, q_0, F, \delta)$ recognizing $T_a$.

  For every pair $(ua,v)$, $M_a$ starts in configuration $ua \# v$, and
  first has to check that $u$ is a normal form of $R$ by verifying
  that it contains no left-hand side of any rule in $R$. Second, $M_a$
  simulates the derivation of $R$ on $ua$, applying one rewriting rule
  at a time until a normal form is reached. Due to non-determinism,
  there might be unsuccessful runs, but the pair is accepted if and
  only if one run reaches configuration $v \# v$, meaning that $R$ can
  normalize $ua$ into $v$. Hence, a pair $(ua,v)$ is in $T_a$ if and
  only if $(u,a,v) \in G_R$, meaning that the transduction graph of
  $(T_a)_{a \in \Sigma}$ is isomorphic to $G_R$.

  $3 \implies 1$: \quad Let $T = (T_a)_{a \in \Sigma}$ be a finite set
  of incremental context-sensitive transductions defining a graph
  $G_T$, each $T_a$ being recognized by an LBM $M_a$. We informally
  describe a normalized labeled LBM $M$ whose transition graph $G_M$
  is isomorphic to $G_T$.

  Let $q$ be the unique external control state of $M$, $M$ should have
  a run labeled by $a \eps^*$ between configurations $q u$ and $q v$
  whenever $(u, v) \in T_a$, or equivalently whenever the word $u\#v$
  is accepted by $M_a$. This is done as follows. First, starting from
  configuration $q u$, $M$ should perform an $a$-labeled transition,
  increasing its available tape space by 1, and step into an internal
  control state. It should then guess a word $v$, and write $u\#v$ on
  its tape. Since $T_a$ is incremental, this can be done using no more
  than $|u|+1$ tape cells by writing two symbols in each cell. Then,
  $M$ simulates the LBM $M_a$ on input word $u\#v$, while keeping an
  intact copy of $v$ on the tape (this can again be done by a simple
  alphabet encoding). If the simulated run of $M_a$ succeeds, $M$
  steps into external configuration $q v$ by restoring the saved copy
  of $v$ on the tape, otherwise it loops in a non-accepting internal
  state. By this construction, there is an edge $(qu,a,qv)$ in $G_M$
  if and only if there is an edge $(u,a,v)$ in $G_T$, hence both
  graphs are isomorphic. \qed
\end{proof}

This shows that the three types of graphs presented in this section
actually all define the same class, namely that of linearly bounded
graphs. This variety of definitions will allow us to prove in a
simpler way some of the properties of linearly bounded graphs.

\section{Structural properties}
\label{sect:prop}

Now that the class of linearly bounded graphs has been defined using
three different formalisms, we can easily deduce some of their
structural properties. In particular, we look at the languages
accepted by linearly bounded graphs and some of their closure
properties. We also give some insight about the relation between
linearly bounded graphs and deterministic context-sensitive languages,
and conclude with a few logical properties. But first, we compare our
notions to related work.

\subsection{Comparison with existing work}
\label{ssec:previouswork}

We now give a precise comparison of linearly bounded graphs with the
restriction of Turing graphs (\cite{Caucal03tm}) to the linearly
bounded case, and with the configuration graphs considered in
\cite{Knapik99}.

\subsubsection{Configuration graphs}

In \cite{Knapik99}, the configuration graphs of a class of offline
linearly bounded machines very similar to our labeled LBM are
considered. However, they include in their definition a restriction to
the set of configurations reachable from the initial configuration.
Therefore, their class of configuration graphs is incomparable to
ours.  As we will see in Prop.~\ref{prop:restriction}, the linearly
bounded graphs are closed under restriction to the set of vertices
reachable from a given vertex. Hence, it is not necessary to impose
this restriction directly in the definition of the configuration
graph.

Apart from this restriction, our class of configuration graphs
coincides with the configuration graphs of \cite{Knapik99} and with
those of \cite{Caucal03tm} in the case of linearly bounded Turing
machines.

\subsubsection{Transition graphs}

Knapik and Payet do not consider transition graphs: instead of
characterizing the $\eps$-closure of configuration graphs, they prove
a closure property of this class up to weak bisimulation
\cite{Milner89}, which is not a structural characterization.
Nevertheless, it is straightforward to prove that their results can be
extended to the class of transition graphs considered up to
isomorphism, as mentioned in Sect. \ref{sect:prop}.

In \cite{Caucal03tm}, Caucal defines the transition graphs of Turing
machines from their configuration graphs using the very general notion
of $\eps$-closure: wherever there is a path labeled by $\eps^* a
\eps^*$ in the configuration graph, there is an edge labeled by $a$ in
its $\eps$-closure, for every letter $a$. Furthermore, the definition
allows the restriction to an arbitrary rational set of vertices
(potentially the whole set of vertices). Figure \ref{fig:eps}
illustrates on a small configuration graph the difference between
these two approaches.

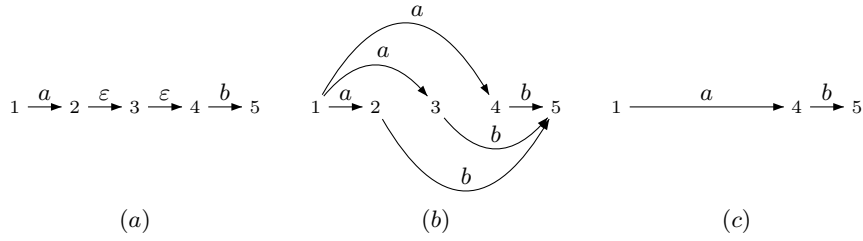
\begin{figure}
  \begin{center}
    \unitlength=0.8mm
  \begin{picture}(140,37)
    \gasset{Nframe=n,Nadjust=wh,Nadjustdist=1.3}

    \put(0,20){
      \node(e)(0,0){\scriptsize 1}
      \node(a)(10,0){\scriptsize 2}
      \node(ae)(20,0){\scriptsize 3}
      \node(aee)(30,0){\scriptsize 4}
      \node(aeeb)(40,0){\scriptsize 5}

      \node(legende)(20,-19){$(a)$}

      \drawedge(e,a){$a$}
      \drawedge(a,ae){$\eps$}
      \drawedge(ae,aee){$\eps$}
      \drawedge(aee,aeeb){$b$}
    }

    \put(50,20){
      \node(e)(0,0){\scriptsize 1}
      \node(a)(10,0){\scriptsize 2}
      \node(ae)(20,0){\scriptsize 3}
      \node(aee)(30,0){\scriptsize 4}
      \node(aeeb)(40,0){\scriptsize 5}

      \node(legende)(20,-19){$(b)$}

      \drawedge(e,a){$a$}
      \drawedge[ELpos=55,curvedepth=7](e,ae){$a$}
      \drawedge[ELpos=55,curvedepth=14](e,aee){$a$}
      \drawedge(aee,aeeb){$b$}
      \drawedge[curvedepth=-7](ae,aeeb){$b$}
      \drawedge[curvedepth=-14](a,aeeb){$b$}
    }

    \put(100,20){
      \node(e)(0,0){\scriptsize 1}
      \node(aee)(30,0){\scriptsize 4}
      \node(aeeb)(40,0){\scriptsize 5}

      \node(legende)(20,-19){$(c)$}

      \drawedge(e,aee){$a$}
      \drawedge(aee,aeeb){$b$}
    }
  \end{picture}
  \end{center}
  \caption{(a) Normalized LLBM configuration graph. (b) Transition
    graph of (a) as defined in \cite{Caucal03tm} (with no restriction
    on vertices). (c) Linearly bounded graph associated to (a).}
  \label{fig:eps}
\end{figure}

Our approach has two main advantages. First, the $\eps$-closure
operation as defined by Caucal may give rise to spurious
non-determinism in the obtained graphs. This additional complexity is
artificial and is eliminated using our definition. Furthermore, our
notion is purely structural, and does not rely on the naming of
vertices. But interestingly, we can show that both classes still
coincide up to isomorphism: for every labeled linearly bounded
machine $M$ and rational set $R$, one can define a machine $M'$ such
that the transition graph of $M$ following Caucal's definition with
respect to $R$ is isomorphic to the linearly bounded graph
corresponding to $M'$ in our framework.

\begin{proposition}
  The class of linearly bounded graphs coincides, up to isomorphism,
  with the restriction of Turing graphs to the (uniform) linearly
  bounded case.
\end{proposition}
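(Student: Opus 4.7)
The plan is to establish both inclusions up to isomorphism. For the direction from linearly bounded graphs to Caucal's Turing graphs, I would take a normalized LLBM $M$ and observe that $G_M$ is reproduced by Caucal's construction applied to the same machine $M$ together with the rational restriction $R = C_\Sigma$. Indeed, $C_\Sigma$ is described by the regular expression $[\Gamma^* Q_\Sigma \Gamma^*]$ and is therefore rational, and by normalization no $\eps$-transition ever leaves an external configuration. Consequently, for $c \in C_\Sigma$ the composition $\era{\eps^* a \eps^*}{M}$ collapses to $\era{a\eps^*}{M}$, and restricting to $C_\Sigma \times C_\Sigma$ gives back exactly $G_M$ as defined in our setting.

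For the converse direction, given a labeled LBM $M$ and a rational set $R \subseteq C_M$, I would exploit the transduction characterization of Theorem \ref{thm:equiv}. Set $T_a = \{(c, c') \in R \times R \mid c \era{\eps^* a \eps^*}{M} c'\}$ for each $a \in \Sigma$, and show that each $T_a$ is a $1$-incremental context-sensitive transduction. The incremental bound $|c'| \leq |c|+1$ is immediate from the rule structure of labeled LBMs: $\eps$-rules only rewrite, delete, or move the head, while a single $a$-rule appearing in the path inserts exactly one cell. Once the $T_a$ are shown to be context-sensitive, Theorem \ref{thm:equiv} yields a linearly bounded graph isomorphic to the transduction graph of the family $(T_a)_{a \in \Sigma}$, and this transduction graph coincides by construction with Caucal's transition graph of $M$ relative to $R$.

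The main obstacle is proving that the language $L_a = \{c \# c' \mid (c,c') \in T_a\}$ is context-sensitive. I would build an LBM that, on input $c \# c'$ of length $n = |c| + |c'| + 1$, first verifies $c, c' \in R$ using the finite automaton for $R$, then simulates $M$ non-deterministically from $c$ along some $\eps^* a \eps^*$-path and accepts iff the resulting configuration equals $c'$. The delicate point is space accounting: intermediate configurations may reach length $|c|+1$, while the copy of $c'$ must be preserved for the final comparison. This is resolved by a standard alphabet-doubling encoding that stores two simulation symbols per physical cell, so the effective working space $2n$ comfortably accommodates both the simulation and the stored copy of $c'$. Termination of the simulation branches is guaranteed by invoking Proposition \ref{prop:termlbm} on the LBM underlying $M$, which closes the argument.
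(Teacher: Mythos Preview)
Your proposal is correct and follows essentially the same strategy as the paper. The only minor difference is that for the converse direction the paper first observes that the unrestricted relation $\era{\eps^* a \eps^*}{M}$ on $C_M \times C_M$ is a $1$-incremental context-sensitive transduction and then invokes Proposition~\ref{prop:restriction} to handle the restriction to the rational set $R$, whereas you fold the restriction $R \times R$ directly into the definition of $T_a$ and verify context-sensitivity of the resulting language by an explicit LBM construction; the paper's route is slightly more modular, but yours is equally valid and arguably more self-contained.
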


\begin{proof}
  Let $M$ be a normalized LLBM, $C$ its configuration graph, $R$ its
  (rational) set of external configurations and $G$ its transition
  graph defined as
  \[
  G = \{ c \erb{a} c' \mid c,c' \in R \land c \erb{a \eps^*} c' \in C
  \}.
  \]
  Consider the graph $G'$ defined as the restriction to $R$ of the
  $\eps$-closure of $C$.
  \[
  G' = \{ c \erb{a} c' \mid c,c' \in R \land c \erb{\eps^* a \eps^*}
  c' \in C \}.
  \]
  The graphs $G$ and $G'$ are equal, since by definition of external
  vertices, they have no outgoing $\eps$-transitions.
  
  Conversely, let $M$ be a (not necessarily normalized) LLBM, $C$ its
  configuration graph, and $R$ a rational set of configurations of
  $M$. The binary relation $\erb{\eps^* a \eps^*}$ in $C \times C$ is
  a 1-incremental context-sensitive transduction. Hence the graph $G$
  defined as the restriction to $R$ (see Prop. \ref{prop:restriction})
  of the $\eps$-closure of $C$ is a linearly bounded graph. \qed
\end{proof}

\subsection{Languages}

It is quite obvious that the language of the transition graph of a
LLBM $M$ between the vertex representing its initial configuration and
the set of vertices representing its final configurations is the
language of $M$. In fact, the choice of initial and final vertices has
no importance in terms of the class of languages one obtains.

\begin{proposition}
  \label{prop:trace}
  The languages of linearly bounded graphs between an initial vertex
  $i$ and a finite set $F$ of final vertices are the context-sensitive
  languages.
\end{proposition}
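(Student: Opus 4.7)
The plan is to prove the two inclusions separately, using the transducer characterization of linearly bounded graphs from Thm.~\ref{thm:equiv} for one direction and the existence of a LLBM accepting any context-sensitive language (Prop.~\ref{prop:eqlbmllbm}) for the other.

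For the inclusion ``linearly bounded graph languages $\subseteq$ context-sensitive,'' let $G$ be a linearly bounded graph, $i$ a vertex and $F$ a finite set of vertices. By Thm.~\ref{thm:equiv}, $G$ is isomorphic to the transduction graph of a finite family $(T_a)_{a \in \Sigma}$ of $1$-incremental context-sensitive transductions. I would build an LBM $N$ that, on input $w = a_1 \cdots a_n$, maintains on its work tape a word $v_k$ representing a vertex of $G$, starting with $v_0 = i$ (a constant hard-coded in the machine). Upon reading the $k$-th input letter $a_k$, $N$ non-deterministically guesses $v_k$ and writes $v_{k-1}\#v_k$ on the work tape, then runs the context-sensitive recognizer for $\{u\#v \mid (u,v) \in T_{a_k}\}$ to verify that $(v_{k-1},v_k) \in T_{a_k}$; if it accepts, $N$ erases $v_{k-1}$ and $\#$ to prepare for the next step. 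After consuming $w$, $N$ checks whether $v_n \in F$, which is straightforward since $F$ is finite and thus hard-codable into the control states. Incrementality gives $|v_k| \leq |i| + k$, so the maximum tape use is $O(|i|+|w|) = O(|w|)$ (absorbing the constant $|i|$ through an alphabet expansion), showing that $L(G,\{i\},F)$ is context-sensitive.

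For the reverse inclusion, let $L$ be a context-sensitive language. By Prop.~\ref{prop:eqlbmllbm}, there is a normalized LLBM $M$ accepting $L$ with initial configuration $[q_0]$ and some set of final external configurations. I would modify $M$ into $M'$ by adding, for each final state $q_f$, a collection of $\eps$-rules that perform a ``cleanup'' phase shrinking the tape to a single canonical configuration $[q_A]$, where $q_A$ is a fresh external accepting state with no outgoing edges. Concretely, upon entering $q_f$ we $\eps$-move left to the marker $[$, then sweep right deleting every intermediate cell using rules of the form $p A \erb{\eps} p$, finishing with a silent transition into $q_A$ once the head reaches $]$. After re-normalizing (which preserves the language and only affects internal states), the transition graph of $M'$ has $[q_A]$ as its unique external configuration with accepting state, and the path language between $[q_0]$ and $\{[q_A]\}$ is precisely $L$.

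The only delicate step is the cleanup construction: one must check that the deletion and head-movement rules of LLBMs suffice to compress any configuration to $[q_A]$ without accidentally creating new accepting paths, and that normalization preserves the single-vertex acceptance set. Everything else is a direct appeal to Thm.~\ref{thm:equiv} together with the standard closure of context-sensitive languages under the natural non-deterministic simulation sketched above.
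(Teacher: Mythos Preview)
Your forward direction is correct and more direct than the paper's argument. The paper instead adds a fresh loop label $\#$ on every vertex of $F$ to obtain a new linearly bounded graph $\overline G$, invokes Thm.~\ref{thm:equiv} to view $\overline G$ as the transition graph of some LLBM, and then reads off an LLBM with initial configuration $i$ (via Rem.~\ref{rem:lbminit}) whose final states are exactly those firing a $\#$-rule; Prop.~\ref{prop:eqlbmllbm} then gives context-sensitivity. Your explicit LBM simulation of the transduction graph is equally valid and arguably cleaner; the paper's route has the mild advantage that it extends verbatim to any context-sensitive (not just finite) set $F$, though your construction does so as well once one notes that the final membership test ``$v_n \in F$'' can itself be carried out in linear space.

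For the reverse direction your idea matches the paper's, but there is a normalization glitch in the way you implement it. In a normalized LLBM the final states are external and hence carry only $\Sigma$-transitions; attaching cleanup $\eps$-rules \emph{from} $q_f$ makes $q_f$ carry both kinds of transitions and breaks normalization. Your appeal to ``re-normalizing (which \ldots\ only affects internal states)'' does not save this: the only normalization procedure provided in the paper rebuilds the machine from scratch via the construction of Prop.~\ref{prop:eqlbmllbm}, and that construction again produces an \emph{infinite} set of accepting configurations $[wq_A]$, so the single-final-vertex property you just engineered is lost. The paper sidesteps the issue by branching \emph{before} the accepting state is entered: it intercepts the $\eps$-transitions that lead into $q_A$ and lets them non-deterministically divert into the cleanup phase, so $q_A$ keeps its external status and the resulting machine is already normalized, with the fresh configuration $[q_f]$ as its unique external accepting vertex. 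Adjusting your cleanup to start on the last $\eps$-step into the final state, rather than from it, closes the gap.
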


\begin{proof}
  Let $G$ be a linearly bounded graph labeled by $\Sigma$ defined by a
  family $(T_a)_{a \in \Sigma}$ of $1$-incremental context-sensitive
  transductions. We will prove that $L(G,i,F)$ is context-sensitive
  even if $F$ is a context-sensitive set.
  
  Let $\# \not\in \Sigma$ be a new symbol, we consider the graph
  $\overline{G}$ obtained from $G$ by adding a loop labeled by $\#$
  on each vertex in $F$. Obviously, $\overline{G}$ is a linearly
  bounded graph as $\erb{\#}=\{(f,f) \;| \; f \in F \}$ is a
  $0$-incremental context-sensitive transduction. By Thm.
  \ref{thm:equiv}, $G$ is the transition graph of a LLBM $M$. Let
  $c_0$ be the configuration of $M$ corresponding to the vertex $i$ in
  $G$.  Consider the LLBM $M'$ whose initial configuration is $i$ (see
  Rem.  \ref{rem:lbminit}) and whose set of final states are the
  states that appear in the left-hand side of a $\#$-rule (without
  loss of generality, we can assume that the $\#$-rules do not depend
  on the current value of the cell). It is easy check that $M'$
  accepts $L(G,i,F)$ and by Prop. \ref{prop:eqlbmllbm}, $L(G,i,F)$ is
  context-sensitive.

  For the converse implication, let $L$ be a context-sensitive
  language and $M$ the normalized LLBM constructed in Prop.
  \ref{prop:eqlbmllbm} accepting $L$. The transition graph of $M$
  traces $L$ from the initial configuration to an infinite set of
  configurations (i.e the configurations with state $q_A$). Therefore,
  we need to adapt the construction of Prop. \ref{prop:eqlbmllbm}. We
  add a new state $q_f$ to the machine $M$ and whenever $M$ makes an
  $\eps$-transition to enter state $q_A$, we add the ability to enter
  configuration $[q_f]$ by a sequence of $\eps$-transitions.  As
  $[q_f]$ has no out-going edges, it is an external configuration and
  it is easy to see that $L=L(G,c_0,[q_f])$. \qed
\end{proof}

\begin{remark}
  When a linearly bounded graph is explicitly seen as the transition
  graph of a LLBM, as a Cayley-type graph or as a transduction graph,
  i.e. when the naming of its vertices is fixed, considering
  context-sensitive sets of final vertices does not increase the
  accepted class of languages.
\end{remark}

\subsection{Closure properties}

Linearly bounded graphs enjoy several good properties, which will be
especially important when comparing them to other classes of graphs
related to LBMs or context-sensitive languages (see Section
\ref{sect:comp}).

\begin{proposition}
\label{prop:restriction}
  The class of linearly bounded graphs is closed under restriction to
  reachable vertices from any vertex and under restriction to a
  context-sensitive set of vertices.
\end{proposition}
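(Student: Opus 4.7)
The plan is to use the incremental context-sensitive transduction characterization of linearly bounded graphs established in Theorem~\ref{thm:equiv}, since this representation makes subgraph-restriction properties amenable to language-theoretic closure arguments.

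For the restriction to a context-sensitive set $V$ of vertices, I would represent $G$ by a family of $1$-incremental context-sensitive transductions $(T_a)_{a \in \Sigma}$ over an alphabet $\Gamma$ and define $T'_a = T_a \cap (V \times V)$ for each $a$. Each $T'_a$ is still $1$-incremental as a sub-relation of $T_a$. Its defining language $\{u \# v \mid (u,v) \in T'_a\}$ is the intersection of the defining language of $T_a$ with $V \cdot \{\#\} \cdot \Gamma^*$ and $\Gamma^* \cdot \{\#\} \cdot V$; all three languages are context-sensitive (the first by assumption, the other two because $V$ is), so by Theorem~\ref{thm:csb} the intersection is context-sensitive. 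Each $T'_a$ is therefore a $1$-incremental context-sensitive transduction, and the transduction graph of $(T'_a)_{a \in \Sigma}$ is precisely the restriction of $G$ to $V$.

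For the restriction to the set $R$ of vertices reachable from a vertex $v_0$, the plan is to reduce to the previous case by showing that $R$ itself is context-sensitive. Let $M$ be an LLBM whose transition graph is $G$ and let $c_0$ be the configuration corresponding to $v_0$. A candidate $w$ belongs to $R$ exactly when $c_0 \era{*}{M} c_w$ in the configuration graph of $M$, where $c_w$ encodes $w$. I would construct a linearly bounded machine recognizing $R$ by performing a nondeterministic reachability search in $M$'s configuration graph, guessing the path one configuration at a time while storing only the current configuration together with a step counter for cycle detection.

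The main obstacle is controlling the space used by this search, since intermediate configurations along a reaching path may temporarily exceed $|w|$ in size; the $1$-incremental property only bounds consecutive external-configuration sizes. I would address this by exploiting that $M$ has at most $k^\ell$ distinct configurations of size $\leq \ell$, so that nondeterministic reachability between configurations of size $\leq \ell$ lies in $\mathrm{NSPACE}[\ell]$, and arguing that any reaching path can be chosen so that its intermediate configurations have size bounded linearly in $|w|$, keeping the search within $\mathrm{NSPACE}[|w|]$. Once $R$ is shown to be context-sensitive, applying the first part with $V = R$ yields the desired closure.
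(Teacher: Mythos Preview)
Your treatment of the restriction to a context-sensitive set $V$ is correct and matches the paper's argument.

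The second part, however, has a genuine gap. Your plan is to show that the reachable set $R$ is itself context-sensitive and then invoke the first part, and you explicitly rely on the claim that ``any reaching path can be chosen so that its intermediate configurations have size bounded linearly in $|w|$''. This claim is false in general, and the paper gives an explicit counterexample. Take a language $L \in \mathrm{NSPACE}[2^n] \setminus \mathrm{NSPACE}[n]$ and the linearly bounded graph with edges $\sharp u \erb{x} \sharp ux$ for $x \in \{a,b\}$, $\sharp u \sharp^n \erb{\sharp} \sharp u \sharp^{n+1}$ for $n+1 \leq 2^{|u|}$, and $\sharp u \sharp^{2^{|u|}} \erb{\sharp} u$ for $u \in L$. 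To reach the vertex $u \in \{a,b\}^*$ from $\sharp$, every path must pass through $\sharp u \sharp^{2^{|u|}}$, whose length is exponential in $|u|$; there is no linearly-bounded alternative. Hence the set of vertices reachable from $\sharp$, intersected with $\{a,b\}^*$, is exactly $L$, which by construction is not context-sensitive. So $R$ need not be context-sensitive, and your reduction to the first part cannot work.

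The paper's fix is not to bound the search but to \emph{change the vertex naming} so as to record, in each vertex, the minimal space needed to witness its reachability. Concretely, one replaces a reachable vertex $u$ by $u\square^n$, where $n$ is the least integer such that $u_0 \Rightarrow_{|u|+n} u$ (reachability using only vertices of length at most $|u|+n$). The set $V_H$ of such padded names \emph{is} context-sensitive, because checking ``$u_0 \Rightarrow_{|u|+n} u$ and $u_0 \not\Rightarrow_{|u|+n-1} u$'' requires only space linear in $|u|+n$, and the edge relations on the padded names remain $1$-incremental since $u \erb{a} v$ with $u_0 \Rightarrow_{|u|+n} u$ implies $u_0 \Rightarrow_{|u|+n+1} v$. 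The resulting graph $H$ is isomorphic to the restriction $G'$, which is what closure up to isomorphism requires.
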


\begin{proof}
  We first prove the closure under restriction to a context-sensitive
  set of vertices.  Let $G$ be a linearly bounded graph defined by a
  family of $(T_a)_{a \in \Sigma}$ of $1$-incremental
  context-sensitive transduction. By Prop.~\ref{prop:bool}, the
  transduction $T_a \cap \{(u,v) \;|\; u \in F, v \in F \; \text{and}
  |v| \leq |u|+1 \}$ is $1$-incremental context-sensitive. It follows
  that $G|_{F}$ is a linearly bounded graph.

  Let us now prove the closure of linearly bounded graphs under
  restriction to reachable vertices. Let $G$ be a linearly bounded
  graph given by a family of $1$-incremental context-sensitive
  transductions $(T_a)_{a \in \Sigma}$ and $u_0$ a vertex in $V_G$.
  Each transduction $T_a$ for $a \in \Sigma$ is accepted by a LBM
  $M_a$.  Consider the graph $G'$ obtained by restricting $G$ to its
  set of vertices reachable from $u_0$. Note that this set is in
  general not context-sensitive, as illustrated by the following
  example.

  \begin{example}
    \label{ex:restr}
    Consider a language $L \subseteq \{a,b\}^*$ that is recognizable
    in exponential space but not in linear space (i.e. $L \in
    \mathrm{NSPACE}[2^n] \setminus \mathrm{NSPACE}[n]$).  The
    existence of such a language is guaranteed by
    Thm.~\ref{thm:hierar}.  Consider the graph $G$ labeled by
    $\{a,b,\sharp\}$ and defined by:
    \begin{itemize}
    \item $\sharp u \era{x}{G} \sharp ux$ for all $u \in \{a,b\}^*$
      and $x \in \{a,b\}$,
    \item $\sharp u \sharp^n \era{\sharp}{G} \sharp u \sharp^{n+1}$
      for all $u \in \{a,b\}^*$ and $n+1 \leq 2^{|u|}$,
    \item and $\sharp u \sharp^{2^{|u|}} \era{\sharp}{G} u$ for all $u
      \in L$.
    \end{itemize}
    The edge relations of this graph are 1-incremental
    context-sensitive transductions, hence $G$ is linearly bounded.
    Figure \ref{fig:restriction} illustrates the construction of this
    graph. The set of vertices reachable from $\sharp$ restricted to
    $\{a,b\}^*$ is precisely the language $L$ which by definition is
    not context-sensitive.
  \end{example}

  \begin{figure}
    \begin{center}
      \begin{picture}(96,44)
        \gasset{Nframe=n,Nadjustdist=1,Nadjust=wh}
          
        \put(16,44){
          \node(ge)(0,0){$\sharp$}
          \node(g0)(-10,-16){$\sharp a$}
          \node(g1)(10,-16){$\sharp b$}
          \node(g1d)(30,-16){$\sharp b \sharp$}
          \node(g1dd)(50,-16){$\sharp b \sharp^2$}
          \node(g00)(-17.5,-28){}
          \node(g01)(-2.5,-28){}
          \node(g10)(0,-32){}
          \node(g11)(20,-32){$\sharp bb$}
          \node(g11d)(32,-32){$\sharp bb\sharp$}
          \node(g11dd)(44,-32){$\sharp bb\sharp^2$}
          \node(g11ddd)(56,-32){$\sharp bb\sharp^3$}
          \node(g11dddd)(68,-32){$\sharp bb\sharp^4$}
          \node(g11ddddd)(80,-32){$bb$}
          \node(g100)(-7.5,-44){}
          \node(g101)(7.5,-44){}
          \node(g110)(12.5,-44){}
          \node(g111)(27.5,-44){}

          \drawedge[ELside=r](ge,g0){$a$}
          \drawedge(ge,g1){$b$}
          \drawedge[ELside=r](g1,g10){$b$}
          \drawedge(g1,g11){$b$}

          \drawedge[curvedepth=2](g1,g1d){$\diese$}
          \drawedge[curvedepth=2](g1d,g1dd){$\diese$}
          \drawedge[curvedepth=2](g11,g11d){$\diese$}
          \drawedge[curvedepth=2](g11d,g11dd){$\diese$}
          \drawedge[curvedepth=2](g11dd,g11ddd){$\diese$}
          \drawedge[curvedepth=2](g11ddd,g11dddd){$\diese$}
          \drawedge[curvedepth=2](g11dddd,g11ddddd){$\diese$}

          \gasset{AHnb=0,dash={0.3 1}0}
          \drawedge(g0,g00){}
          \drawedge(g0,g01){}
          \drawedge(g10,g100){}
          \drawedge(g10,g101){}
          \drawedge(g11,g110){}
          \drawedge(g11,g111){}
        }
      \end{picture}
    \end{center}
    \caption{The graph $G$ for some $L$ containing $bb$ but not $b$.}
    \label{fig:restriction}
  \end{figure}
  
  Therefore, we need to adopt a different naming of the vertices of
  $G'$. We construct a graph $H$ isomorphic to $G'$ which is defined
  by a family of $1$-incremental context-sensitive transductions. For
  any pair of vertices $u$ and $v \in V_G$, we write $u \Rightarrow_n
  v$ if there exists a path from $u$ to $v$ in $G$ using only vertices
  of size less than or equal to $n$. Note that since $u$ and $v$ are
  considered as part of any path between $u$ and $v$, then necessarily
  $n \geq \max(|u|,|v|)$. The set of vertices of $H$ is $V_H=\{
  u\square^n \mid u \in \Gamma^*, u_0 \Rightarrow_{|u|+n} u \;
  \text{and}\; u_0 \not\Rightarrow_{|u|+n-1} u\}$. Note that for all
  $u \in V_{G'}$, there is a unique $n$ such that $u\square^n \in
  V_H$, and that $u \in V_G \setminus V_{G'} \implies u\square^n
  \not\in V_H$ for all $n$. For all $a \in \Sigma$, we take
  $u\square^n \era{a}{H} v\square^m$ iff $u\square^n \in V_H$,
  $v\square^m \in V_H$ and $u \era{a}{G} v$.

  The graph $H$ is isomorphic to $G'$.  It is easy to see that the
  mapping $\rho \in V_{G'} \mapsto V_H$ associating to every $u \in
  V_{G'}$ the unique $u\square^n \in V_H$ is a bijection and a graph
  morphism from $G'$ to $H$. We first prove that $V_H$ is
  context-sensitive and that for all $a \in \Sigma$, $\era{a}{H}$ is a
  $1$-incremental context-sensitive transduction. It will then follow
  that $H$ and $G'$ are linearly bounded.
  
  \begin{enumerate}
  \item Consider the language $L^+$ equal to $\{ u\square^n \;|\; u_0
    \Rightarrow_{|u|+n} u, \; n \in \mathbb{N} \}$ and the language
    $L^-$ equal to $\{ u\square^n \;|\; u_0 \Rightarrow_{|u|+n-1} u,
    \; n \in \mathbb{N}\}$, it is easy to check that $V_H=L^+ \cap
    \overline{L^-}$. If we prove that $L^+$ and $L^-$ are
    context-sensitive languages, it follows by Thm.~\ref{thm:csb} that
    $V_H$ is a context-sensitive language.
 
    We construct a LBM $M$ accepting $L^+$. When starting with
    $u\square^n$, $M$ guesses a path from $u_0$ to $u$ with vertices
    of length at most $|u|+n$.  It starts with $u_0$ and guesses a
    word $u_1$ of length at most $|u|+n$ then simulates one of the
    $M_a$'s to check that $u_0 \era{a}{G} u_1$. Finally, $M$ replaces
    $u_0$ by $u_1$ and iterates the process until $u_i$ is equal to
    $u$. This can be done using at most $3(|u|+n)$ cells. A similar
    construction allows to recognize $L^-$.
  \item For all $a \in \Sigma$, $u\square^n \era{a}{H} v\square^m$
    implies that $|v| + m \leq |u| + n + 1$, because $u_0
    \Rightarrow_{|u|+n} u$, $u \era{a}{G} v$ and $\era{a}{G}$ is
    $1$-incremental. Therefore, $\era{a}{H}$ is $1$-incremental.

    \noindent The language $L_a=\{u\square^n \sharp v\square^m \;|\; u
    \era{a}{H} v \}$ is accepted by a LBM that checks that
    $u\square^n$ and $v\square^m$ belong to $V_H$ (by simulating a
    machine accepting $V_H$) and that $u \era{a}{G} v$. Hence
    $\era{a}{H}$ is a $1$-incremental context-sensitive transduction.
    \qed
  \end{enumerate}
\end{proof}

\begin{example}
  This last construction applied to the graph of Ex. \ref{ex:restr}
  gives the graph $H$ defined by:
  \begin{itemize}
  \item $\sharp u \era{x}{G} \sharp ux$ for all $u \in \{a,b\}^*$ and
    $x \in \{a,b\}$,
  \item $\sharp u \sharp^n \era{\sharp}{G} \sharp u \sharp^{n+1}$ for
    all $u \in \{a,b\}^*$ and $n+1 \leq 2^{|u|}$,
  \item and $\sharp u \sharp^{2^{|u|}} \era{\sharp}{G}
    u\square^{2^{|u|}+1}$ for all $u \in L$.
  \end{itemize}
\end{example}

Since all rational languages are context-sensitive, linearly bounded
graphs are also closed under restriction to a rational set of
vertices. This shows that it is not necessary to allow arbitrary
rational restrictions in the definition of transition graphs of
linearly bounded machines, since such a restriction can be directly
applied to the set of external configurations of a machine.

Finally, we can extend the result of \cite{Knapik99} to the class of
linearly bounded graphs, and show that they are closed under
synchronized products \cite{Arnold82}. The synchronized product of two
graphs $G$ and $G'$ with labels in $\Sigma$ and $\Sigma'$ with respect
to a set of constraints $C \subseteq \Sigma \times \Sigma'$ is the
graph
\[
G \otimes G' = \{ (u,v) \erb{(a,b)} (u',v') \mid u \era{a}{G} u' \land
v \era{b}{G} v' \land (a,b) \in C \}.
\]

\begin{proposition}
  The class of linearly bounded graphs is closed under synchronized
  products, up to isomorphism.
\end{proposition}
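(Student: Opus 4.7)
My plan is to leverage the characterization of linearly bounded graphs as incremental context-sensitive transduction graphs given by Thm.~\ref{thm:equiv}, which is much more convenient for product constructions than either the LLBM or the rewriting-system definition. So I assume $G$ is defined over $\Gamma$ by a family of $1$-incremental transductions $(T_a)_{a\in\Sigma}$ and $G'$ over $\Gamma'$ by $(T'_b)_{b\in\Sigma'}$, with each $T_a$ recognized by an LBM $M_a$ and each $T'_b$ by an LBM $M'_b$.

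The key idea is to encode a vertex $(u,v)$ of $G\otimes G'$ as the word $u\sharp v$ over the alphabet $\Gamma\cup\Gamma'\cup\{\sharp\}$, where $\sharp$ is a fresh separator. For each constraint $(a,b)\in C$, I would define a transduction
\[
  T_{(a,b)}\ =\ \{\,(u\sharp v,\ u'\sharp v')\ \mid\ (u,u')\in T_a\ \land\ (v,v')\in T'_b\,\}.
\]
Recognition by an LBM is straightforward: given an input of the form $u\sharp v \,\natural\, u'\sharp v'$ (where $\natural$ is the transduction separator), the machine locates the two $\sharp$ markers, extracts the pairs $(u,u')$ and $(v,v')$ by shuffling their encodings on the tape with a refined alphabet, and then simulates $M_a$ and $M'_b$ in parallel on their respective pairs. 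Since both simulations fit within linear space and the input is already linear in the sizes of all four components, the entire check is carried out within linear space, which shows the recognized language is context-sensitive.

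For incrementality, observe that $|u'|\leq|u|+1$ and $|v'|\leq|v|+1$, so $|u'\sharp v'| \leq |u\sharp v|+2$; hence $T_{(a,b)}$ is $2$-incremental, and as noted after Thm.~\ref{thm:equiv} this can be converted to a $1$-incremental transduction up to an alphabet encoding without changing the graph up to isomorphism. The transduction graph defined by the family $(T_{(a,b)})_{(a,b)\in C}$ over the label alphabet $C$ is then, by construction, isomorphic to $G\otimes G'$ via the bijection $(u,v)\mapsto u\sharp v$ (restricted to the support, which coincides with pairs of vertices actually joined by some product edge). Applying Thm.~\ref{thm:equiv} in the reverse direction yields a LLBM whose transition graph realizes $G\otimes G'$, establishing closure under synchronized products up to isomorphism.

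I do not anticipate any serious obstacle; the only mild care needed is in setting up the LBM that simulates $M_a$ and $M'_b$ simultaneously on disjoint regions of the tape, which is a routine tape-encoding argument analogous to the one already used in the proof of Prop.~\ref{prop:eqlbmllbm}.
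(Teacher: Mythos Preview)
Your proposal is correct and follows essentially the same approach as the paper: both exploit the characterization of linearly bounded graphs as incremental context-sensitive transduction graphs (Thm.~\ref{thm:equiv}) and observe that the edge relations of the synchronized product are again incremental context-sensitive transductions. The paper's proof is in fact a one-sentence remark to this effect, so your write-up simply supplies the details (the $u\sharp v$ encoding, the LBM simulation of $M_a$ and $M'_b$, and the $2$-incrementality bound) that the paper leaves implicit.
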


\begin{proof}
  It is straightforward from this definition that if $G$ and $G'$ are
  both linearly bounded, defined for instance as context-sensitive
  transduction graphs, the binary edge relations of their synchronized
  product are also incremental context-sensitive transductions. \qed
\end{proof}

\subsection{Deterministic linearly bounded graphs}

We now consider the relations between the determinism of linearly
bounded graphs and the determinism of the linearly bounded machines
defining them. As a first remark, it is straightforward to notice that
there exist non-deter\-ministic labeled LBMs whose transition graphs are
deterministic. More precisely, any machine in which, from any given
configuration, at most one external configuration is reachable by a
partial run labeled by $\eps^*$, has a deterministic transition graph.
Figure \ref{fig:football} illustrates this fact.

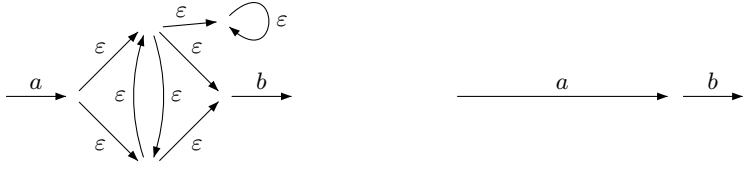
\begin{figure}
  \begin{center}
  \begin{picture}(90,25)
    \gasset{Nframe=n,Nadjust=wh,Nadjustdist=1}

    \put(0,10){
      \node(e)(0,0){}
      \node(a)(10,0){}
      \node[Nadjustdist=2](ae1)(20,10){}
      \node[Nadjustdist=2](ae2)(20,-10){}
      \node(ae3)(30,10){}
      \node(aee)(30,0){}
      \node(aeeb)(40,0){}

      \drawedge(e,a){$a$}
      \drawedge(a,ae1){$\eps$}
      \drawedge[ELside=r](a,ae2){$\eps$}
      \drawedge[sxo=-1,syo=-1](ae1,ae3){$\eps$}
      \drawedge[curvedepth=2](ae1,ae2){$\eps$}
      \drawedge[curvedepth=2](ae2,ae1){$\eps$}
      \drawloop[loopangle=0,loopdiam=5](ae3){$\eps$}
      \drawedge(ae1,aee){$\eps$}
      \drawedge[ELside=r](ae2,aee){$\eps$}
      \drawedge(aee,aeeb){$b$}
    }

    \put(60,10){
      \node(e)(0,0){}
      \node(aee)(30,0){}
      \node(aeeb)(40,0){}

      \drawedge(e,aee){$a$}
      \drawedge(aee,aeeb){$b$}
    }
  \end{picture}
  \end{center}
  \caption{Configuration graph and deterministic transition graph of a
    non-deter\-ministic labeled LBM}
  \label{fig:football}
\end{figure}

In fact, we can show that any LLBM can be transformed (while
preserving the accepted language) in order to verify this property.
Consequently, as expressed by the following proposition, all
context-sensitive languages can be accepted by a deterministic
linearly bounded graph.

\begin{proposition}
  \label{prop:trace-det}
  For every context-sensitive language $L$, there exists a
  \emph{deterministic} linearly bounded graph $G$, a vertex $i$ and a
  rational set of vertices $F$ of $G$ such that $L = L(G,\{i\},F)$.
  Moreover, $G$ can be chosen to be a tree.
\end{proposition}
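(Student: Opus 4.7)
The plan is to encode membership in $L$ into the naming of vertices of an otherwise simple tree, so that the acceptance condition collapses to a rational set of vertex names. Fix a symbol $\bot \notin \Sigma$ and define an injective marking $f : \Sigma^* \to (\Sigma \cup \{\bot\})^*$ by $f(w) = w$ if $w \in L$ and $f(w) = w\bot$ otherwise. Take $G$ to be the $\Sigma$-labeled graph with vertex set $f(\Sigma^*)$ and edges $f(w) \era{a}{G} f(wa)$ for every $w \in \Sigma^*$ and $a \in \Sigma$; set $i = f(\varepsilon)$ and $F = \Sigma^*$, which is rational.

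First I would establish the structural properties of $G$. Since $L$ and $\Sigma^* \cdot \bot$ are disjoint subsets of $(\Sigma \cup \{\bot\})^*$, the map $f$ is injective, so every non-root vertex $f(w)$ has a unique incoming edge, namely from $f(w_1 \ldots w_{|w|-1})$ labeled by $w_{|w|}$, and every vertex is reached from $i$ along the unique path labeled $w$; thus $G$ is a tree rooted at $i$. Determinism is immediate, as each vertex $f(w)$ has exactly one $a$-successor $f(wa)$ for every $a \in \Sigma$.

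Next I would show that $G$ is linearly bounded via Thm. \ref{thm:equiv}. Since $|f(w)| \leq |w| + 1$, the relation $T_a = \{(f(w), f(wa)) \mid w \in \Sigma^*\}$ is $2$-incremental, which is enough in view of the remark that $k$-incremental transduction graphs are isomorphic to $1$-incremental ones over an enlarged alphabet. To show that $\{u \# v \mid (u,v) \in T_a\}$ is context-sensitive, a linearly bounded machine can decode candidate words $w_u, w_v$ by stripping a possible trailing $\bot$ from $u$ and $v$, verify $w_v = w_u a$ by a finite-state check, and certify the marker placements by running an LBM for $L$ and for its complement (available by Thm. \ref{thm:csb}) on $w_u$ and $w_v$; the conjunction of these tests is context-sensitive. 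The language equality $L(G, \{i\}, F) = L$ is then immediate, since the unique path from $i$ labeled $w$ ends at $f(w)$, which lies in $F = \Sigma^*$ exactly when $w \in L$. The main obstacle of the statement is to honor the rationality constraint on $F$, and the marker $\bot$ is precisely the device that transfers the context-sensitive information about $L$ out of $F$ and into the vertex naming, while preserving the tree structure and keeping the transitions within the incremental context-sensitive class.
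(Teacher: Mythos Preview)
Your argument is correct and follows essentially the same idea as the paper's proof: both build the complete $\Sigma$-labeled tree, tag each vertex with the membership status of the corresponding word in $L$, and let the rational final set be those vertices carrying the ``in $L$'' tag. The only cosmetic differences are that the paper uses a prefix symbol ($A$ or $R$) rather than your optional suffix $\bot$, which keeps the transductions $1$-incremental instead of $2$-incremental, but as you note this is immaterial.
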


\begin{proof}
  Let $L \subseteq \Sigma^*$ be a context-sensitive language, we build
  a linearly boun\-ded graph $G$ labeled by $\Sigma$ whose vertices
  are words of the form $Aw$ or $Rw$, with $w \in \Sigma^*$ and $A,R
  \not\in \Sigma$, and whose edges are defined by the following
  $1$-incremental context-sensitive transductions:
  \begin{align*}
    Au \erb{a} Aua, & \quad Rv \erb{a} Ava \quad && \text{for all } u
    \in L, v \not\in L \;\text{and}\; ua, va \in L,
    \\
    Au \erb{a} Rua, & \quad Rv \erb{a} Rva \quad && \text{for all } u
    \in L, v \not\in L \;\text{and}\; ua, va \not \in L.
  \end{align*}
  The language of this deterministic graph between vertex $X$
  where $X$ is equal to $A$ if $\eps \in L$ or equal to $R$ if $\eps
  \not\in L$ and the rational set $A\Gamma^*$ is precisely $L$.
  Moreover, $G$ is a tree isomorphic to the complete infinite
  $\Sigma$-labeled tree. \qed
\end{proof}

\begin{remark}
  The previous result does not stand for a finite set of final
  vertices even if we only consider deterministic context-sensitive
  languages. Consider for example, the language $L=\{(a^nb)^* | n \in
  \mathbb{N} \}$. Suppose that there exists a deterministic graph $G$
  such that $L=L(G,i,F)$ for some finite set $F$, then there would
  exists $m \not= n$, such that $i \era{a^nb}{G} f$ and $i
  \era{a^mb}{G} f$. As $f \era{a^mb}{G} f'$ for some $f' \in F$, it
  would follow that $i \era{a^nba^mb}{G} f'$.
\end{remark}

Of course, we cannot conclude from this that the languages of
deterministic linearly bounded graphs are the deterministic
context-sensi\-tive languages, which would amount to answering the
general question raised by Kuroda \cite{Kuroda64} whether
deterministic and non-deterministic languages coincide. However, if we
only consider terminating linearly bounded machines (which have no
infinite run on any given input word) the class of transition graphs
we obtain faithfully illustrates the determinism of the languages.

\begin{remark}
  Even for terminating LLBMs, the determinism of a transition
  graph does not imply that the machine defining it is deterministic.
  Figure \ref{fig:qrt} illustrates this fact. The idea of the
  following proof is to show that any terminating LLBM whose
  transition graph is deterministic can be determinized without
  changing the structure of the graph.
\end{remark}

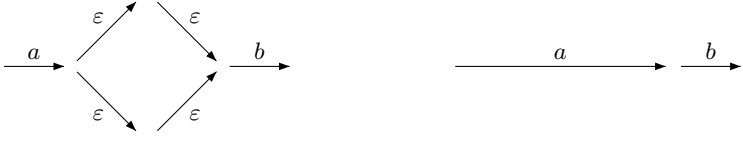
\begin{figure}
  \begin{center}
    \begin{picture}(90,25)
      \gasset{Nframe=n,Nadjust=wh,Nadjustdist=1}
      
      \put(0,10){
        \node(e)(0,0){}
        \node(a)(10,0){}
        \node[Nadjustdist=2](ae1)(20,10){}
        \node[Nadjustdist=2](ae2)(20,-10){}
        \node(aee)(30,0){}
        \node(aeeb)(40,0){}
        
        \drawedge(e,a){$a$}
        \drawedge(a,ae1){$\eps$}
        \drawedge[ELside=r](a,ae2){$\eps$}
        \drawedge(ae1,aee){$\eps$}
        \drawedge[ELside=r](ae2,aee){$\eps$}
        \drawedge(aee,aeeb){$b$}
      }
      
      \put(60,10){
        \node(e)(0,0){}
        \node(aee)(30,0){}
        \node(aeeb)(40,0){}
        
        \drawedge(e,aee){$a$}
        \drawedge(aee,aeeb){$b$}
      }
    \end{picture}
  \end{center}
  \caption{Configuration graph and deterministic transition graph of a
    non-deter\-ministic terminating labeled LBM}
  \label{fig:qrt}
\end{figure}

\begin{proposition}
\label{prop:det}
The languages of deterministic transition graphs of terminating
linearly bounded machines (between an initial vertex and a rational
set of final vertices) are the deterministic context-sensitive
languages.
\end{proposition}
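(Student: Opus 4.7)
The plan is to prove the two inclusions separately, following the same shape as Prop.~\ref{prop:trace} but being careful about determinism and termination.

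For the easy direction (every deterministic context-sensitive language $L$ is $L(G,i,F)$ for some deterministic terminating LLBM transition graph $G$, initial vertex $i$ and rational set of final vertices $F$), I would start from a deterministic terminating LBM $N$ accepting $L$, which exists because the counter construction of Prop.~\ref{prop:termlbm} visibly preserves determinism, and feed $N$ to the LBM-to-LLBM direction of the proof of Prop.~\ref{prop:eqlbmllbm}. The resulting LLBM $M$ inherits both determinism and termination from $N$, and its only external states are the accepting/rejecting states $q_A,q_R$, so that from any external configuration $[w q_X]$ reading a letter $a$ and then performing $\varepsilon$-steps deterministically lands in a single external configuration ($[wa q_A]$ or $[wa q_R]$). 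The transition graph $G$ of $M$ is therefore deterministic, and with $i = [q_0]$ and $F$ the rational set of configurations whose state is $q_A$, one obtains $L = L(G,i,F)$.

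For the converse, let $M$ be a terminating LLBM whose transition graph $G$ is deterministic, $i$ a vertex, and $F \subseteq V_G$ a rational set. I plan to construct directly a deterministic LBM $N$ recognizing $L(G,i,F)$. The central observation is that termination of $M$ together with determinism of $G$ makes any canonical choice of $\varepsilon$-rules compute the correct successor: by normalization every internal configuration has an outgoing $\varepsilon$-transition, so termination forces any $\varepsilon^*$-path from an internal configuration to eventually reach an external one, and graph determinism then forces all such external configurations reached from a common external $c$ via an $a$-rule followed by $\varepsilon^*$-steps to coincide with the unique $a$-successor of $c$ in $G$. The machine $N$ will therefore store on its work tape (initialized via Rem.~\ref{rem:lbminit}) a current configuration $c$ of $M$, initially $i$; for each input letter $a_j$ it applies to $c$ the lexicographically first applicable $a_j$-rule and then iterates the lexicographically first applicable $\varepsilon$-rule until the control state becomes external; finally, it accepts iff the resulting $c$ lies in $F$, a deterministic linear-space test since $F$ is rational. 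By construction $N$ is deterministic, terminates because $M$ does, and stays linearly bounded since configurations of $M$ reached from $i$ on input $w$ have size at most $|i|+|w|$. Hence $L(G,i,F) = L(N)$ is deterministic context-sensitive.

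I expect the main obstacle to be the very point raised by the remark before the statement and by Fig.~\ref{fig:qrt}: a non-deterministic LLBM may have a deterministic transition graph, so there is no naive way to determinize $M$ as a LLBM while keeping $G$ intact. The plan sidesteps this by not trying to produce a deterministic LLBM at all, but directly a deterministic LBM for the trace language; the canonical $\varepsilon$-path strategy is what makes this work, and it relies crucially on termination of $M$ (to prevent a lex-first strategy from looping inside the $\varepsilon$-part of the configuration graph) together with determinism of $G$ (to guarantee that wherever such a canonical path lands is actually the correct successor).
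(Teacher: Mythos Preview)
Your proposal is correct and shares the paper's key insight: a canonical (lex-first) choice among the applicable rules, combined with termination of $M$ and determinism of $G$, always computes the unique $a$-successor in $G$. The difference is only in packaging. You build a deterministic LBM for $L(G,i,F)$ directly; the paper instead prunes $\delta$ to exactly one rule per (left-hand side, label) pair, obtaining a \emph{deterministic LLBM} $N$, and proves $G_N=G_M$ by essentially the argument you gave (any maximal $a\varepsilon^*$-path in the pruned machine must end at the same external configuration as in $M$). So the obstacle you anticipated---that one cannot determinize $M$ as an LLBM while keeping $G$ intact---turns out not to be one: the paper does precisely that, and thereby obtains the extra structural fact that every such $G$ is already the transition graph of some deterministic terminating LLBM. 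Your route is marginally more direct for the bare language statement.
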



\begin{proof}
  Let $L$ be a deterministic context-sensitive language. By Prop.
  \ref{prop:eqlbmllbm}, there exists a deterministic terminating LLBM
  $M$ accepting $L$. By definition, its transition graph $G_M$ is
  deterministic, and it accepts $L$ between the vertex corresponding
  to the initial configuration and those corresponding to the final
  configurations of $M$.
  
  Conversely, let $G_M$ be the deterministic transition graph of a
  terminating LLBM $M=(\Gamma,\Sigma,[,],Q,q_0,F,\delta)$, we
  construct a deterministic LLBM $N$ whose transition graph $G_{N}$ is
  equal to $G_M$. The machine $N$ is obtained from $M$ by keeping
  exactly one rule in $\delta$ with a given left-hand side and label.
  The machine $N$ is equal to $(\Gamma,\Sigma,[,],Q,q_0,F,\delta')$
  where $\delta'$ is a subset of $\delta$:
  \begin{itemize}
  \item if $pA \erb{x} qU \in \delta$ then there exists $pA \erb{x}
    q'U' \in \delta'$,
  \item if $pA \erb{x} qU \in \delta'$ then for all $pA \erb{x} q'U'
    \in \delta'$, $q=q'$ and $U=U'$.
  \end{itemize}
  By construction, $N$ is deterministic. Moreover, the set of external
  configurations of $N$ is equal to the set of external configurations
  of $M$. It remains to prove that for any two external configurations
  $c$ and $c'$, $c \era{a}{G_M} c'$ if and only if $c \era{a}{G_{N}}
  c'$.

  If $c \era{a}{G_M} c'$, there exists a path between $c$ and $c'$
  labeled by $a \varepsilon^*$ in the configuration graph $C_M$ of
  $M$.  As $G_M$ is deterministic and terminating, any maximal path in
  $C_M$ labeled by $a \varepsilon^*$ ends in $c'$. This property still
  holds for the configuration graphs of $C_{N}$. Suppose by
  contradiction that there exists a maximal path in $C_{N}$ labeled by
  $a\varepsilon^*$ starting from $c$ and ending in $c'' \not=c'$.
  This implies the $c''$ has out-going $\epsilon$-edges in $C_M$ and
  not in $C_{N}$ which is impossible by construction of $N$.
  
  Conversely if $c \era{a}{G_N} c'$ then, by construction of $N$, we
  have $c \era{a}{G_M} c'$. Hence, it follows that $G_M=G_N$. \qed
\end{proof}

\begin{remark}
  Other equivalent definitions of deterministic transition graphs of
  terminating labeled LBMs can be given: they coincide with the
  Cayley-type graphs of length-decreasing rewriting systems with
  unique normal forms\footnote{A rewriting system has unique normal
    forms if any word can be derived into at most one normal form.},
  and to the graphs of deterministic terminating incremental
  context-sensitive transductions. The intuitive reason behind this
  equivalence is that, in every such case, for each label the
  corresponding edge relation of the graph is a partial function from
  vertices to vertices.
\end{remark}

\begin{remark}
  One can not use the previous constructions to determinize any LLBM.
  Indeed, the construction from Prop. \ref{prop:trace-det} produces
  non terminating machines in general, and the construction from Prop.
  \ref{prop:termlbm} does not preserve the structure of a machine's
  transition graph.
\end{remark}

\subsection{Logical properties}

To conclude this section on properties of linearly bounded graphs, we
investigate the decidability of the first-order theory of
configuration graphs of LLBMs and linearly bounded graphs. Due to the
high expressive power of the model considered, only local properties
expressed in first-order logic can be checked on LLBM configuration
graphs.

\begin{proposition}
  Configuration graphs of linearly bounded Turing machines have a
  decidable first-order theory.
\end{proposition}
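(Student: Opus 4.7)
The plan is to present the configuration graph of an LBM $M$ as an \emph{automatic structure}, and then invoke the classical theorem of Hodgson (later rediscovered by Khoussainov and Nerode) that the first-order theory of any automatic structure is uniformly decidable.

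First, I would describe the vertex set as a regular language over the extended alphabet $\Delta = \Gamma \cup Q \cup \{[,]\}$: the set $C_M = \{ uqv \mid uv \in [\Gamma^*],\ q \in Q,\ v \neq \eps \}$ is recognized by a finite automaton checking that the word begins with $[$, ends with $]$, contains exactly one symbol from $Q$, and that this symbol is not the last letter. Next, I would exhibit the transition relation $\to_M$ as an automatic binary relation: since every rule of $M$ is length-preserving and affects only a window of at most two adjacent cells around the head, a synchronous two-tape automaton reading a pair $(c_1, c_2)$ position by position can verify that $c_1$ and $c_2$ coincide outside a sliding constant-size window around the two state symbols, and that the contents of the window correspond to a single application of some rule in $\delta$.

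Having presented $(C_M, \to_M)$ as an automatic structure, I would conclude by recalling the standard translation from first-order formulas to synchronous multi-tape automata, which uses closure of regular languages under Boolean operations and projection to capture connectives and quantifiers. Each closed formula $\varphi$ is thereby associated to a finite automaton whose emptiness problem is decidable, giving a decision procedure for the whole first-order theory of $(C_M, \to_M)$.

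The only delicate point is the one-cell shift of the state symbol between $c_1$ and $c_2$ when simulating head-moving rules; this is absorbed by a constant amount of look-ahead in the synchronous automaton, and does not pose a genuine obstacle. Beyond this bookkeeping, the argument is entirely routine, and it is worth noting that the proof does not require any of the more subtle locality techniques (such as Gaifman's theorem) that would be needed to handle the full linearly bounded \emph{transition} graph with its $\eps$-collapsed edges.
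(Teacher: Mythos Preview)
Your proof is correct and follows exactly the route the paper takes: observe that the configuration graph is a synchronized rational graph (equivalently, an automatic structure), and invoke the Khoussainov--Nerode / Blumensath decidability result for the first-order theory of such structures. The paper states this in one line, while you spell out the synchronous automaton; note only that the paper formulates the claim for \emph{labeled} LBMs, whose insertion and deletion rules change the configuration length by one, but the edge relation is still synchronized, so your argument carries over without change.
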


\begin{proof}
  This is a direct consequence of the fact that configuration graphs
  of LLBMs are synchronized rational graphs, which have a decidable
  first-order theory \cite{Blumensath00,KhoussainovN94}. \qed
\end{proof}

However, as remarked by \cite{Knapik99}, there exists no algorithm
which, given a LLBM $M$ and a first-order sentence $\phi$, decides
whether the transition graph of $M$ satisfies $\phi$. This statement
can be strengthened to the following proposition.

\begin{proposition}
  \label{prop:lbgfo}
  There exists a linearly bounded graph with an undecidable
  first-order theory.
\end{proposition}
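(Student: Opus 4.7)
The plan is to exhibit a linearly bounded graph $G$ in which the standard model of arithmetic $(\mathbb{N}, +, \cdot)$ is first-order interpretable. Since $\mathrm{Th}(\mathbb{N}, +, \cdot)$ is undecidable by Church's theorem, undecidability of $\mathrm{Th}(G)$ will follow immediately.

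First I would take $G$ to have vertices over the alphabet $\Gamma = \{1, \sharp, N, S, P\}$ of two kinds: \emph{number vertices} $1^n$ for $n \geq 0$ (representing $n \in \mathbb{N}$), and \emph{witness vertices} of the form $1^a \sharp 1^b \sharp 1^c \sharp \tau$ with $\tau \in \{S, P\}$, subject to the arithmetic constraint $a + b = c$ when $\tau = S$ and $a \cdot b = c$ when $\tau = P$. Over the label set $\Sigma = \{N, S, P, \pi_1, \pi_2, \pi_3\}$, the edges of $G$ consist of a self-loop labeled $N$ at each number vertex, a self-loop labeled $S$ (resp. $P$) at each sum (resp. product) witness, and three edges $1^a \sharp 1^b \sharp 1^c \sharp \tau \erb{\pi_i} 1^{x_i}$ projecting each witness onto its $i$-th operand, with $x_1 = a$, $x_2 = b$, $x_3 = c$.

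To see that $G$ is a linearly bounded graph, I would show via Theorem~\ref{thm:equiv} that each of the six edge relations is a $1$-incremental context-sensitive transduction. The three loop relations are $0$-incremental, and the three projection relations strictly decrease vertex length, so the incremental bound is met. Each is decidable in linear space: checking that a word is a witness reduces to verifying the unary arithmetic constraint $a + b = c$ or $a \cdot b = c$, which trivially fits on a tape of size $O(|a| + |b| + |c|)$, and the loops and projections are then simple pattern-matching conditions on the resulting vertex shapes.

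Finally, using the formulas
\[
\mathrm{Num}(x) \equiv x \erb{N} x, \qquad \mathrm{Add}(a,b,c) \equiv \exists v\,\bigl(v \erb{S} v \wedge v \erb{\pi_1} a \wedge v \erb{\pi_2} b \wedge v \erb{\pi_3} c\bigr),
\]
and the analogous formula $\mathrm{Mul}$ obtained by replacing $S$ with $P$, one interprets $(\mathbb{N}, +, \cdot)$ inside $G$ on the domain $\{x : \mathrm{Num}(x)\}$: by construction, each triple of number vertices $(1^a, 1^b, 1^c)$ satisfies $\mathrm{Add}$ in $G$ iff a sum witness $1^a \sharp 1^b \sharp 1^c \sharp S$ exists, i.e.\ iff $a + b = c$, and symmetrically for $\mathrm{Mul}$. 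Every arithmetic sentence $\varphi$ thus translates to a graph sentence $\varphi^*$ with $(\mathbb{N}, +, \cdot) \models \varphi \iff G \models \varphi^*$, so a decision procedure for $\mathrm{Th}(G)$ would decide $\mathrm{Th}(\mathbb{N}, +, \cdot)$, contradicting Church's theorem. No step presents a real obstacle; the only point requiring slight care is confirming the $1$-incremental context-sensitive character of the edge relations, which follows from the fact that all non-loop edges strictly decrease length and all constraints are verifiable in linear space on a unary representation.
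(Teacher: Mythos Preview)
Your proof is correct and takes a genuinely different route from the paper's. The paper reduces from the halting problem: it fixes an enumeration $(M_n)$ of Turing machines and builds a labeled machine $M$ that, on $\eps$-moves, guesses $n$, writes $\#^n$ on its tape, and simulates $M_n$ on empty input, reading the word $\#^n$ only if that simulation halts. Replacing $\eps$ by a visible letter $\tau$ turns the configuration graph of $M$ into a linearly bounded graph, and its restriction $C'$ to vertices reachable from the initial configuration (an application of Prop.~\ref{prop:restriction}) is a linearly bounded graph in which the first-order sentence ``there is a path labeled $\tau\#^n$ ending in a sink'' holds iff $M_n$ halts on $\eps$. Your argument instead interprets $(\mathbb{N},+,\cdot)$ directly, via unary number vertices, witness vertices, and projection edges. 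The paper's construction stays within its configuration-graph narrative and exercises the nontrivial closure result Prop.~\ref{prop:restriction}; your approach is more self-contained, needing only Theorem~\ref{thm:equiv} together with the routine observation that the constraints $a+b=c$ and $a\cdot b=c$ on unary inputs are checkable in linear space, and it sidesteps the reachability-closure lemma entirely.
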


\begin{proof}
  Consider a fixed enumeration $(M_n)_{n \in \mathbb{N}}$ of all
  (unlabeled) Turing machines, and a labeled Turing machine $M$ whose
  language is the set of all words of the form $\#^n$ such that
  machine $M_n$ halts on the empty input.  Using only
  $\eps$-transitions, $M$ guesses a number $n$ and writes $\#^n$ on
  its tape, then simulates machine number $n$ on the empty input. If
  the machine halts, then $M$ reads word $\#^n$ and stops.  All
  accepting runs of $M$ are thus labeled by $\eps^* \#^n$ for some
  $n$.
  
  When replacing $\eps$ by an observable symbol $\tau$, the
  configuration graph $C$ of $M$ is a linearly bounded graph. Let $C'$
  be its restriction to vertices reachable from the initial
  configuration of $M$, the graph $C'$ is still linearly bounded by
  Prop.  \ref{prop:restriction}.  For all $n$, the formula $\phi_n$
  expressing the existence of a path labeled by $\tau \#^n$ ending in
  a vertex with no successor is satisfied in $C'$ if and only if
  machine $M_n$ halts on input $\eps$. The set of all such satisfiable
  formulas is not recursive. Hence the first-order theory of $C'$ is
  undecidable (not even recursively enumerable). \qed
\end{proof}

\section{Comparison with rational graphs}
\label{sect:comp}

We will now give some remarks about the comparison between linearly
boun\-ded graphs and several different sub-classes of rational
graphs. First note that since linearly boun\-ded graphs have by
definition a finite degree, it is therefore only relevant to consider
rational graphs of finite degree. However, even under this structural
restriction, rational and linearly-bounded graphs are incomparable,
due to the incompatibility in the growth rate of their vertices'
degrees. 

A first observation is that in a rational graph the out-degree at
distance $n$ from any vertex can be $c^{c^n}$ for some constant $c$,
whereas in a linearly bounded graph it is at most $c^n$.

\begin{lemma}
  \label{lem:lbg-od}
  For any linearly bounded graph $G$ and any vertex $x$, there exists
  $c \in \mathbb{N}$ such that the out-degree of $G$ at distance $n>0$
  of $x$ is at most $c^n$.
\end{lemma}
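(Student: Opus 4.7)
The plan is to pass through the equivalent characterization of linearly bounded graphs as transduction graphs of $1$-incremental context-sensitive transductions provided by Theorem~\ref{thm:equiv}. So I fix a family $(T_a)_{a \in \Sigma}$ of $1$-incremental context-sensitive transductions over an alphabet $\Gamma$ defining $G$, and a vertex $x \in V_G$.

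First, I would observe that the $1$-incremental property propagates along paths: whenever $u \era{a}{G} v$, we have $|v| \leq |u|+1$ by definition of a $1$-incremental transduction. A straightforward induction on the path length then shows that any vertex $v$ reachable from $x$ in exactly $n$ steps satisfies $|v| \leq |x|+n$.

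Next, I would bound the out-degree of an arbitrary such vertex $v$. For each label $a \in \Sigma$, the potential targets of an outgoing $a$-edge from $v$ are words $w \in \Gamma^*$ with $|w| \leq |v|+1 \leq |x|+n+1$, and the total number of such words is bounded by $|\Gamma|^{|x|+n+2}$. Summing over the $|\Sigma|$ labels gives an out-degree bound of $|\Sigma|\cdot|\Gamma|^{|x|+n+2}$ for every vertex at distance $n$ from $x$.

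Finally, I would rewrite this uniform bound in the required exponential form by setting $c = |\Sigma|\cdot|\Gamma|^{|x|+3}$. The case $n=1$ is then tight by construction, and for $n \geq 2$ one has $c^n = c \cdot c^{n-1} \geq (|\Sigma|\cdot|\Gamma|^{|x|+3}) \cdot |\Gamma|^{n-1} = |\Sigma|\cdot|\Gamma|^{|x|+n+2}$, since $c \geq |\Gamma|$. The proof is essentially a counting argument exploiting the $1$-incremental property, so there is no serious obstacle; the only mild subtlety is choosing the constant $c$ large enough that the inequality holds starting from $n=1$, which is why $|\Gamma|^{|x|+3}$ rather than $|\Gamma|^{|x|+2}$ appears in the expression for $c$.
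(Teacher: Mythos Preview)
Your proposal is correct and follows essentially the same approach as the paper: bound the length of any vertex at distance $n$ from $x$ linearly in $n$ using the incremental property, then count the possible targets of an outgoing edge as words over $\Gamma$ of bounded length. The only cosmetic difference is that the paper works directly with general $k_a$-incremental transductions (taking $k = \max(\{k_a\} \cup \{|x|\})$) rather than first normalizing to the $1$-incremental case via Theorem~\ref{thm:equiv}, and it is slightly less explicit than you are about the final choice of the constant $c$.
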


\begin{proof}
  Let $(T_a)_{a \in \Sigma}$ be a set of incremental context-sensitive
  transductions describing $G$ and $k_a \in \mathbb{N}$ such that
  $T_a$ is a $k_a$-incremental context-sensitive transduction.  We
  take $k$ to be the maximum of $\{ k_a \; | \; a \in \Sigma \} \cup
  \{ |x| \}$. At distance $n>0$ of $x$, a vertex has length at most
  $k(n+1)$ and hence the out-degree is bounded by the number of
  vertices of length at most $k(n+2)$ which is less than
  $|\Gamma|^{k(n+2)+1}$. Hence, there exists $c \in \mathbb{N}$ such
  that the out-degree is bounded by $c^n$. \qed
\end{proof}

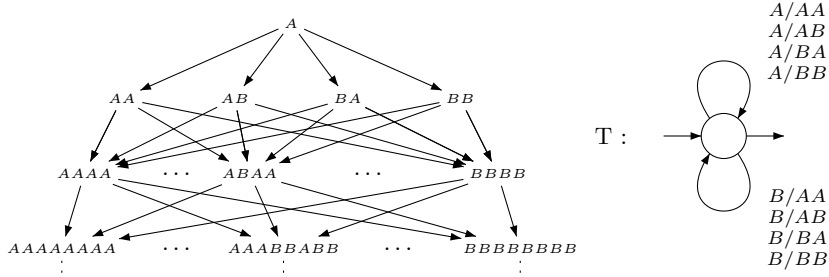
\begin{figure}
  \begin{center}
    \begin{picture}(110,33)(0,0)
      \put(95,15){
        \node[Nframe=n](T)(-15,0){T :}
        \node[Nmarks=if,Nw=6,Nh=6](q)(0,0){}

        \drawloop[ELpos=70,loopdiam=7](q){\scriptsize 
          $\begin{array}{c}A / AA \\ A / AB \\ A / BA \\ A / BB\end{array}$}
        \drawloop[ELpos=30,loopdiam=7,loopangle=-90](q){\scriptsize 
          $\begin{array}{c}B / AA \\ B / AB \\ B / BA \\ B / BB\end{array}$}
      }

      \gasset{Nframe=n,Nadjust=wh,Nadjustdist=1}

      \node(A)(37.5,30){\tiny $A$}
      \node(AA)(15,20){\tiny $AA$}
      \node(AB)(30,20){\tiny $AB$}
      \node(BA)(45,20){\tiny $BA$}
      \node(BB)(60,20){\tiny $BB$}
      \node(AAAA)(10,10){\tiny $AAAA$}
      \node(dots1)(22.5,10){\dots}
      \node(ABAA)(32,10){\tiny $ABAA$}
      \node(dots2)(48,10){\dots}
      \node(BBBB)(65,10){\tiny $BBBB$}
      \node(AAAAAAAA)(7,0){\tiny $AAAAAAAA$}
      \node(ddoottss1)(22.5,0){\dots}
      \node(AAABBABB)(36.5,0){\tiny $AAABBABB$}
      \node(ddoottss2)(52,0){\dots}
      \node(BBBBBBBB)(68,0){\tiny $BBBBBBBB$}
      \node(AAAAAAAA-)(7,-5){}
      \node(AAABBABB-)(36.5,-5){}
      \node(BBBBBBBB-)(68,-5){}

      \drawedge(A,AA){}
      \drawedge(A,AB){}
      \drawedge(A,BA){}
      \drawedge(A,BB){}
      \drawedge(AA,AAAA){}
      \drawedge(AA,ABAA){}
      \drawedge(AA,BBBB){}
      \drawedge(AA,AAAA){}
      \drawedge(AB,ABAA){}
      \drawedge(AB,BBBB){}
      \drawedge(AB,AAAA){}
      \drawedge(AB,ABAA){}
      \drawedge(BA,BBBB){}
      \drawedge(BA,AAAA){}
      \drawedge(BA,ABAA){}
      \drawedge(BA,BBBB){}
      \drawedge(BB,BBBB){}
      \drawedge(BB,AAAA){}
      \drawedge(BB,ABAA){}
      \drawedge(BB,BBBB){}
      \drawedge(AAAA,AAAAAAAA){}
      \drawedge(AAAA,AAABBABB){}
      \drawedge(AAAA,BBBBBBBB){}
      \drawedge(ABAA,AAAAAAAA){}
      \drawedge(ABAA,AAABBABB){}
      \drawedge(ABAA,BBBBBBBB){}
      \drawedge(BBBB,AAAAAAAA){}
      \drawedge(BBBB,AAABBABB){}
      \drawedge(BBBB,BBBBBBBB){}

      \gasset{AHnb=0,dash={0.3 1}0}
      \drawedge(AAAAAAAA,AAAAAAAA-){}
      \drawedge(AAABBABB,AAABBABB-){}
      \drawedge(BBBBBBBB,BBBBBBBB-){}
    \end{picture}
  \end{center}
  \caption{A finite degree rational graph (together with its
    transducer) which is not isomorphic to any linearly bounded
    graph.}
  \label{fig:triplex}
\end{figure}

Figure \ref{fig:triplex} shows a rational graph whose vertices at
distance $n$ from the root $A$ have out-degree $2^{2^{n+1}}$. This
graph is thus not linearly bounded. 

Conversely, in a rational graph of finite degree, the in-degree at
distance $n$ from any vertex is at most $c^{c^n}$ for some $c \in
\mathbb{N}$, in a linearly bounded graph it can be as large as $f(n)$
for any mapping $f$ from $\mathbb{N}$ to $\mathbb{N}$ recognizable in
linear space (i.e. such that the language $\{ 0^n 1^{f(n)} \; | \; n
\in \mathbb{N}\}$ is context-sensitive).

\begin{lemma}
\label{lem:lbg-id}
  For any mapping $f: \mathbb{N} \mapsto \mathbb{N}$ recognizable in
  linear space, there exists a linearly bounded graph $G$ with a
  vertex $x$ such that the in-degree at distance $n>0$ of $x$ is
  $f(n)$.
\end{lemma}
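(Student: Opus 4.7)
The plan is to construct, from an LBA $M$ recognising $L = \{0^n 1^{f(n)} \mid n \geq 0\}$, a linearly bounded graph $G$ on the single-letter label alphabet $\{b\}$ whose transduction is a unary backbone $x = \sharp \erb{b} \sharp^2 \erb{b} \sharp^3 \erb{b} \cdots$ augmented, at each $y_n := \sharp^{n+1}$, by $f(n)-1$ additional ``witness'' predecessors, so that the in-degree of $y_n$ comes out to $1 + (f(n)-1) = f(n)$. Note that any vertex reachable from $x$ by a positive-length path automatically has in-degree at least $1$, so $f(n) \geq 1$ for $n > 0$ may be assumed without loss of generality.

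First I would define a single $1$-incremental transduction
\[
  T_b \;=\; \{(\sharp^n,\,\sharp^{n+1}) : n \geq 1\}\,\cup\,
  \{(\sharp^{n+1}1^{f(n)+i},\,\sharp^{n+1}) : n \geq 1,\ 1 \leq i \leq f(n)-1\}
\]
over tape alphabet $\{\sharp, 1\}$, and let $G$ be its transduction graph. Both parts obey $|v|\leq|u|+1$. The crucial feature is the unary \emph{padding} $1^{f(n)+i}$ built into every witness source: it forces each such source to have length at least $n + f(n) + 2$, which is precisely the tape size that $M$ needs to read off $f(n)$.

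Next I would exhibit an LBA recogniser for $T_b$. On input $u\#v$, it first tries to match the backbone shape $u=\sharp^n$, $v=\sharp^{n+1}$; otherwise it checks that $u$ is of the form $\sharp^{n+1}1^m$ with $m\geq 1$ and that $v=\sharp^{n+1}$. It then non-deterministically guesses some $k$ with $(m+1)/2 \leq k \leq m-1$ and simulates $M$ on the virtual input $0^n1^k$, using the cells already occupied by $\sharp^{n+1}$ and the first $k$ symbols of the trailing $1^m$ (with a bounded-alphabet extension to track $M$'s state and work symbols). Acceptance by $M$ certifies $k = f(n)$, so that $f(n)+1 \leq m \leq 2f(n)-1$, meaning that $(u,v)$ belongs to the witness part of $T_b$. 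The whole computation runs in space $O(n+m) = O(|u|+|v|)$, so $T_b$ is context-sensitive and by Theorem~\ref{thm:equiv} $G$ is linearly bounded.

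The hard part, and the reason for the padding, is precisely this recogniser step: context-sensitivity of $L$ does not on its own guarantee that $f(n)$ can be computed in space $O(n)$, because the decider for $L$ may itself use $\Omega(n+f(n))$ cells. By making every witness source already as long as the decider needs, we sidestep this difficulty. Counting then gives, for every $n \geq 1$, exactly one backbone predecessor $\sharp^n$ together with $f(n)-1$ witness predecessors, so $y_n$ sits at distance $n$ from $x$ with in-degree $f(n)$, as required.
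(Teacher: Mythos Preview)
Your proof is correct and follows the same high-level plan as the paper: use the transduction characterization (Theorem~\ref{thm:equiv}) to build a backbone $x \to y_1 \to y_2 \to \cdots$ and attach enough extra in-edges at each $y_n$ to realize the prescribed in-degree. The paper hangs a short chain $0^n1, 0^n1^2, \ldots, 0^n1^{f(n)}$ off each backbone node $0^n$, with every node of the chain pointing back to $0^n$; to certify in space $O(n+m)$ the condition ``$m \leq f(n)$'' (where $m$ may be much smaller than $f(n)$) it invokes closure of context-sensitive languages under complement (Theorem~\ref{thm:csb}). Your construction instead pads every witness source out to length at least $n+f(n)+2$, so the LBA for $L$ can be simulated directly inside the available tape and no appeal to Immerman--Szelepcs\'enyi is needed. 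This makes your argument a touch more elementary; in return, the paper's variant keeps all vertices reachable from the root and produces a somewhat more regular graph.
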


\begin{proof}
  Let $f$ be a mapping from $\mathbb{N}$ to $\mathbb{N}$ recognizable
  in linear space, and let $G_f$ be the linearly bounded graph defined
  using the following incremental context-sensitive transduction:
  \begin{gather*}
    T\ =\ \{ (u,u0) \sep u \in 0^* \}\ \cup\ \{ (u,u1) \sep u \in
    0^n1^m,\ m < f(n) \}\\ \cup\ \{ (uv,u) \sep u \in 0^*, v \in 1^*
    \;\text{and}\; |v| \leq f(|u|)\}.
  \end{gather*}
  Note that, since context-sensitive languages are closed under
  complement, it is not difficult to see that the set $\{0^n1^m \mid m
  < f(n) \}$ is context-sensitive for every $f$ recognizable in linear
  space. Figure \ref{fig:construction} illustrates the construction of
  $G_f$.  The in-degree of vertex $0^n$ at distance $n$ from the root
  $\eps$ is equal to $f(n)$, for any mapping $f$ recognizable in
  linear space. \qed

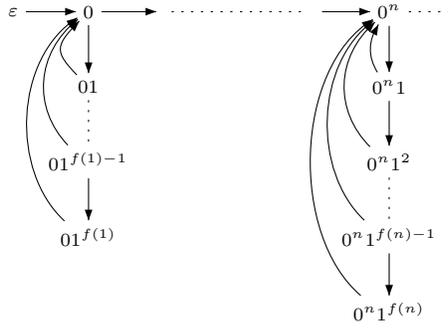
\begin{figure}
  \begin{center}
    \begin{picture}(60,45)(0,0)

      \gasset{Nframe=n,Nadjust=wh,Nadjustdist=1}
      \node(e)(0,40){\scriptsize $\eps$}
      \node(0)(10,40){\scriptsize $0$}
      \node(00)(20,40){}
      \node(0n-1)(40,40){}
      \node(0n)(50,40){\scriptsize $0^n$}
      \node(etc)(60,40){}
      
      \node(01)(10,30){\scriptsize $01$}
      \node(011)(10,20){\scriptsize $01^{f(1)-1}$}
      \node(01f)(10,10){\scriptsize $01^{f(1)}$}
      
      \node(0n1)(50,30){\scriptsize $0^n1$}
      \node(0n11)(50,20){\scriptsize $0^n1^2$}
      \node(0n1f-1)(50,10){\scriptsize $0^n1^{f(n)-1}$}
      \node(0n1f)(50,0){\scriptsize $0^n1^{f(n)}$}

      \drawedge(e,0){}
      \drawedge(0,00){}
      \drawedge[AHnb=0,dash={0.3 1}0](00,0n-1){}
      \drawedge(0n-1,0n){}
      \drawedge[AHnb=0,dash={0.3 1}0](0n,etc){}

      \drawedge(0,01){}
      \drawedge[AHnb=0,dash={0.3 1}0](01,011){}
      \drawedge(011,01f){}

      \drawedge(0n,0n1){}
      \drawedge(0n1,0n11){}
      \drawedge[AHnb=0,dash={0.3 1}0](0n11,0n1f-1){}
      \drawedge(0n1f-1,0n1f){}

      \drawbcedge(01,5,35,0,5,35){}
      \drawbcedge(011,2,25,0,2,35){}
      \drawbcedge(01f,-1,15,0,-1,35){}

      \drawqbedge(0n1,45,35,0n){}
      \drawbcedge(0n11,42,25,0n,42,35){}
      \drawbcedge(0n1f-1,39,15,0n,39,35){}
      \drawbcedge(0n1f,36,5,0n,36,35){}
    \end{picture}
  \end{center}
  \caption{A linearly bounded graph which is not isomorphic to any
    rational graph.}
  \label{fig:construction}
\end{figure}

\end{proof}

An instance of such a mapping is $f: n \mapsto 2^{2^{2^n}}$, which is
more than the in-degree at distance $n$ of a vertex in any rational
graph of finite degree. From these two observations, we easily get:

\begin{proposition}
  The classes of finite degree rational graphs and of linearly
  boun\-ded graphs are incomparable.
\end{proposition}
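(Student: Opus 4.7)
The plan is to combine the two preceding lemmas with the degree bounds on finite-degree rational graphs recalled just before the proposition, exhibiting explicit witnesses in both directions.

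For the first direction I would use the finite-degree rational graph of Figure \ref{fig:triplex}, whose out-degree at distance $n$ from the vertex $A$ is $2^{2^{n+1}}$. Lemma \ref{lem:lbg-od} bounds the out-degree at distance $n$ of any vertex in any linearly bounded graph by $c^n$ for some constant $c$. Since $c^n$ is eventually dominated by $2^{2^{n+1}}$ for every fixed $c$, this rational graph is not isomorphic to any linearly bounded graph.

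For the converse direction I would invoke Lemma \ref{lem:lbg-id} with $f(n) = 2^{2^{2^n}}$. A routine verification shows that $f$ is recognizable in linear space: on an input of length $n + f(n)$, a linearly bounded machine has ample tape to compute $f(n)$ by iterated squaring and to check the trailing block of $1$'s. Lemma \ref{lem:lbg-id} then supplies a linearly bounded graph $G_f$ containing a vertex whose in-degree at distance $n$ equals $f(n)$. On the other hand, the bound recalled above ensures that in every finite-degree rational graph the in-degree at distance $n$ is at most $c^{c^n}$ for some constant $c$; since $c^{c^n}$ is eventually dominated by $2^{2^{2^n}}$ for every fixed $c$, the graph $G_f$ cannot be isomorphic to any finite-degree rational graph.

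The only step that needs genuine checking is that $f(n) = 2^{2^{2^n}}$ is indeed recognizable in linear space; the two asymptotic dominations needed at the end of each direction are immediate from taking logarithms. Taken together, the two witnesses yield incomparability.
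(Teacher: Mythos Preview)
Your proposal is correct and follows essentially the same approach as the paper: the paper also uses the graph of Figure~\ref{fig:triplex} together with Lemma~\ref{lem:lbg-od} for one direction, and Lemma~\ref{lem:lbg-id} instantiated with $f(n)=2^{2^{2^n}}$ together with the $c^{c^n}$ in-degree bound for finite-degree rational graphs for the other. The paper treats the linear-space recognizability of this $f$ as clear and does not spell it out, so your brief justification is already more explicit than the original.
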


Since finite-degree rational graphs and linearly bounded graphs are
incomparable, we investigate more restricted sub-classes of rational
graphs. For synchronized graphs of finite out-degree, we have the
following result:

\begin{proposition}
  The synchronized graphs of finite out-degree form a strict sub-class
  of linearly bounded graphs.
\end{proposition}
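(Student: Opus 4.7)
The plan is to prove inclusion using the incremental transduction viewpoint, and then to exhibit a linearly bounded graph that cannot be realized as a synchronized rational graph of finite out-degree.

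For inclusion, let $G$ be a synchronized rational graph of finite out-degree, defined by a family $(T_a)_{a\in\Sigma}$ of synchronized transducers. Finite out-degree means that each relation $T_a$ has finite image; by Proposition \ref{prop:synchronizedincremental} each $T_a$ is therefore a $k_a$-incremental context-sensitive transduction. Taking $k=\max_a k_a$ and applying Theorem \ref{thm:equiv}, $G$ is, up to isomorphism, a linearly bounded graph.

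For strictness, I would invoke Lemma \ref{lem:lbg-id} with a function $f\colon\mathbb{N}\to\mathbb{N}$ recognizable in linear space whose growth exceeds every function of the form $n\mapsto c^n$, such as $f(n)=2^{2^n}$, to obtain a linearly bounded graph $G_f$ together with a vertex $x_0$ whose in-degree at distance $n$ equals $f(n)$. I would then show that $G_f$ is not isomorphic to any synchronized rational graph of finite out-degree. Suppose for contradiction that $G_f$ is isomorphic to such a graph $H$, defined by synchronized transducers $(T_a)_{a\in\Sigma}$. Since $G_f$ has finite in-degree at every vertex, so does $H$, i.e.\ each reverse relation $T_a^{-1}$ also has finite image. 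As the reverse of a (left-)synchronized transducer is again (left-)synchronized, Proposition \ref{prop:synchronizedincremental} applies to $T_a^{-1}$ as well, yielding a constant $k'$ such that $|u|\le|v|+k'$ whenever $u\era{a}{H}v$. Combining this with the forward bound $|v|\le|u|+k$, every edge $(u,v)$ of $H$ satisfies $\bigl||u|-|v|\bigr|\le K$ for some uniform constant $K$.

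From this bilateral bound it follows that every vertex at distance $n$ from $x_0$ has length at most $|x_0|+Kn$, and every in-neighbour of such a vertex has length at most $|x_0|+K(n+1)$. The in-degree at distance $n$ is therefore bounded by the number of words of length at most $|x_0|+K(n+1)$ over the vertex alphabet, which is $O(c^n)$ for some constant $c$. This contradicts the choice of $G_f$, whose in-degree at distance $n$ equals $f(n)=2^{2^n}$, dominating every $c^n$ for large $n$.

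The delicate point is the simultaneous application of Proposition \ref{prop:synchronizedincremental} to both $T_a$ and $T_a^{-1}$ to obtain bilateral $K$-incrementality of the edge relations; once this is established, the comparison of in-degree growth rates completes the separation by a direct counting argument.
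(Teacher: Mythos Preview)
Your inclusion argument is identical to the paper's: invoke Proposition~\ref{prop:synchronizedincremental} to see that each synchronized relation of finite image is $k$-incremental context-sensitive, hence the graph is linearly bounded.

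For strictness, however, you take a genuinely different route. The paper argues abstractly: synchronized graphs have a decidable first-order theory (a known fact about automatic structures), whereas Proposition~\ref{prop:lbgfo} exhibits a linearly bounded graph with undecidable first-order theory; alternatively, linearly bounded graphs are closed under restriction to reachable vertices (Proposition~\ref{prop:restriction}) while synchronized graphs are not. Either observation immediately yields a separating example without any further computation.

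Your argument is instead a direct growth-rate separation. You observe that the graph $G_f$ of Lemma~\ref{lem:lbg-id} has finite in-degree everywhere, so any synchronized realization $H$ would have edge relations $T_a$ with both $T_a$ and $T_a^{-1}$ of finite image; since reversing a synchronized transducer preserves synchronization, Proposition~\ref{prop:synchronizedincremental} applied in both directions gives a uniform bound $\bigl||u|-|v|\bigr|\le K$ on every edge. A counting argument then caps the in-degree at distance $n$ by $c^n$, contradicting the choice $f(n)=2^{2^n}$. This is correct and fully self-contained: it requires no external decidability result and makes the obstruction quantitative. The paper's proof is shorter because it leverages heavier machinery already in place, but your approach has the merit of producing an explicit, concrete witness and of isolating exactly the structural feature of synchronized graphs (bilateral length-Lipschitz edges) that fails for linearly bounded graphs.
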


\begin{proof}
  By Prop.~\ref{prop:synchronizedincremental}, the synchronized
  relations of finite image are $k$-in\-cre\-mental. Hence, the
  synchronized graphs of finite out-degree are linearly boun\-ded
  graphs. The strictness can be deduced from the fact that
  synchronized graphs are not closed under restriction to reachable
  vertices from a given vertex (cf Prop.~\ref{prop:restriction}), or
  that synchronized graphs have decidable first-order theories (cf
  Prop. \ref{prop:lbgfo}). \qed
\end{proof}

By restricting the out-degree even further, we can establish the
following comparison:

\begin{theorem}
  \label{thm:bd-rat-lbg}
  The rational graphs of bounded out-degree form a sub-class of
  linearly bounded graphs (up to isomorphism).
\end{theorem}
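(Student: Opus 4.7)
The plan is to construct, for each rational graph $G$ of bounded out-degree $d$, an isomorphic graph $H$ definable as the transduction graph of a family of $1$-incremental context-sensitive transductions; by Theorem~\ref{thm:equiv}, this yields that $G$ is (up to isomorphism) a linearly bounded graph. By Proposition~\ref{prop:restriction} I may restrict attention to the reachable part of $G$ from a fixed root $v_0$, handling each connected component separately. Since the out-degree is bounded by $d$, the transducer $T_a$ can be decomposed in a canonical way into $d$ rational partial functions $f_{a,1}, \ldots, f_{a,d}$, with $f_{a,i}(u)$ denoting the $i$-th $a$-successor of $u$ in some fixed ordering. I would then re-encode each reachable vertex $v$ by the lex-smallest $\tilde{v} \in (\Sigma \times \{1, \ldots, d\})^*$ among shortest BFS paths from $v_0$ to $v$, so that $|\tilde{v}|$ equals the BFS distance of $v$. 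Letting $H$ have vertex set $\{\tilde{v} \mid v \in V(G)\}$ and edges $\tilde{u} \era{a}{H} \tilde{v}$ iff $u \era{a}{G} v$, we have $H \cong G$ and $|\tilde{v}| \leq |\tilde{u}| + 1$ on every edge, so the edges are $1$-incremental.

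The core of the proof is to show that each $\era{a}{H}$ is context-sensitive, i.e., that $\{\tilde{u}\#\tilde{v} \mid \tilde{u} \era{a}{H} \tilde{v}\}$ lies in $\mathrm{NSPACE}[n]$. The subtlety is that the underlying vertex $\rho(\tilde{v}) = v$ may be exponentially longer than $\tilde{v}$, so it cannot be stored explicitly. I would bypass this by a streaming evaluation: for $\tilde{v} = (a_1, i_1) \cdots (a_n, i_n)$, the symbols of $\rho(\tilde{v})$ can be produced on demand by chaining the $n$ deterministic transducers corresponding to the $f_{a_k, i_k}$ as a pipeline, each contributing only $O(1)$ state, for a total of $O(n)$ space. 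Using this pipeline, an LBM can check adjacency $\rho(\tilde{u}) \era{a}{G} \rho(\tilde{v})$ by simulating $T_a$ while co-streaming $\rho(\tilde{u})$ and $\rho(\tilde{v})$, and it can verify canonicity of $\tilde{u}$ by non-deterministically guessing a shorter or lex-smaller BFS path $\tilde{w}$, streaming $\rho(\tilde{w})$ and $\rho(\tilde{u})$ in parallel, and comparing for equality; closure of CS under complement (Theorem~\ref{thm:csb}) then converts the non-canonicity test into canonicity.

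Combining these checks yields that each $\era{a}{H}$ is a $1$-incremental context-sensitive transduction, so $H$ is a linearly bounded graph by Theorem~\ref{thm:equiv}, and $G \cong H$ concludes the proof. The main obstacle I expect is the streaming evaluation: while the pipeline intuition is clear, formalizing that $n$ chained transducers can be operated in $O(n)$ total space (without ever materializing the exponentially-long intermediate strings) requires a careful coroutine-style implementation, plus verifying that each $f_{a,i}$ admits a deterministic, left-to-right streaming form. This is the most delicate part of the argument.
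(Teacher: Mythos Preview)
Your overall strategy coincides with the paper's: decompose each $T_a$ into rational functions (this is Weber's uniformization theorem for $k$-valued rational relations, which the paper invokes explicitly), encode vertices by minimal access paths through these functions, and verify both the edge relation and minimality of the encoding in linear space, using closure under complement (Theorem~\ref{thm:csb}) for the latter. The paper packages your pipeline simulation as a direct appeal to Corollary~\ref{cor:traces}.

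There is one genuine gap. Restricting to the part reachable from a single fixed root $v_0$ does not suffice: a bounded-out-degree rational graph may have infinitely many weakly connected components (take for instance $T_a=\{(wA,wB)\mid w\in\Gamma^*\}$), and an infinite disjoint union of linearly bounded graphs need not be linearly bounded, so ``handling each connected component separately'' does not assemble into a single linearly bounded graph isomorphic to~$G$. Your appeal to Proposition~\ref{prop:restriction} goes in the wrong direction: that proposition lets you restrict a graph already known to be linearly bounded, it does not let you conclude that $G$ is linearly bounded from the fact that each of its reachable parts is. The paper's fix is simple and natural: encode a vertex not just by a path $t$ but by a pair $(w,t)$ with $w\in V_G$ arbitrary and $t\in X^*$ a sequence of function indices, so that every vertex $v$ has at least the encoding $(v,\eps)$; one then takes the $<$-minimal such pair over all of $V_G\times X^*$.

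On the obstacle you flag: your concern is justified, since rational functions are not all subsequential and hence a \emph{deterministic} left-to-right pipeline need not exist. Determinism is unnecessary here, however. In $\mathrm{NSPACE}$ one may run the $n$ transducers nondeterministically, guessing each intermediate symbol as it is passed from stage~$k$ to stage~$k{+}1$ while storing only the $n$ current control states; functionality of each $f_{a,i}$ guarantees the output stream is unique on accepting runs. This is exactly the content of Theorem~\ref{thm:traces}, and the paper simply cites Corollary~\ref{cor:traces} at this point instead of redoing the simulation by hand.
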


\begin{proof}
  The inclusion result is based on a uniformization result for
  rational relations of bounded image due to Weber.  A relation is
  said to be $k$-valued if for all $w \in \textrm{Dom}(R)$, we have
  $|\{ x \; | \; (w,x) \in R \}| \leq k$.  In \cite{Weber96}, it is
  proved that for any $k$-valued rational relation $R$, there exist
  $k$ rational functions\footnote{A rational relation which maps
    exactly one image to each element of its domain is called a
    rational function.} $F_1, \dots F_k$ such that $R = \bigcup_{i \in
    [1,k]} F_i$.

  Let $G=(T_a)_{a \in \Sigma}$ be a rational graph over $\Gamma$ whose
  out-degree is bounded by $k$. Each relation $T_a$ is therefore
  $k$-valued and hence there exist $k$ rational functions $F_{a_1},
  \ldots F_{a_k}$ such that $T_a = \bigcup_{i \in [1,k]} F_{a_i}$. We
  write $X=\{ a_i \; | \; a \in \Sigma \; \textrm{and} \; i \in [1,k]
  \}$.

  We will represent each vertex $x$ of $G$ by a pair $(w,t) \in V_G
  \times X^*$ such that $x=F_{t_{|t|}}(\ldots(F_{t_1}(w)))$.  However,
  there can be several such pairs for a given vertex. We define a
  total order $<$ on $V_G \times X^*$ and associate to $x$ the
  smallest such pair. First, let $<_\Gamma$ and $<_X$ be two arbitrary
  total orders over $\Gamma$ and $X$ respectively and let
  $\prec_\Gamma$ and $\prec_X$ be the lexicographic orders induced by
  $<_\Gamma$ and $<_X$ respectively.
  \[
  \begin{array}{lcll}
    (m,t) < (n,r) &  \quad \textrm{iff} \quad &     &|m|+|t| < |n|+|r| 
    \\
    &     &   \quad  \textrm{or} \quad& |m|+|t|=|n|+|r|
    \;\textrm{and}\; m \prec_\Gamma n 
    \\
    &      &   \quad \textrm{or} & |m|+|t|=|n|+|r|,  m =  n \;
    \textrm{and} \; t \prec_X r 
  \end{array}
  \]
  It is straightforward to prove that $<$ is a total order on $V_G
  \times X^*$.  We now prove that the function $N$ associating to any
  pair $(w,t)$ the smallest pair $(m,r)$ such that $F_t(w) = F_r(m)$
  is a 0-incremental context-sensitive transduction.
  
  There are two important points in this proof. The first one is to be
  able to check in linear space that for any two given pairs $(m,r)$
  and $(w,t)$, $F_r(m) = F_t(w)$. In other terms, we want to prove
  that the language $L=\{ m\sharp r \$ t \sharp w \; | \; F_r(m) =
  F_t(w)\}$ is context-sensitive.

  \noindent
  Let $\bar{X}$ be a finite alphabet disjoint from but in bijection
  with $X$ (for all $x \in X$, we write $\bar{x}$ for the
  corresponding symbol in $\bar{X}$) , we consider the rational graph
  $\bar{G}$ defined by the family of transducers $(F_a)_{a \in X} \cup
  (\bar{F}_{\bar{a}})_{a \in X}$ where for all $\bar{a} \in \bar{X}$,
  $\bar{F}_{\bar{a}} = F^{-1}_a$. It is easy to check that for all
  $m,w \in \Gamma^*$ and all $r,t \in X^*$, there exists a path $m
  \eRa{r\bar{t}}{\bar{G}} w$ iff $F_r(m)=F_t(w)$ where $\bar{t}=
  \bar{t}_{|t|} \ldots \bar{t}_1$. By Cor.~\ref{cor:traces}, the
  language $\{ i \sharp w \sharp f \; | \; w \in L(\bar{G},i,f)\} \cap
  \Gamma^*\sharp X^* \bar{X}^* \sharp \Gamma^*$ is a context-sensitive
  language. It follows that $L$ is also context-sensitive.

  The second point is to show that the language $L'=\{ m\sharp r \$ t
  \sharp w$ $\mid$ $N((w,t))$ $=$ $(m,r)\}$ is itself
  context-sensitive, which implies that $N$ is indeed a
  context-sensitive transduction.  Given $(w,t)$ and $(m,r)$, let
  $(m_1,r_1), \ldots, (m_k,r_k)$ be an enumeration of all pairs
  smaller than $(m,r)$ with respect to $<$. A linearly bounded machine
  accepting $L'$ successively tests for all $i \in [1,k]$ whether $m_i
  \sharp r_i \$ t \sharp w \not\in L$, which is possible since $L$ is
  context-sensitive and context-sensitive languages are closed under
  complement. If any of these $k$ tests fails, then the whole
  computation fails. Otherwise, it only remains to check that $m_i
  \sharp r_i \$ t \sharp w \in L$.

  $N$ is $0$-incremental since by definition of the total order $<$,
  $(m,r) < (w,t) \implies |m|+|r| \leq |w|+|t|$, and all steps of the
  construction of the machine accepting $L'$ can be done in linear
  space.

    
  We can now define the linearly bounded graph $G'$ with vertices in
  $\Gamma^* \diese X^*$ using the set of transductions $(T_a)_{a \in
    \Sigma}$ with:
  \[
  T_a \; = \; \bigcup_{i \in [1,k]} \; \{(w\diese x, w' \diese y) \; |
  \; N((w,x))=(w,x) \; \textrm{and}\; N(w,x a_i)=(w',y) \}.
  \]
  As $N$ is 0-incremental, $T_a$ is 1-incremental. The mapping $\rho$
  from $V_G$ to $V_G \times X^*$ associating to a vertex $x \in V_G$
  the smallest $(w,t)$ such that $x=F_t(w)$ is a bijection from $V_G$
  to $V_{G'}$, which induces an isomorphism from $G$ to $G'$.  Hence,
  $G$ is a linearly bounded graph, which concludes the proof of
  inclusion. \qed
\end{proof}

\begin{example}
  Let $\Gamma$ be a singleton alphabet, and consider the pair of
  transductions $T_a = \{(n,2n) | n \geq 1\}$ and $T_b = \{(n,n-3)|n
  \geq 4\}$ on the domain of words over $\Gamma$ seen as unary-coded
  positive integers. Let us apply the construction from the previous
  inclusion proof to the bounded-degree rational graph $G$ defined by
  $T_a$ and $T_b$. Here, $<_\Gamma$ is trivial, and we take $a <_X b$.
  The graph $G$ and the linearly bounded graph $G'$ obtained by
  applying the construction are shown on Fig. \ref{fig:rat2lbm}. Note
  for instance how vertex $13$ in $G$ is represented by vertex
  $(2,a³b)$ in $G'$ because $(2,a³b)$ is the smallest pair $(u,v)$
  such that $u \era{v}{G} 13$.
\end{example}

\begin{figure}
  \begin{center}
    \begin{picture}(68,129.6)
      \gasset{Nframe=n,Nadjust=wh,Nadjustdist=1}
      
      \put(0,125){
        \node(G)(12,-36){\large $G$}

        \node(1)(0,0){1}
        
        \node(2)(12,0){2}
        
        \node(4)(24,0){4}

        \node(8)(36,0){8}
        \node(5)(36,-32){5}
        
        \node(16)(48,0){16}
        \node(13)(48,-16){13}
        \node(10)(48,-32){10}
        \node(7)(48,-48){7}

        \node(32)(60,0){32}
        \node(29)(60,-8){29}
        \node(26)(60,-16){26}
        \node(23)(60,-24){23}
        \node(20)(60,-32){20}
        \node(17)(60,-40){17}
        \node(14)(60,-48){14}
        \node(11)(60,-56){11}

        \drawedge(1,2){\footnotesize $a$}
        \drawedge(2,4){\footnotesize $a$}
        \drawedge(4,8){\footnotesize $a$}
        \drawedge(8,16){\footnotesize $a$}
        \drawedge[ELpos=55](5,10){\footnotesize $a$}
        \drawedge(16,32){\footnotesize $a$}
        \drawedge(13,26){\footnotesize $a$}
        \drawedge(10,20){\footnotesize $a$}
        \drawedge(7,14){\footnotesize $a$}

        \drawedge[curvedepth=5](4,1){\footnotesize $b$}
        \drawedge[ELside=r](8,5){\footnotesize $b$}
        \drawedge[curvedepth=3](5,2){\footnotesize $b$}
        \drawedge(16,13){\footnotesize $b$}
        \drawedge(13,10){\footnotesize $b$}
        \drawedge(10,7){\footnotesize $b$}
        \drawqbedge[ELpos=40](7,24,-40,4){\footnotesize $b$}
        \drawedge(32,29){}
        \drawedge(29,26){}
        \drawedge(26,23){}
        \drawedge(23,20){}
        \drawedge(20,17){}
        \drawedge(17,14){}
        \drawedge(14,11){}
        \drawbcedge[ELpos=20](11,26,-62,8,48,0){\footnotesize $b$}

        \drawline[AHnb=0,dash={0.3 1}0](63,0)(68,0)
        \drawline[AHnb=0,dash={0.3 1}0](63,-28)(68,-28)
        \drawline[AHnb=0,dash={0.3 1}0](63,-57)(68,-58)
      }

      \put(0,58){
        \node(G)(12,-36){\large $G'$}

        \node(1)(0,0){\footnotesize $1,\eps$}

        \node(2)(12,0){\footnotesize $2,\eps$}

        \node(4)(24,0){\footnotesize $2,a$}

        \node(8)(36,0){\footnotesize $2,a²$}
        \node(5)(36,-32){\footnotesize $5,\eps$}

        \node(16)(48,0){\footnotesize $2,a³$}
        \node(13)(48,-16){\footnotesize $2,a³b$}
        \node(10)(48,-32){\footnotesize $5,a$}
        \node(7)(48,-48){\footnotesize $7,\eps$}

        \node(32)(60,0){\footnotesize $2,a^4$}
        \node(29)(60,-8){\footnotesize $2,a^4b$}
        \node(26)(60,-16){\footnotesize $2,a^3ba$}
        \node(23)(60,-24){\footnotesize $2,a^3bab$}
        \node(20)(60,-32){\footnotesize $5,a²$}
        \node(17)(60,-40){\footnotesize $5,a²b$}
        \node(14)(60,-48){\footnotesize $7,a$}
        \node(11)(60,-56){\footnotesize $7,ab$}

        \drawedge(1,2){\footnotesize $a$}
        \drawedge(2,4){\footnotesize $a$}
        \drawedge(4,8){\footnotesize $a$}
        \drawedge(8,16){\footnotesize $a$}
        \drawedge[ELpos=55](5,10){\footnotesize $a$}
        \drawedge(16,32){\footnotesize $a$}
        \drawedge(13,26){\footnotesize $a$}
        \drawedge(10,20){\footnotesize $a$}
        \drawedge(7,14){\footnotesize $a$}

        \drawedge[curvedepth=5](4,1){\footnotesize $b$}
        \drawedge[ELside=r](8,5){\footnotesize $b$}
        \drawedge[curvedepth=3](5,2){\footnotesize $b$}
        \drawedge(16,13){\footnotesize $b$}
        \drawedge(13,10){\footnotesize $b$}
        \drawedge(10,7){\footnotesize $b$}
        \drawqbedge[ELpos=40](7,24,-40,4){\footnotesize $b$}
        \drawedge(32,29){}
        \drawedge(29,26){}
        \drawedge(26,23){}
        \drawedge(23,20){}
        \drawedge(20,17){}
        \drawedge(17,14){}
        \drawedge(14,11){}
        \drawbcedge[ELpos=20](11,26,-62,8,48,0){\footnotesize $b$}

        \drawline[AHnb=0,dash={0.3 1}0](63,0)(68,0)
        \drawline[AHnb=0,dash={0.3 1}0](63,-28)(68,-28)
        \drawline[AHnb=0,dash={0.3 1}0](63,-57)(68,-58)
      }
   \end{picture}
  \end{center}
  \caption{Bounded-degree rational graph $G$ and isomorphic lin.
    bounded graph $G'$.}
  \label{fig:rat2lbm}
\end{figure}
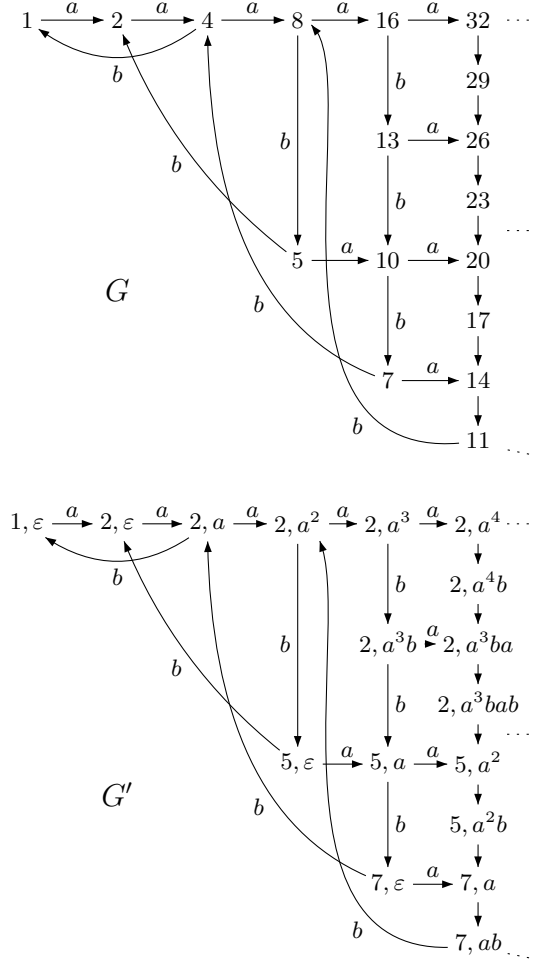

In fact, we can prove that the previous inclusion is strict.

\begin{theorem}
  There exists a linearly bounded graph of bounded degree which is not a
  rational graph.
\end{theorem}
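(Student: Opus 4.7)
The plan is to exhibit a bounded-degree linearly bounded graph whose first-order theory is undecidable, and then to argue that every rational graph of bounded out-degree has a decidable first-order theory; the two classes are thereby separated.

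First, I would re-examine the graph $C'$ constructed in the proof of Proposition~\ref{prop:lbgfo}. Recall that $C'$ is the restriction to vertices reachable from the initial configuration of the configuration graph $C$ of a specific (unrestricted) Turing machine $M$ in which every transition is observable. The graph $C$ is a linearly bounded graph, since each of its edge relations is a $1$-incremental context-sensitive transduction (one step of $M$ modifies only a bounded portion of the tape and alters its length by at most one cell); by Proposition~\ref{prop:restriction}, $C'$ is still linearly bounded. Crucially, because $M$ has only finitely many transition rules, every configuration admits at most $|\delta|$ successors \emph{and} at most $|\delta|$ predecessors, so $C'$ has bounded in- and out-degree. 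By Proposition~\ref{prop:lbgfo}, the first-order theory of $C'$ is undecidable (in fact not recursively enumerable).

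Second, I would argue that every rational graph of bounded out-degree has a decidable first-order theory. The key ingredient is Weber's uniformization theorem, already used in the proof of Theorem~\ref{thm:bd-rat-lbg}: every $k$-valued rational relation decomposes as a finite union of rational functions. Using this decomposition and a synchronization argument in the spirit of Rispal~\cite{Rispal02}, one shows that every bounded-out-degree rational graph is isomorphic to a synchronized rational graph of bounded out-degree. Since synchronized rational graphs are automatic structures in the sense of \cite{Blumensath00,KhoussainovN94}, their first-order theories are decidable; consequently so are those of bounded-out-degree rational graphs.

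To conclude, the first-order theory of a graph is an isomorphism invariant, and bounded out-degree is preserved under isomorphism. Hence, if $C'$ were isomorphic to a rational graph, it would be isomorphic to a rational graph of bounded out-degree and would thus have a decidable first-order theory, contradicting the previous step. Therefore $C'$ is a bounded-degree linearly bounded graph that is not a rational graph. The main obstacle I expect is the passage from ``bounded out-degree rational'' to ``synchronized rational'', which requires a careful refinement of the construction of Theorem~\ref{thm:bd-rat-lbg} (so as to land in the synchronized fragment rather than merely in the linearly bounded one) or a direct appeal to Rispal's analysis of rational relations of bounded image; the remaining steps follow directly from results already established earlier in the paper.
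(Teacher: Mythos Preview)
Your approach differs substantially from the paper's. The paper does not argue via decidability of first-order theories; instead it exploits closure under edge reversal. It builds a bounded-degree linearly bounded graph $G$ from a language $L \in \mathrm{NSPACE}[2^n] \setminus \mathrm{NSPACE}[n]$ and shows that if the graph $H$ obtained from $G$ by reversing the $\sharp$-edges were again linearly bounded, then $L$ would be context-sensitive via Proposition~\ref{prop:trace}. Thus bounded-degree linearly bounded graphs are not closed under edge reversal, whereas rational graphs trivially are (just swap input and output in each transducer); combined with Theorem~\ref{thm:bd-rat-lbg} this yields the strictness.

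Your argument has a genuine gap at the second step. You need every bounded-out-degree rational graph to have a decidable first-order theory, and you propose to get this by showing such graphs are isomorphic to synchronized ones. But Weber's decomposition only gives you rational \emph{functions}, not synchronized relations: a rational function can blow up lengths (e.g.\ $a^n \mapsto a^{2n}$), so nothing in the decomposition produces a synchronized presentation. The renaming used in Theorem~\ref{thm:bd-rat-lbg} does not help either: the normal-form map $N$ there must test whether $F_r(m) = F_t(w)$, where the common value can be exponentially longer than $m,r,w,t$; this is precisely why that construction lands in incremental context-sensitive transductions rather than in synchronized ones, and a ``careful refinement'' cannot remove this obstacle. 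Unless you can supply an independent proof that bounded-degree rational graphs are automatic---a statement not established in the paper and not justified by your appeal to \cite{Rispal02}---the implication ``bounded out-degree rational $\Rightarrow$ decidable FO'' is unproven, and your separation does not go through. Your observation that the graph $C'$ of Proposition~\ref{prop:lbgfo} has bounded degree is correct and pleasant, but on its own it does not deliver the theorem.
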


\begin{proof}
  We now prove that there exists a linearly bounded graph of bounded
  degree which is not isomorphic to any bounded degree rational graph.
  Let us first establish that linearly bounded graphs of bounded
  degree are not closed under edge reversal.  Let $L \subseteq
  \{0,1\}^+$ be a language in $\textrm{NSPACE}[2^n] \setminus
  \textrm{NSPACE}[n]$ (cf Thm.~\ref{thm:hierar}), we define a linearly
  bounded graph $G$ with vertices in $0\{0,1\}^*\diese^* \cup
  \bar{0}\{\bar{0},\bar{1} \}^*$ and edges defined by:
  \begin{align*}          
    \erb{x} & =\{ (w,wx) \; | \; w \in 0\{ 0, 1\}^* \} \cup \{
    (w,w\bar{x}) \; | \; w \in \bar{0}\{\bar{0},\bar{1}\}^*\} \qquad
    \textrm{for $x \in \{0,1\}$}
    \\
    \erb{\diese} & = \{ (w,w\diese) \; | \; w=u v, u \in 0\{0,1\}^*, v
    \in \diese^* \;\textrm{and} \; |v| < 2^{|u|} \}
    \\
    & \quad \cup \{ (w,\bar{u}) \; | \; w=u v, u \in 0\{0,1\}^*, v \in
    \diese^*, |v| = 2^{|u|} \;\textrm{and} \; u \in L \}
  \end{align*}
  The relations $\erb{0}, \erb{1}$ and $\erb{\diese}$ are
  incremental context-sensitive transductions. In fact, as $L$ belongs
  to $\textrm{NSPACE}[2^n]$, the language $\{ w\diese^{2^{|w|}} \; | \;
  w \in L \}$ belongs to $\textrm{NSPACE}[n]$.  The construction of
  $G_L$ is illustrated by Figure \ref{fig:bitree.eps}.

  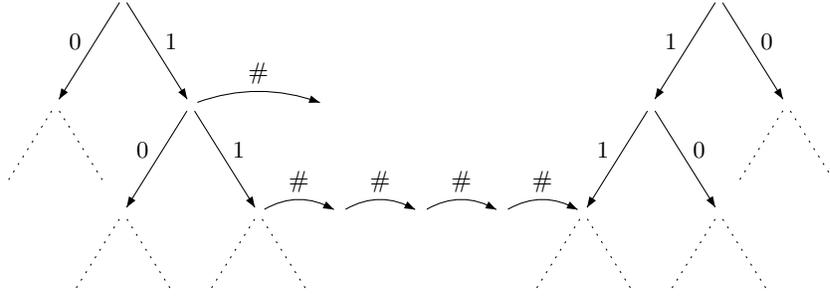
\begin{figure}
    \begin{center}
      \unitlength=0.9mm
      \begin{picture}(120,44)
        \gasset{Nframe=n,Nadjustdist=1,Nadjust=wh}
        
        \put(16,44){
          \node(ge)(0,0){}
          \node(g0)(-10,-16){}
          \node(g1)(10,-16){}
          \node(g1d)(30,-16){}
          \node(g00)(-17.5,-28){}
          \node(g01)(-2.5,-28){}
          \node(g10)(0,-32){}
          \node(g11)(20,-32){}
          \node(g11d)(32,-32){}
          \node(g11dd)(44,-32){}
          \node(g11ddd)(56,-32){}
          \node(g11dddd)(68,-32){}
          \node(g100)(-7.5,-44){}
          \node(g101)(7.5,-44){}
          \node(g110)(12.5,-44){}
          \node(g111)(27.5,-44){}

          \drawedge[ELside=r](ge,g0){0}
          \drawedge(ge,g1){1}
          \drawedge[ELside=r](g1,g10){0}
          \drawedge(g1,g11){1}

          \drawedge[curvedepth=2](g1,g1d){$\diese$}
          \drawedge[curvedepth=2](g11,g11d){$\diese$}
          \drawedge[curvedepth=2](g11d,g11dd){$\diese$}
          \drawedge[curvedepth=2](g11dd,g11ddd){$\diese$}
          \drawedge[curvedepth=2](g11ddd,g11dddd){$\diese$}

          \gasset{AHnb=0,dash={0.3 1}0}
          \drawedge(g0,g00){}
          \drawedge(g0,g01){}
          \drawedge(g10,g100){}
          \drawedge(g10,g101){}
          \drawedge(g11,g110){}
          \drawedge(g11,g111){}
        }

        \put(104,44){
          \node(de)(0,0){}
          \node(d0)(10,-16){}
          \node(d1)(-10,-16){}
          \node(d00)(17.5,-28){}
          \node(d01)(2.5,-28){}
          \node(d10)(0,-32){}
          \node(d11)(-20,-32){}
          \node(d100)(7.5,-44){}
          \node(d101)(-7.5,-44){}
          \node(d110)(-12.5,-44){}
          \node(d111)(-27.5,-44){}

          \drawedge(de,d0){0}
          \drawedge[ELside=r](de,d1){1}
          \drawedge(d1,d10){0}
          \drawedge[ELside=r](d1,d11){1}

          \gasset{AHnb=0,dash={0.3 1}0}
          \drawedge(d0,d00){}
          \drawedge(d0,d01){}
          \drawedge(d10,d100){}
          \drawedge(d10,d101){}
          \drawedge(d11,d110){}
          \drawedge(d11,d111){}
        }
      \end{picture}
    \end{center}
    \caption{The graph $G$ for some $L$ containing $11$.}
    \label{fig:bitree.eps}
  \end{figure}
  
  Suppose that linearly bounded graphs of bounded degree were closed
  under edge reversal. It would follow that $H$ obtained from $G$ by
  reversing the edges labeled by $\diese$ is also a linearly bounded
  graph.  As $H$ is linearly bounded, we can assume that $\era{0}{H}$,
  $\era{1}{H}$ and $\era{\diese}{H}$ are incremental context-sensitive
  transductions. Let $x$ be the vertex of $H$ corresponding to
  $\bar{0}$. The set of vertices $F=\textrm{Dom}(\era{\diese}{H})$ is
  a context-sensitive language. It follows from Proposition
  \ref{prop:trace} that $L(H,\{x\},F)$ is context-sensitive. As $L =
  L(H,\{x\},F) \cap \{0,1\}^*$, $L$ would also be context-sensitive
  which contradicts its definition.

  As rational graphs are closed under edge reversal, it follows from
  Thm. \ref{thm:bd-rat-lbg} that rational graphs of bounded degree are
  strictly contained in linearly bounded graphs of bounded degree. \qed
\end{proof}

It may be interesting at this point to recall that there are strong
reasons to believe that the languages accepted by finite degree
synchronized graphs are strictly included in context-sensitive
languages (see \cite{Carayol05}). Furthermore, all existing proofs
that the rational graphs accept the context-sensitive languages break
down when the out-degree is bounded. It is not clear whether rational
graphs of bounded degree accept all context-sensitive languages.
However, as noted in Prop. \ref{prop:trace-det}, it is still the case
for bounded degree linearly bounded graphs, and in particular for
deterministic linearly bounded graphs.

\section{Conclusion}
\label{sect:concl}

This paper gives a natural definition of a class of canonical graphs
associated to the observable computations of labeled linearly bounded
machines. It provides equivalent characterizations of this class as
the Cayley-type graphs of length-decreasing term-rewriting systems,
and as the graphs defined by a sub-class of context-sensitive
transductions which can increase the length of their input by at most
a constant number of letters. Although of a sensibly different nature
from rational graphs, we showed that all rational graphs of bounded
degree are linearly bounded graphs of bounded degree, and that this
inclusion is strict. This leads us to consider a more restricted
notion of infinite automata, closer to classical finite automata (as
was already observed in \cite{Carayol05}), and to propose a hierarchy
of classes of infinite graphs of bounded degree accepting the classes
of languages of the Chomsky hierarchy, in the spirit of
\cite{Caucal02} (see Fig.  \ref{fig:bdh}).

Finite graphs obviously have a bounded degree, they accept rational
languages. The transition graphs of real-time pushdown automata, which
accept all context-free languages, are the regular graphs
\cite{Muller85}, or equivalently the bounded degree HR graphs
\cite{Courcelle89} and bounded degree prefix-recognizable graphs
\cite{Caucal01}. By Prop. \ref{prop:trace-det}, the languages of
deterministic linearly bounded graphs are the context-sensitive
languages.  Deterministic graphs have by definition a bounded degree,
so bounded degree linearly bounded graphs also accept the same class
of languages. Finally, since deterministic Turing machines have
bounded-degree transition graphs and accept all recursively enumerable
languages, we can also restrict the class of Turing graphs to bounded
degree.

\begin{figure}
  \begin{center}
    \begin{picture}(0,61) 

      \drawoval(0,6,64,12,6)
      \drawoval(0,12,67,24,6)
      \drawoval(0,18,70,36,6)
      \drawoval(0,24,73,48,6)
      \drawoval(0,30,76,60,6)

      \gasset{Nframe=n}
      \node(fin)(0,6){
        \begin{tabular}{c}
          Finite graphs
          \\
          \emph{Rational languages}
        \end{tabular}}
      \node(reg)(0,18){
        \begin{tabular}{c}
          Bounded-degree regular graphs
          \\
          \emph{Context-free languages}
        \end{tabular}}
      \node(rat)(0,30){
        \begin{tabular}{c}
          Bounded-degree rational graphs
          \\
          \emph{?}
        \end{tabular}}
      \node(lin)(0,42){
        \begin{tabular}{c}
          Bounded-degree linearly bounded graphs
          \\
          \emph{Context-sensitive languages}
        \end{tabular}}
      \node(tur)(0,54){
        \begin{tabular}{c}
          Bounded-degree Turing graphs
          \\
          \emph{Recursively enumerable languages}
        \end{tabular}}
    \end{picture}
  \end{center}
  \caption{A Chomsky-like hierarchy of bounded-degree infinite
    graphs.}
  \label{fig:bdh}
\end{figure}
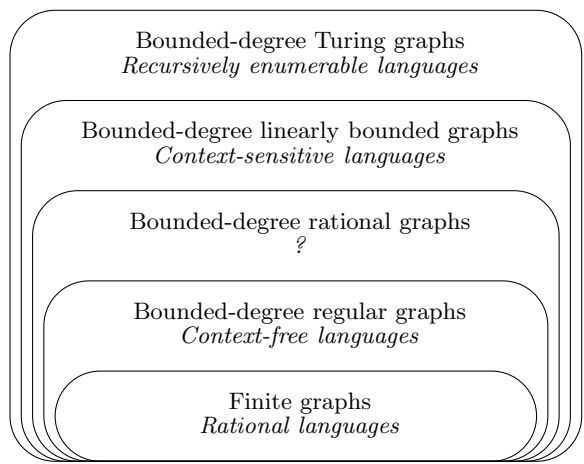  

\bibliographystyle{spmpsci} 
\bibliography{acta-29-05}

\end{document}